\theoremstyle{plain}
\newtheorem{corollary}{Corollary}
\theoremstyle{definition}
\newtheorem*{definition}{Definition}
\newtheorem*{claim}{Claim}
\newcommand\floor[1]{\lfloor#1\rfloor}
\newcommand\ceil[1]{\lceil#1\rceil}
\tikzstyle{none}=[inner sep=0pt]
\definecolor{hexcolor0x0433ff}{rgb}{0.016,0.200,1.000}
\tikzstyle{physical}=[circle,fill=Black,draw=Black]
\tikzstyle{physical}=[circle,fill=hexcolor0x0433ff,draw=hexcolor0x0433ff]
\tikzstyle{(null)}=[circle,fill=Black,draw=hexcolor0xc4dbff]
\tikzstyle{new}=[circle,fill=White,draw=White]
\tikzstyle{bond}=[thick]
\tikzstyle{bond2}=[thick, opacity=1]
\tikzstyle{smallTensor}=[rectangle,fill=OrangeRed,draw=OrangeRed]
\tikzstyle{redBond}=[draw=OrangeRed]
\tikzstyle{tileBond}=[draw opacity=0.5]
\tikzstyle{brownOpaque}=[color=brown,opacity=0.4, line width=1pt]
\tikzstyle{brown}=[color=brown, line width=1pt]
\definecolor{hexcolor0x0433ff}{rgb}{0.016,0.200,1.000}
\tikzstyle{physical}=[circle,fill=Black,draw=Black]
\tikzstyle{physical}=[circle,fill=hexcolor0x0433ff,draw=hexcolor0x0433ff]
\tikzstyle{tensor}=[rectangle,fill=White,draw=Black, scale=1.5]
\tikzstyle{new}=[circle,fill=White,draw=White]
\tikzstyle{circle2}=[circle,fill=White,draw=Black, scale=0.85]
\tikzstyle{point}=[circle,fill=White,draw=Black, opacity=0, scale=1]
\tikzstyle{smallTensor}=[rectangle,fill=OrangeRed,draw=OrangeRed]
\tikzstyle{symmetry}=[circle,fill=Blue,draw=Blue]
\tikzstyle{redPoint}=[circle,fill=OrangeRed,draw=OrangeRed, scale=0.3]
\tikzstyle{bluePoint}=[circle,fill=hexcolor0x0433ff,draw=hexcolor0x0433ff, scale=0.3]
\tikzstyle{pottsConnect}=[draw=Green, dashed]
\tikzstyle{pottsConnect2}=[thick, draw=green]
\tikzstyle{blackCircle}=[circle,fill=Black,draw=Black, scale=0.1]
\tikzstyle{transparentTile}=[rectangle, draw=Black, fill=white, fill opacity=0, draw opacity=0.5, scale=2.5]
\tikzstyle{gn}=[circle, fill=green, scale=0.3]
\tikzstyle{transparentTile2}=[rectangle, draw=Black, fill=white, fill opacity=0, scale=1.5, line width=0.01mm, draw opacity=0.1]
\tikzstyle{transparentTileBlue}=[rectangle, draw=Black, fill=blue, fill opacity=0.2, scale=2.5, line width=0.01mm]
\tikzstyle{bondBendRight} = [thick, bend right=45]
\tikzstyle{bondBendLeft} = [thick, bend left=45]
\tikzstyle{smallCircle} = [circle, fill=black, scale=0.1]
\tikzstyle{smallTensor2} = [fill=black, scale=0.4]
\tikzstyle{blueBond} = [very thick, draw=blue]
\tikzstyle{arrowUp} = [fill=black, regular polygon, regular polygon sides=3, scale=0.5]
\tikzstyle{arrowDown} = [fill=black, regular polygon, regular polygon sides=3, rotate=180, scale=0.5]
\tikzstyle{arrowLeft} = [fill=black, regular polygon, regular polygon sides=3, rotate=90, scale=0.5]
\tikzstyle{arrowRight} = [fill=black, regular polygon, regular polygon sides=3, rotate=270, scale=0.5]
\tikzset{
  on each segment/.style={
    decorate,
    decoration={
      show path construction,
      moveto code={},
      lineto code={
        \path [#1]
        (\tikzinputsegmentfirst) -- (\tikzinputsegmentlast);
      },
      curveto code={
        \path [#1] (\tikzinputsegmentfirst)
        .. controls
        (\tikzinputsegmentsupporta) and (\tikzinputsegmentsupportb)
        ..
        (\tikzinputsegmentlast);
      },
      closepath code={
        \path [#1]
        (\tikzinputsegmentfirst) -- (\tikzinputsegmentlast);
      },
    },
  },
  mid arrow/.style={postaction={decorate,decoration={
        markings,
        mark=at position .75 with {\arrow[#1]{stealth}}
      }}},
}
\def\squarecorner#1{
	%
	\pgf@x=\the\wd\pgfnodeparttextbox%
	\pgfmathsetlength\pgf@xc{\pgfkeysvalueof{/pgf/inner xsep}}%
	\advance\pgf@x by 2\pgf@xc%
	\pgfmathsetlength\pgf@xb{\pgfkeysvalueof{/pgf/minimum width}}%
	\ifdim\pgf@x<\pgf@xb%
	\pgf@x=\pgf@xb%
	\fi%
	%
	\pgf@y=\ht\pgfnodeparttextbox%
	\advance\pgf@y by\dp\pgfnodeparttextbox%
	\pgfmathsetlength\pgf@yc{\pgfkeysvalueof{/pgf/inner ysep}}%
	\advance\pgf@y by 2\pgf@yc%
	\pgfmathsetlength\pgf@yb{\pgfkeysvalueof{/pgf/minimum height}}%
	\ifdim\pgf@y<\pgf@yb%
	\pgf@y=\pgf@yb%
	\fi%
	%
	\ifdim\pgf@x<\pgf@y%
	\pgf@x=\pgf@y%
	\else
	\pgf@y=\pgf@x%
	\fi
	%
	\pgf@x=#1.5\pgf@x%
	\advance\pgf@x by.5\wd\pgfnodeparttextbox%
	\pgfmathsetlength\pgf@xa{\pgfkeysvalueof{/pgf/outer xsep}}%
	\advance\pgf@x by#1\pgf@xa%
	\pgf@y=#1.5\pgf@y%
	\advance\pgf@y by-.5\dp\pgfnodeparttextbox%
	\advance\pgf@y by.5\ht\pgfnodeparttextbox%
	\pgfmathsetlength\pgf@ya{\pgfkeysvalueof{/pgf/outer ysep}}%
	\advance\pgf@y by#1\pgf@ya%
}
	\savedanchor\northeast{\squarecorner{}}
	\savedanchor\southwest{\squarecorner{-}}
\begin{document}

\title{Projected Entangled Pair States with continuous virtual symmetries}

\author{Henrik~Dreyer} 
\affiliation{Max-Planck Institute of Quantum Optics, Hans-Kopfermann-Str.~1, 85748 Garching, Germany}
\author{J.~Ignacio~Cirac} 
\affiliation{Max-Planck Institute of Quantum Optics, Hans-Kopfermann-Str.~1, 85748 Garching, Germany}
\author{Norbert~Schuch} 
\affiliation{Max-Planck Institute of Quantum Optics, Hans-Kopfermann-Str.~1, 85748 Garching, Germany}

\begin{abstract}
We study Projected Entangled Pair States (PEPS) with continuous virtual
symmetries, i.e., symmetries in the virtual degrees of freedom,
through an elementary class of models with $\mathrm{SU}(2)$ symmetry.  Discrete
symmetries of that kind have previously allowed for a comprehensive
explanation of topological order in the PEPS formalism. We construct local
parent Hamiltonians whose ground space with open boundaries is exactly
parametrized by the PEPS wavefunction, and show how the ground state can
be made unique by a suitable choice of boundary conditions. 
We also find that these models exhibit a logarithmic correction to the entanglement
entropy and an extensive ground space degeneracy on systems with
periodic boundaries, which suggests that they do not describe
conventional gapped topological phases, but either critical models or some
other exotic phase.
\end{abstract}

\maketitle
\section{Introduction}
Tensor Network States provide an entanglement-based description of
wavefunctions of strongly correlated quantum systems.  Besides being a
powerful numerical tool, they also offer a systematic way of producing
exact representations of quantum wavefunctions with a rich variety of
behaviour \cite{exactRepresentation}. In one dimension they have been used
to completely classify phases resulting from both symmetry breaking and
symmetry-protected topological order \cite{SPT1, SPT2, nachtergaele}. The full
classification of phases in dimensions two and greater is currently still
open, due in some part to the existence of phases with intrinsic
topological order. Simple Projected Entangled Pair State (PEPS)
representations have been
found, for example, for the Toric Code (and all other quantum double
models) \cite{QDoubles}, String-Net models \cite{string-Nets1,
string-Nets2} and fermionic states with chiral topological order
\cite{chiral1, chiral2, chiral3}. The central role in all of these constructions is
played by a \textit{virtual symmetry} of the
tensor, this is, a symmetry in the entanglement degree of freedom.
These symmetries, as well as symmetry twists, are locally undetectable yet
show up in the global topological properties of the system: They allow to
parametrize the ground space manifold, to study anyonic excitations and
their statistics, and to determine the entanglement properties of the
system.

So far, studies of PEPS with virtual symmetries have been restricted
to models with discrete symmetry groups.
Conversely, continuous symmetries are expected to give rise to
qualitatively different behaviour, as is apparent from other areas of
many-body physics. Discrete symmetries can, for example, be spontaneously
broken in less than three spatial dimension at finite temperature while
continuous symmetries cannot \cite{merminWagner}. In the context of
virtual symmetries and topological order, this is particularly appealing:
While the discrete symmetries hitherto studied have given a new
perspective on known models in a tensor network framework, a new type of
symmetry may correspond to unconventional phases beyond current knowledge.

In this paper, we initiate the study of PEPS with a continuous virtual
symmetry.  We focus on the symmetry group $\mathrm{SU}(2)$ and its
fundamental representation and study the most general PEPS with the simplest non-trivial virtual degrees of freedom,
in particular its entanglement properties and the
way in which it appears as a ground state of a local Hamiltonian.
Specifically, we show how we can naturally define a parent Hamiltonian from
the PEPS tensor on a $2\times 2$ patch, and that the PEPS exactly
parametrizes its ground space manifold on any region with open boundary
conditions, an important property known as the \emph{intersection
property} in the PEPS literature \cite{FannesNachtergaeleWerner}.  Subsequently, we show that by a
suitable choice of boundary terms, the parent Hamiltonian can be modified
such as to exhibit a unique ground state in any finite volume.  While this
behavior is closely resemblant to that of PEPS with finite virtual
symmetry group, we find that with periodic boundaries, the ground states
cannot be parametrized purely in terms of symmetry twists, and the system
keeps a ground space degeneracy which is \emph{exponential} in the size of
the boundary. A closer analysis reveals that there are at least two types
of ground states: Those which can be parameterized through symmetry twists and
which span a space of linear dimension in the system size, and a distinct
class of ground states which correspond to extremal ``frozen''
spin configurations which are not coupled to other configurations by the
Hamiltonian, and which contribute an exponential number of states.
Finally, we also study the entanglement properties of PEPS wavefunctions
with virtual $\mathrm{SU}(2)$ symmetry and find that the system exhibits a
logarithmic correction to the area law, as opposed to the constant
correction for known gapped topological phases. Overall, this indicates a
behavior which is clearly distinct of those of gapped topological phases with a
finite number of anyons which exhibit a finite ground space degeneracy and
a constant correction to the entanglement entropy, and indicates either
critical behavior of the system or some unconventional kind of order.  We
discuss possible interpretations of our findings in the conclusions.

\section{The wavefunction}
\label{sec:constuctionOfState}

In this section, we define PEPS with $\mathrm{SU}(2)$ symmetries,
introduce the formalism used for their analysis, and analyse their
entanglement properties.

\subsection{Projected Entangled Pair States}

Projected Entangled Pair States (PEPS) are created from an elementary
5-index tensor $A^i_{uldr}$, with the \emph{physical index} $i = 1, \dots,
d$ (with $d$ the \textit{physical dimension}) and the \emph{virtual
indices} $u,l,d,r = 1, \dots, D$ (with $D$ the \textit{bond dimension}).
The tensor defines a \textit{fiducial state}
\begin{equation}
\label{eq:fiducialstate}
\ket{\psi_{1\times 1}(A)}= \input{tikzpictures/Atensor} = \sum_{iuldr}
A^i_{uldr} \ket{i} \ket{uldr}\ ,
\end{equation}
where the box denotes both the tensor (with the legs the virtual indices)
and the fiducial state.  From the fiducial state, a family of states on the square lattice is generated by means of contraction, 
\begin{equation}
\begin{aligned}
\label{eq:fiducialstate2by1}
\ket{\psi_{2\times 1}(A)} &= \input{tikzpictures/lineWithoutA} \\
&= \bra{\phi^+}_{r_1, l_2} \ket{\psi_{1 \times 1}(A)} \otimes \ket{\psi_{1 \times 1}(A)}
\end{aligned}
\end{equation}
where $\ket{\phi^+} = \sum_{i=1}^D \ket{ii}$, and so further for larger
blocks.

Physical states are obtained by imposing boundary conditions $\ket{X} \in
\left(\mathbb{C}^D \right)^{\otimes (2 N_h + 2 N_v)}$, 
\begin{equation}
\begin{aligned}
\label{eq:imposingBoundary}
\ket{\psi_{2\times 1}(A, X)} &= \input{tikzpictures/imposingBoundary} \\
&= \bra{\phi^+}_{\partial u_1, u_1} \bra{\phi^+}_{\partial u_2, u_2} \dots
\ket{\psi_{2 \times 1}(A)} \ket{X}\ ,
\end{aligned}
\end{equation}
where $\partial u_1, \partial u_2, \dots $ are the indices of $\ket{X}$,
$u_1, u_2, \dots $ are the indices of $\ket{\psi_{2 \times 1}(A)}$ and the
$\bra{\phi^+}$ contract them.  A particular choice are periodic boundary
conditions
\begin{equation}
\begin{aligned}
\label{eq:XperiodicBC}
\ket{X=PBC} &= \ket{\phi^+}_{\partial u_1, \partial d_1} \ket{\phi^+}_{\partial u_2, \partial d_2} \dots
\end{aligned}
\end{equation}

\subsection{A class of SU(2)-invariant PEPS}
\label{sec:constructionSU(2)}

Given a unitary representation $U_g$ of a group $G$, we say that a tensor
is \textit{$G$-invariant} if 
\begin{equation}
\begin{aligned}
\label{eq:gInvariance}
\input{tikzpictures/applyU} &= U_g \otimes U_g \otimes \overline{U_g} \otimes \overline{U_g} \ket{\psi_{1\times 1}(A)} \\
&= \ket{\psi_{1\times 1}(A)} \quad \forall g \in G
\end{aligned}
\end{equation}
Here, the arrows denote the direction in which the $U_g$ and $U_g^\dagger$
act on the tensor $A$.  Previous studies of $G$-invariant PEPS have
focused on discrete groups $G$~\cite{GInjectivity}.  In the following, we
generalize $G$-invariance to the symmetry group $G=\mathrm{SU}(2)$ and
introduce a class of $\mathrm{SU}(2)$-invariant PEPS which we subsequently
study in detail.

We will focus on the case where $D=2$, where $U_g\equiv g$ is the
fundamental representation of $g\in\mathrm{SU}(2)$.
A basis for the two-dimensional
subspace of $(\mathbb{C}^2)^{\otimes 4}$ that is invariant under $U_g
\otimes U_g \otimes \overline{U_g} \otimes \overline{U_g}$ is given by $\{
\ket{w}_{ul} \ket{w}_{dr}, \ket{\phi^+}_{ur} \ket{\phi^+}_{dl} \}$, where
$\ket{w}=\ket{01} - \ket{10}$. Therefore, up to a constant factor the most
general fiducial state $\tilde{A}$ is of the form
\begin{equation}
\label{eq:Atilde}
\begin{aligned}
\input{tikzpictures/AtensorTilde} &= \lambda \ket{0}_p \ket{w}_{ul} \ket{w}_{dr} +  \ket{1}_p \ket{\phi^+}_{ur} \ket{\phi^+}_{dl}
\end{aligned}
\end{equation}
where $\lambda \in \mathbb{C}$, and $\ket{0}$ and $\ket{1}$ are normalized
and linearly independent, but not necessarily orthogonal.

A complication of the tensor $\tilde A$ is that it involves different
entangled states and is not rotationally invariant.  We will now introduce
another tensor $A$ which only requires one kind of entangled state, is
rotationally invariant for $\lambda=1$, yet generates the same
family of states. Specifically, we will show that for any region, there
exists an invertible operator $B$ acting on the virtual indices at the
boundary such that
\begin{equation}
\label{eq:equivalentToInvariantTensor}
\begin{aligned}
\ket{\psi_{N_h \times N_v}(A, X)} = \ket{\psi_{N_h \times N_v}(\tilde{A},
BX)}\ .
\end{aligned}
\end{equation}
In the special case of even $N_h$ and $N_v$ and periodic boundary
conditions (PBC), $B\ket{X=\mathrm{PBC}}=\ket{X=\mathrm{PBC}}$. The price we pay
is that $A$ will no longer be explicitly $\mathrm{SU}(2)$-invariant.
Yet, we will see that using $A$ instead of $\tilde A$ simplifies the
majority of the derivations in this paper.

Specifically, define
\begin{equation}
\label{eq:definition}
\begin{aligned}
\input{tikzpictures/Atensor2} &= \lambda \ket{0}_p \ket{\phi^+}_{ul} \ket{\phi^+}_{dr} +  \ket{1}_p \ket{\phi^+}_{ur} \ket{\phi^+}_{dl} \\
&= \lambda \left | \input{tikzpictures/tile0v8} \right>_p \left | \input{tikzpictures/virtual0v8} \right>_v + \left | \input{tikzpictures/tile1v8} \right>_p \left | \input{tikzpictures/virtual1v8} \right>_v
\end{aligned}
\end{equation}
In the second line we have introduced a notation that we will use
throughout the paper.  The physical states $\ket{0}$ and $\ket{1}$
are depicted by curved lines, whereas straight lines are used to indicate
which qubits are maximally entangled in the virtual space.  $A$ is clearly
rotationally invariant for $\lambda = 1$.

To show Eq.~(\ref{eq:equivalentToInvariantTensor}), note that
\begin{equation}
\label{eq:Atilde2}
\begin{aligned}
\input{tikzpictures/Atensor} &= \input{tikzpictures/AtildeTopRight} = \input{tikzpictures/AtildeDownLeft}
\end{aligned}
\end{equation}
where $Y=\left(\begin{smallmatrix} 0 & 1 \\ -1 & 0 \\
\end{smallmatrix}\right)$ and all
matrices act from left to right and from top to bottom.  Inserting now the
middle form of (\ref{eq:Atilde2}) into the even and the right-hand side
into the odd sublattice of the square lattice, we obtain
\begin{equation}
\label{eq:2by2Equivalence}
\begin{aligned}
\input{tikzpictures/2by2As} &= \input{tikzpictures/2by2AsWithYsWithin} \\
&= \input{tikzpictures/2by2AsNoYsWithin}
\end{aligned}
\end{equation}
which proves (\ref{eq:equivalentToInvariantTensor}) with 
\begin{equation}
\label{eq:BisY}
\begin{aligned}
B=Y \otimes \mathds{1} \otimes Y \otimes \mathds{1} \cdots \otimes
Y^T\otimes\mathds{1}\otimes\cdots
\end{aligned}
\end{equation}
The states generated by $A$ are therefore equivalent to those generated by
the $\mathrm{SU}(2)$-invariant tensor $\tilde A$ up
to invertible boundary terms which can be absorbed into the boundary
conditions $X$. In the following, we will therefore work with the tensor
$A$, and thus restrict to $N_h$, $N_v$ even on PBC.

\subsection{The Loop Picture}
\label{sec:loopPicture}

\begin{figure}[t]
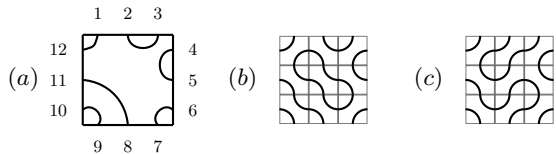

\begin{align*}
(a) \, \input{tikzpictures/3x3ConnectivityPattern} & \quad (b) \input{tikzpictures/loopPattern01} \quad (c) \input{tikzpictures/loopPattern02}
\end{align*}
\caption{
Connectivity patterns and classes.  (a) shows a connectivity
pattern, here 
$\{(1,12),\allowbreak (2,3),\allowbreak (4,5),\allowbreak
(6,7),\allowbreak  (8,11),\allowbreak  (9,10) \}$. (b)~and (c) show two
loop patterns which are compatible with the connectivity
pattern in (a).  All compatible patters form the \textit{connectivity
class} corresponding to (a). For the loop pattern shown in (b), $n_L = 1$
and $b_L = 3$.}
\label{fig:terminology}
\end{figure}

The notation introduced in Eq.~(\ref{eq:definition}) provides a convenient
graphical way of expressing configurations of the PEPS through loop
patterns, by replacing $\ket{0}$ with $\input{tikzpictures/tile0v8Small}$
and $\ket{1}$ with $\input{tikzpictures/tile1v8Small}$, e.g.
\begin{equation}
\label{eq:exampleLoop1}
\begin{aligned}
\begin{small} \left| \begin{matrix} 1 & 0 & 1 & 0 \\ 0 & 1 & 1 & 1 \\ 1 &
0 & 0 & 0 \\ 1 & 1 & 0 & 1 \end{matrix} \right. \Bigg>\end{small} &\rightarrow
\input{tikzpictures/exampleLoopPicture} \end{aligned}
\end{equation}
To each such physical configuration corresponds a configuration of virtual
states, obtained by contracting the tensors with the corresponding
physical states on that patch with open boundaries (note that for
$\langle 0\ket{1}\ne0$, this requires projecting the physical state onto the
\emph{dual} basis vector). Since the virtual $\ket{\phi^+}$ form the same
pattern as the $\input{tikzpictures/tile0v8Small}$ and
$\input{tikzpictures/tile1v8Small}$, and are connected by projecting onto
$\bra{\phi^+}$, which yet again yields $\ket{\phi^+}$, each open loop
corresponds to a virtual state $\ket{\phi^+}$ at the corresponding virtual
indices at the boundary, while each closed loop contributes a factor of
$2$ (due to our choice of normalization).

Let us now rigorously establish such a framework. In the following, we
always consider an $N_h \times N_v$ patch. The degrees of freedom at the
boundary are numbered from $1,\dots,2N$, $N=N_h+N_v$, as shown in
Fig.~\ref{fig:terminology}a.

A \textit{connectivity pattern} $p$ on the boundary of the patch is a
pairing of the numbers $1, \dots, 2N$, $N=N_h+N_v$, into non-crossing
tuples $\{(a_1, b_1), \dots, (a_{N}, b_{N}) \}$, see
Fig.~\ref{fig:terminology}a. 

A \textit{loop pattern} $L$ is a tiling of the patch with tiles
$\input{tikzpictures/tile0v8Small}$ and
$\input{tikzpictures/tile1v8Small}$, such as in
Fig.~\ref{fig:terminology}b,c.  To each loop pattern $L$, there is a
corresponding \textit{loop state} $\ket{L}$ of the physical system, namely
the product state that is obtained by replacing
$\input{tikzpictures/tile0v8Small}$ with $\ket{0}$ and
$\input{tikzpictures/tile1v8Small}$ with $\ket{1}$.  For a loop pattern
$L$, we denote by $n_L$ the number of closed loops in $L$ and $b_L$ is the
number of $\input{tikzpictures/tile0v8Small}$-tiles in $L$.  Each loop
pattern $L$ is \textit{compatible} with a single connectivity pattern,
$p(L)$, namely the one which is obtained by reading off the boundary pairs
which are connected by $L$.  A \textit{connectivity class} $C_p$ for a
given connectivity pattern is the set of all loop patterns which are
compatible with $p$. We will denote the vector space spanned by all loop
states $\ket{L}$ in the connectivity class $C_p$ by $V(C_p)$.

A \textit{boundary matching} is a state on the virtual degrees of freedom
at the boundary corresponding to a connectivity pattern $p=\{(a_1, b_1), \dots, (a_{N},
b_{N}) \}$, this is,
\begin{equation}
\label{eq:boundaryMatching}
\begin{aligned}
\ket{m(p)} = \ket{\phi^+}_{a_1, b_1} \otimes \dots \otimes
\ket{\phi^+}_{a_{N}, b_{N}}\ .
\end{aligned}
\end{equation}

This terminology permits us to write down the wavefunction of our PEPS in a concise way:
\begin{equation}
\label{eq:quantumLoopState}
\begin{aligned}
\ket{\psi_{N_h \times N_v}(A)} = \sum_{\substack{\text{connectivity} \\ \text{patterns} \\ p}} \ket{m(p)} \otimes \sum_{L \in C_p} \ket{L} 2^{n_L} \lambda^{b_L}
\end{aligned}
\end{equation}
The proof is immediate from the definition of the tensors, and the fact
that each closed loop contributes a factor of $2$, as discussed above.

The set of all boundary matchings $\ket{m(p)}$ is linearly independent and
forms a basis of the space of all staggered spin $0$ states (i.e., spin
$0$ up to a action of $Y$ on every second site).  This follows from the
results of Refs.~\cite{Rumer1, Rumer2, Rumer3}, where it is shown that if
one arranges $2N$ spin $\tfrac12$-particles on a circle and considers
non-crossing singlet matching states, these form a minimal basis for the
spin $0$ subspace of the system, together with the fact that $Y$
transforms between singlet matchings and $\ket{\phi^+}$ matchings (the
endpoints are always on different sublattices). This implies two things:
First, we can restrict any boundary condition $X$ to the staggered spin
$0$ space. Second, there exists there exists a dual basis $\{ \bra{m^*(p)}
\}_p$ of that space such that $\braket{m^*(p) | m(q)} = \delta_{pq}$. 

Using the dual basis, we can construct states which are superpositions of
all loop patterns in the same connectivity class, 
\begin{equation}
\label{eq:quantumLoopStateWithDualVector}
\begin{aligned}
\ket{\psi_{N_h \times N_v}(A,X=\ket{m^*(p)})} = 
\sum_{L \in C_p} \ket{L} 2^{n_L} \lambda^{b_L}
\end{aligned}
\end{equation}
For instance, for 
$p = \input{tikzpictures/3x3ConnectivityPatternSmall}$ and $\lambda=1$,
\begin{equation}
\label{eq:exampleLoop2}
\begin{aligned}
\ket{\psi_{3 \times 3}(A, X = \ket{m^*(p)})} &= 2
\input{tikzpictures/loopPattern01Small} +
\input{tikzpictures/loopPattern02Small} + \dots\ .
\end{aligned}
\end{equation}
Moreover, since $\{|m^*(p)\rangle_p\}$ forms a basis of the space of
staggered singlets (and thus of all relevant boundary conditions), we can
express the PEPS obtained from any boundary condition $X$ as
\begin{equation}
\label{eq:quantumLoopStateWithBoundary}
\begin{aligned}
\ket{\psi_{N_h \times N_v}(A,X)} &= \sum_{p} \braket{X|m(p)} \sum_{L \in C_p} \ket{L} 2^{n_L} \lambda^{b_L}
\\
& = \sum_{p} \braket{X|m(p)} 
\ket{\psi_{N_h \times N_v}(A,|m^*(p)\rangle)}\ .
\end{aligned}
\end{equation}

\subsection{Configuration counting}

In the following, we will determine the dimension of the space
\begin{equation}
\label{eq:SNHNV}
\begin{aligned}
\mathcal{S}_{N_h \times N_v} := \text{span} \{ \ket{\psi_{N_h \times N_v}(A,X)} | X \in \mathbb{C}^{2N_h +2N_v} \}
\end{aligned}
\end{equation}
of all physical configurations accessible with our tensor network.  This
will on the one hand be relevant when computing the entanglement entropy
in Sec.~\ref{sec:entanglementEntropy}, and on the other hand when
determining the ground space degeneracy with open boundaries in
Sec.~\ref{sec:parentHamiltonian}. As we have just seen in
Eq.~(\ref{eq:quantumLoopStateWithBoundary}), $\mathcal S_{N_h\times N_v}$ 
is spanned exactly by the states given in (\ref{eq:quantumLoopStateWithDualVector}).

These states are linearly independent -- unless they are zero -- due to
the linear independence of different $\ket{L}$ which follows from the
linear independence of $\ket{0}$ and $\ket{1}$.  In order for 
Eq.~(\ref{eq:quantumLoopStateWithDualVector}) to be non-zero for a given
connectivity pattern $p$, it must hold that $C_p$ is non-empty. We will call
connectivity patterns $p$ for which $C_p$ is empty 
\textit{forbidden}, otherwise we call $p$ \textit{allowed}. 

An example of a forbidden connectivity pattern on a $2 \times 2$ patch is
\begin{equation}
\label{eq:exampleForbidden1}
\begin{aligned}
 \input{tikzpictures/2x2ConnectivityPatternForbidden} 
\end{aligned}
\end{equation}
The intuitive reason for this connectivity class to be empty is the fact
that it request too large amounts of entanglement between the upper and
lower boundary of the system, more than can be mediated by the bulk: The
connectivity pattern requires four maximally entangled states between top
and bottom half, while the PEPS only has two bonds along that cut. If we
tried to find a loop pattern that matches the connectivity pattern, we
would see that the first two north-south connections fill up all available
space:
\begin{equation}
\label{eq:exampleForbidden2}
\begin{aligned}
 \input{tikzpictures/forbiddenWithoutCut4} 
\end{aligned}
\end{equation}

In order to compute the dimension of $\mathcal{S}_{N_h \times N_v}$, we
therefore need to determine the number of allowed connectivity patterns
for that given system size, which we denote by $\mathcal N(N_h,N_v)$. The
explicit form of this number is constructed in Appendix~\ref{sec:numberN},
and given in Eqs.~(\ref{eq:modifiedCatalan}) and
(\ref{eq:exactForm1}). For the main statements of this paper, the
asymptotic behaviour will be sufficient. If we take $N_h$ and $N_v$ to the
thermodynamic limit in a fixed aspect ratio $N_v/N_h=:\alpha-1$, then (as
proven in Appendix~\ref{sec:numberN},
Eq.~(\ref{eq:asymptoticFinal})) the asymptotic behaviour is essentially
that of the Catalan numbers.  Specifically, denoting the size of the
boundary by $N = N_h  + N_v$, we find that $\mathcal N$ scales
asymptotically as
\begin{equation}
\label{eq:asymptoticsNhNv}
\begin{aligned}
\mathcal{N}(\alpha, N) = \frac{4^{N}}{N^{3/2}} \left[ k(\alpha) + \mathcal{O}\left( \frac{1}{N} \right) \right]
\end{aligned}
\end{equation}
with $k(\alpha)$ a function of the aspect ratio that is independent of $N$.

\subsection{Entanglement Entropy}
\label{sec:entanglementEntropy}

We are now ready to determine the scaling behavior of the entanglement in
our model. To this end, consider a partition of the $N_h \times N_v$-torus
into a (small) rectangle $Q$ of size $L_h \times L_v$, and the
(large) rest $R$.
Our goal is to determine the zero Renyi entropy $S_0(\rho_Q)$ of the
reduced state on $Q$, this is, the logarithm of the Schmidt rank of
$\ket{\psi_{N_h \times N_v}(A,\mathrm{PBC})}$ in said partition. To this
end, note that by construction
\begin{equation}
\label{eq_PsiQR}
\ket{\psi_{N_h \times N_v}(A,PBC)} = 
(\Psi_Q\otimes\Psi_R)\ket{\phi^+}^{|\partial Q|}\ ,
\end{equation}
where $\Psi_Q$ is the linear map $\ket{\psi_{L_h\times L_v}(A,\bullet)}$
from the boundary to the bulk in $Q$ and correspondingly for $\Psi_R$, and
the $|\phi^+\rangle^{|\partial Q|}$ are placed along the boundary
between $Q$ and $R$ which has length $|\partial Q|=2L_h+2L_v=:L$.

As we have seen in the preceding section, the map $\Psi_Q$ provides a
bijection between the space $\mathcal V_{\mathrm{allowed}}$ spanned by all
$\ket{m^*(p)}$ with $p$ an allowed matching, and its image $\mathcal S_{L_h\times L_v}$.
$\Psi_R$, on the other hand, provides a bijection between the full
staggered spin $0$ space $\mathcal V_0$ and its image in $R$, as long as
$Q$ is sufficiently small (specifically, if $\min\{N_h, N_v \} >
\frac{3}{2} (L_h + L_v)$), as in that case there are no forbidden
matchings. Intuitively, this follows from the fact that forbidden
matchings arise due to space constraints at the corners, and the region
$Q$ is concave; we provide a proof in
Appendix~\ref{sec:Binvertible}.

Thus, up to these bijections, $\ket{\psi_{N_h \times N_v}(A,\mathrm{PBC})}$ equals
$(\Pi_{\mathcal V_{\mathrm{allowed}}}\otimes \Pi_{\mathcal V_0})
|\phi^+\rangle^{|\partial Q|}$, which has Schmidt rank equal to 
$\mathrm{dim}\,\mathcal V_{\mathrm{allowed}}=\mathcal
N(L_h,L_v)$.  Using (\ref{eq:asymptoticsNhNv}), we obtain that for a fixed
aspect ratio of $Q$, $S_0(\rho_Q)$ scales as 
\begin{equation}
\label{eq:entropyResult}
S_0(\rho_Q) = L \log 2 - \frac{3}{2} \log(L/2) + \log k + \mathcal{O} \left(\frac{1}{L} \right)
\end{equation}
with a non-universal constant $\log k$ that depends on the aspect ratio 
of $Q$.

As expected, the first term corresponds to the area law $|\partial Q|\log
D$, $D$ being the bond dimension of the PEPS. Nevertheless, the
subleading term is logarithmic rather than a constant as in topologically
ordered models.  Such corrections have been investigated in \cite{Moore}.
There, entropies for conformal two-dimensional quantum critical points,
like the quantum dimer \cite{18,19} and the quantum eight-vertex model
\cite{17} have been studied. The authors find a universal logarithmic
correction to the area law which depends on the associated conformal field
theory and the geometrical details of the partition. In particular, the
same theory can have a pure area law for region $A$ being a disk, while
for $A$ rectangular, logarithmic corrections appear. Similarly, in our
case, the notion of allowed and forbidden connectivity patterns -- which is
fundamental to our calculation -- depends on the shape of the partition.
Curiously, the models studied in \cite{Moore} were found to lie at the
boundary of topologically ordered phases \cite{20}.

\section{Parent Hamiltonians}
\label{sec:parentHamiltonian}

In the following, we will study how our $SU(2)$-invariant wavefunction can
appear as a ground state.  To this end, we
will construct a local parent Hamiltonian and subsequently characterize
its ground space, both for open boundary conditions (OBC) and on the torus.  In
particular, we will show that the parent Hamiltonian possesses a property
known as \emph{intersection property} \cite{FannesNachtergaeleWerner, standardForm}, and that we can obtain a unique
ground state with OBC by gapping out the boundary. 

\subsection{Construction of the Hamiltonian and intersection property}
\label{sec:intersection-property}

Parent Hamiltonian for PEPS are constructed by taking a plaquette of spins
and finding a positive operator which annihilates any state that 
looks like the PEPS on that patch, while penalizing orthogonal states. 
In our case, the Hamiltonian acts on a $2\times 2$ plaquette. To this end,
define,
\begin{equation}
\mathcal{S}_{ 2 \times 2} := \text{span} \left. \left\{ \, \input{tikzpictures/parentHv8} \quad \right | X \in (\mathbb{C}^2)^{\otimes 8} \right \}
\end{equation}
and set
\begin{equation}
\label{eq:parentH}
h = \mathds{1} - \Pi_{\mathcal{S}_{2 \times 2}}\ ,
\end{equation}
the projector onto the orthogonal complement of $\mathcal S_{2\times 2}$.
For any patch, we can then define
\begin{equation}
\label{eq:parent-ham-def}
H=\sum_{x,y} h_{(x,y)}
\end{equation}
where $(x,y)$ is the top left spin of $h_{(x,y)}$, and the sum runs over
all $x$ and $y$ on the patch, according to the chosen boundary conditions.
By construction, $H\ge0$ and $h_{(x,y)}\ket{\psi_{N_x\times N_y}(A,X)}=0$
for all $X$, and thus, any $\ket{\psi_{N_x\times N_y}(A,X)}$ is a ground
state of $H$.  The remaining question is thus to understand whether these
states fully span the ground space of $H$. For OBC, this is known as the
\emph{intersection property} (this is, the intersection of the ground
spaces of the $h_{(x,y)}$ is given by the PEPS with arbitrary boundary on
the larger patch).

In order to understand the structure of an arbitrary ground state of $H$,
let us consider the action of $h$ in terms of the loop picture.  It is
convenient to introduce the following notation for loop states
on 2$\times$2 plaquettes:
\begin{equation}
\label{eq:shorthandNotation}
\begin{aligned}
\left| \input{tikzpictures/tile0110}  \right> &= \ket{B}  \\ \\
\left| \input{tikzpictures/tile0010}  \right>  &= \ket{E_1}  & \left| \input{tikzpictures/tile0111}  \right>  &= \ket{E_2} \\
\left| \input{tikzpictures/tile0100}  \right>  &= \ket{E_3}  & \left| \input{tikzpictures/tile1110}  \right> &= \ket{E_4}  \\ \\
\left| \input{tikzpictures/tile1001}  \right>  &= \ket{O_1} & \left| \input{tikzpictures/tile1000}  \right>  &= \ket{O_2}  & \left| \input{tikzpictures/tile1011}  \right> &= \ket{O_3}  \\
\left| \input{tikzpictures/tile1101}  \right>  &= \ket{O_4} & \left| \input{tikzpictures/tile0001}  \right>  &= \ket{O_5}  & \left| \input{tikzpictures/tile0101}  \right> &= \ket{O_6}  \\ 
\left| \input{tikzpictures/tile1010}  \right>  &= \ket{O_7} & \left| \input{tikzpictures/tile0011}  \right>  &= \ket{O_8}  & \left| \input{tikzpictures/tile1100}  \right> &= \ket{O_9} \\
\left| \input{tikzpictures/tile0000}  \right>  &= \ket{O_{10}} & \left| \input{tikzpictures/tile1111}  \right>  &= \ket{O_{11}}
\end{aligned}
\end{equation}
We will refer to $\ket{B}$ as \textit{bubbles}, $\ket{E_i}$ as
\textit{tadpoles}, and $\ket{O_i}$ as \textit{bubble-free states}.
Furthermore, define
\begin{equation}
\label{eq:superposition}
\begin{aligned}
\ket{\phi} = \frac{1}{2\sqrt{2}} \left[ 2\ket{B} + \sum_{i=1}^4 \ket{E_i}\right]
\end{aligned}
\end{equation}
Then, each local term has $12$ possible ground states: 
\begin{equation}
\ket{O_i}\,, \ i=1,\dots,11\ \quad \mbox{and}
\ \ket{\phi}.
\end{equation}
Taking a general state $\ket{g} = \sum_i o_i \ket{O_i} + \sum_i e_i \ket{E_i} + b \ket{B}$, a direct calculation
reveals that $h\ket{g} = 0$ if and only if $e_i = e_j \, \forall i,j$ and $e_i = b/2 \, \forall i$.
This is, in order to be a ground state of $h$, the states $\ket{B}$ and
$\ket{E_i}$ must appear with the relative amplitudes $2:1:1:1:1$,
as in $\ket{\phi}$ -- and this is the
\emph{only} condition in order to be a ground state.

We can thus interpret the Hamiltonian as defining a random walk on the
space of loop configurations, 
\begin{equation}
\label{eq:randomWalk}
\begin{aligned}
2 \, \,\input{tikzpictures/tile0110} \rightleftarrows \input{tikzpictures/tile0010} \rightleftarrows \input{tikzpictures/tile0111} \rightleftarrows \input{tikzpictures/tile0100} \rightleftarrows \input{tikzpictures/tile1110}
\end{aligned}
\end{equation}
i.e., any two states coupled by the transition (\ref{eq:randomWalk}) must
appear in any ground state in superposition with the given relative
amplitude.  Differently speaking, for any orbit of the random walk
(\ref{eq:randomWalk}) acting on all sites, there is at most one ground
state per orbit.  In the following, we will call such a move between loop
configurations a \textit{surgery move} and use the notation $L' =
\sigma(L)$ to describe the fact that loop patterns $L'$ and $L$ are
related by such a move. We will denote \textit{sequences} of surgery moves
by capital letters, e.g. $\Sigma = \sigma_1 \dots \sigma_M$.

We will now use this interpretation to prove that for $H$ on an OBC
rectangle, there is exactly one ground state per connectivity pattern, this
is, the ground space is given by
\begin{equation*}
\mathcal{S}_{N_h \times N_v} := \text{span} \{ \ket{\psi_{N_h \times N_v}(A,X)} | X \in \mathbb{C}^{(2N_h +2N_v)} \}
\end{equation*}
-- this is precisely the intersection property.  In particular, it entails that the degeneracy of
the parent Hamiltonian is given by $\mathcal{N}(N_h, N_v)$.

To start with, note that each surgery move leaves the
connectivity pattern invariant, i.e., $\braket{K|h|L} = 0$ if $K \in C_p
\neq C_q \ni L$. The Hamiltonian is therefore block diagonal in the loop
basis 
\begin{equation} \label{eq:blockDiagonal} 
H=\bigoplus_p H_p \ ,
\end{equation}
where the $H_p$ are supported on $V(C_p)$. Now pick the basis
$\ket{\psi_p}:=\{ \ket{ \psi_{N_h \times N_v} (A, m^*(p)) } \}_p$ 
of $\mathcal S_{N_h\times N_v}$, cf.~Eq.~(\ref{eq:quantumLoopStateWithDualVector}). 
Each of these states is by construction a ground state of $H$, and lives
in the corresponding block $V(C_p)$ of the Hamiltonian. It thus remains to
show that the random walk defined by $H$ couples any two configurations
$L,L'\in C_p$: As argued above, this uniquely fixes the ratios of the
coefficients $\sum_{L\in C_p} c_L \ket{L}$  for any given $p$, which thus
must be equal to those of $\ket{\psi_p}$. (Note that the fact that $\ket{\psi_p}$
is a ground state implies that the ratio must be independent of the chosen
path $\Sigma(L)=L'$ of surgery moves.) Indeed, as we show in
Appendices \ref{app:catalanNumber} and \ref{appendix:uniqueness},
for any given connectivity pattern $p$, we can define a canonical pattern
$L_0$ such that any $L\in C_p$ can be connected to $L_0$ through a
sequence $\Sigma_0$ of surgery moves,
$L_0=\Sigma_0(L)$, and thus, any two $L,L'\in C_p$ are connected through a
sequence $\Sigma$ which goes through $L_0$, 
\begin{equation}
\label{eq:sequenceOfSurgeryMoves}
L =  \Sigma_0^{-1}(\Sigma'_0(L'))\ ,
\end{equation}
where $L_0=\Sigma_0'(L')$.
Differently speaking, the random walk (\ref{eq:randomWalk}) is ergodic in
the space of loop states with a fixed connectivity pattern.  

This implies that (up to normalisation), on a OBC patch of size $N_h\times
N_v$,
\begin{equation}
\label{eq:finalFormOfPsip}
\ket{\psi_p} = \sum_{L \in C_p} 2^{n_L} \ket{L}
\end{equation}
is the unique ground state of $H$ in sector $p$, i.e., $H$ has one ground
state per connectivity pattern $C_p$, and the space of all ground states
is given by $\mathcal S_{N_h\times N_v}$.

\subsection{Open boundary conditions and unique ground state}
\label{sec_OBCMain}
We have just seen that the parent Hamiltonian possesses the intersection
property -- the ground space manifold on any rectangular patch is
precisely given by those configurations which can be obtained by choosing
arbitrary boundary conditions. In the following, we will show that, for $N_h, N_v$ even, it is
possible to gap out the boundary, this is, to add boundary terms to the
parent Hamiltonian which yield a unique ground state.

To this end, we target
\begin{equation}
\label{eq:PsiAsAPEPS}
\begin{aligned}
\ket{\psi} = \input{tikzpictures/uniqueness04}
\end{aligned}
\end{equation}
as the unique ground state, and proceed by constructing its parent
Hamiltonian. In the bulk, the parent Hamiltonian will be the same as
before. On the boundary, however, extra terms appear.  Specifically, we
consider a $2\times 1$ tile 
\begin{equation}
\label{eq_theArc}
\begin{aligned}
\input{tikzpictures/theArc}
\end{aligned}
\end{equation}
at either boundary, and define
\begin{equation}
\label{eq:groundSpaceDeltaH}
\begin{aligned}
\mathcal{R}_{(2n-1,1),(2n,1)} := \text{span} \left. \left\{ \, \input{tikzpictures/uniqueness05} \quad \right | X \in (\mathbb{C}^2)^{\otimes 4} \right \}
\end{aligned}
\end{equation}
(and rotated versions thereof) and the corresponding parent Hamiltonian
\begin{equation}
\label{eq:boundaryHamiltonianR}
\begin{aligned}
h'_{(x_1,y_1),(x_2,y_2)} = \mathds{1} - \Pi_{\mathcal{R}_{(x_1,y_1),(x_2,y_2)}}\ .
\end{aligned}
\end{equation}
It is easy to check that $h'$, together with the original parent Hamiltonian on the
corresponding $2\times 2$ patch, has exactly the same ground space
as the ``true'' parent Hamiltonian derived from that patch of 
$\ket\psi$ \emph{including} the boundary condition (and containment, which suffices
for $\ket\psi$ to be a ground state, holds trivially). On the other hand,
the parent Hamiltonians on the shifted patches remain unchanged. Thus,
\begin{align}
H' &:= H + \sum_{n=1}^{N_h/2} [h'_{(2n-1,1),(2n,1)} + h'_{(2n-1,N_v),(2n,N_v)}] 
\nonumber
\\
\label{eq:newHamiltonianR}
&\quad+  \sum_{n=1}^{N_v/2} [h'_{(1,2n-1),(1,2n)} + h'_{(N_h,2n-1),(N_h,2n)}]
\end{align}
is a parent Hamiltonian of $\ket\psi$, and has $\ket\psi$ as a ground
state.

Let us now show that this ground state is unique. To this end, note that
ground states of $h'$ on a $2\times 1$ patch are spanned by the states
\begin{equation}
\label{eq:expansionTopLeft}
\begin{aligned}
\ket{\theta_1} &=  \input{tikzpictures/top00} +
\input{tikzpictures/top11} + 2 \input{tikzpictures/top10} \ , \\
\ket{\theta_2} &= \input{tikzpictures/top01} \ .
\end{aligned}
\end{equation}
Thus, $h'$ imposes the additional constraint that in any ground state the
states in $\ket{\theta_1}$ must appear as superpositions with the
corresponding weights.  Arguing as before, this fixes the relative
amplitudes of any two loop patterns coupled by the additional surgery move
\begin{equation}
\label{eq:randomWalk2}
\begin{aligned}
\input{tikzpictures/top00} \rightleftarrows \input{tikzpictures/top11}
\rightleftarrows 2\input{tikzpictures/top10} 
\end{aligned}
\end{equation}
on the corresponding $2\times 1$ patches, and rotated versions thereof.

As we show in Appendix~\ref{app:obc-unique}, any loop pattern can be
transformed to a loop pattern in the ``minimal'' connectivity class
\begin{equation}
\label{eq_pMinimal}
\begin{aligned}
p_\text{min}=\input{tikzpictures/uniqueness01}
\end{aligned}
\end{equation}
by using the original (bulk) surgery moves, together with the additional
surgery move (\ref{eq:randomWalk2}) obtained from $h'$ (which allows to
change the connectivity class). On the other hand, we have seen in
Sec.~\ref{sec:intersection-property} that any two loop patterns in a given
connectivity class -- specifically, the minimal one above -- are connected
through bulk surgery moves.  Thus, it follows that any two loop patterns
can be connected by combining bulk and boundary surgery moves, and thus,
the relative amplitudes of all loop patterns are fixed and therefore equal
to those found in $\ket\psi$, Eq.~(\ref{eq:PsiAsAPEPS}). We thus infer that
$\ket\psi$ is the unique ground state of $H'$.

\subsection{Periodic Boundary Conditions}

Let us now study the ground space structure of the parent Hamiltonian
(\ref{eq:parent-ham-def}) on a system with periodic boundary conditions
(PBC);  recall from Sec.~\ref{sec:constuctionOfState} that this requires
$N_h$ and $N_v$ to be even.  To this end, we will resort to the description of the PEPS in
terms of the tensor $\tilde{A}$, Eq.~(\ref{eq:Atilde}), rather than $A$
(see Sec.~\ref{sec:constructionSU(2)}).  Note that due to the gauge
relation (\ref{eq:equivalentToInvariantTensor}) between them, both $A$ and
$\tilde A$ have the same parent Hamiltonian as defined
in~(\ref{eq:parentH}).

Let us first consider an approach which allowed to fully
characterize the ground space for $G$-injective PEPS with finite symmetry
group $G$~\cite{GInjectivity}.  (In the following, all arrows point from
left to right and top to bottom by convention.) First, note that the
fundamental symmetry (\ref{eq:gInvariance}) is stable under concatenation,
e.g.
\begin{equation}
\begin{aligned}
\label{eq:concatenation}
\input{tikzpictures/lineAround2}  &=  \input{tikzpictures/lineWithUs2}  \\ &= \input{tikzpictures/lineWithout}
\end{aligned}
\end{equation}
i.e., any closed loop of symmetry operators leaves a simply connected
patch invariant. This is particularly interesting when we consider closed boundary conditions:
\begin{equation}
\begin{aligned}
\label{eq:movingStrings}
\input{tikzpictures/cylinderLeft2} &= \input{tikzpictures/cylinderWithUs2} &= \input{tikzpictures/cylinderRight2}
\end{aligned}
\end{equation}
Virtual string operators of the form $U_g^{\otimes N_v}$ 
which wrap vertically around the torus can therefore be freely moved around the
torus, and correspondingly horizontal loops $V_h^{\otimes N_h}$, i.e., the state
\begin{equation}
\begin{aligned}
\label{eq:stringInsertedPsi}
\ket{\psi_{N_h \times N_v}\{U_g,V_h\}} = \, \input{tikzpictures/string_inserted}
\end{aligned}
\end{equation}
on the torus is independent of the position of the strings, as long as $[U_g,V_h] = 0$ 
(otherwise, the strings might not be movable where they intersect).

It is now clear that any such state $\ket{\psi_{N_h \times
N_v}\{U_g,V_h\}}$ is a ground state of the parent Hamiltonian $H=\sum
h_{(x,y)}$, since for any local term $h_{(x,y)}$, the strings can be moved
such that they are outside the region where $h_{(x,y)}$ acts. In the case
of $G$-injective PEPS with finite symmetry group, it could be shown that
these states precisely parameterize the full ground space of
$H$~\cite{QDoubles}. For
abelian groups, all $(g,h)$ yield linearly independent ground states 
$\ket{\psi_{N_h \times N_v}\{U_g,V_h\}}$, while for non-abelian groups,
linear dependencies arise as certain $(g,h)$ yield identical states.

Let us now consider the case of $G=\mathrm{SU}(2)$. Clearly, 
\begin{equation}
\begin{aligned}
\label{eq:stringInsertedState}
\mathcal S' = \mathrm{span} \left \{ \ket{\psi_{N_h \times N_v}\{U,V\}} \, \middle| \, U,V \in SU(2), [U,V]=0 \right \}
\end{aligned}
\end{equation}
is inside the ground space of $H$. What is the dimension of
$\mathcal S'$? Without loss of generality, we can restrict to $U
=\text{diag}(e^{i \phi}, e^{-i \phi})$ -- otherwise, we conjugate each
$\tilde A$ with the the unitary which diagonalises $U$, leaving the state
invariant. Then (up to basis permutations), 
\begin{equation}
\begin{aligned}
\label{eq:stringInsertedState}
U^{\otimes N_v} &= e^{iN_v \phi} \mathds{1}_{\binom{N_v}{0}} \oplus e^{i (N_v - 2) \phi} \mathds{1}_{\binom{N_v}{1}} \oplus \\ & \dots \oplus e^{-iN_v \phi} \mathds{1}_{\binom{N_v}{N_v}},
\end{aligned}
\end{equation}
for arbitrary values of $\phi$, and thus, the closure $U_g^{\otimes N_v}$
on its own parametrizes a $(N_v+1)$--dimensional subspace (e.g.\ by
choosing Fourier angles $\phi_k=2\pi k/(2N_v+1)$, $k=-N_v/2,\dots,N_v/2$).
In order to satisfy $[U,V]=0$, we must have $V=\mathrm{diag}(e^{i \theta},
e^{-i \theta})$, and thus, $\mathcal S'$ 
is at most $(N_h+1)(N_v+1)$-dimensional.  However, it is easy to see that
there is at least one more redundancy: By conjugating each $\tilde A$ with the Pauli
$X$-operator, we map $\phi\to-\phi$, $\theta\to-\theta$. This reduces the number
of possibilities by a factor of $2$, except at $\phi=\theta=0$. 
As we show in Appendix~\ref{app:strings}, the remaining states are
indeed linearly independent, and thus, 
\[
\mathrm{dim}\,\mathcal S' =  \frac{(N_h+1)(N_v+1)+1}{2}\ .
\]

One might think that this parameterizes the full ground space of $H$, just
as for $G$-injective PEPS with finite $G$.  However, this is not the case.
To see this, consider an arbitrary bit-string $b \in \{0,1\}^{N_h}$. Then,
we define the product state $\ket{v(b)}$ by stacking $N_v$ copies of $b$
on top of each other and then identifying $0 \rightarrow \ket{0}$ and $1
\rightarrow \ket{1}$, for example
\begin{equation}
\begin{aligned}
 	 \ket{v(0101)} = \raisebox{0.8cm} {\begin{turn}{-90}
	 \input{tikzpictures/indistinguishable1copy}
	 \end{turn}}
	\label{fig:vb}
\end{aligned}
\end{equation}
Horizontally stacked states $\ket{h(b)}$ are defined accordingly.
Clearly, there are $2^{N_h} + 2^{N_v}-2$ of these states 
(since only the all-0 and all-1 states are doubly counted).
Finally, all of them are ground states, since, by definition, no plaquette
locally looks like any of $\ket{B} = \tiny \ket{\begin{matrix} 0 \, 1 \\ 1
\, 0 \\ \end{matrix}}$, $\ket{E_1} = \tiny \ket{\begin{matrix} 0 \, 0 \\ 1
\, 0 \\ \end{matrix}}$, $\ket{E_2} = \tiny \ket{\begin{matrix} 0 \, 1 \\ 0
\, 0 \\ \end{matrix}}$, $\ket{E_3} = \tiny \ket{\begin{matrix} 1 \, 1 \\ 1
\, 0 \\ \end{matrix}}$ or $\ket{E_4} = \tiny \ket{\begin{matrix} 0 \, 1 \\
1 \, 1 \\ \end{matrix}}$, even across the boundary. Note that in all of
these configurations, the winding of the loops around the torus is maximal
in at least one direction (horizontally or vertically). We will call these
states \textit{isolated states}, as they are not coupled to any other loop
configuration by the Hamiltonian. 

We therefore find that the ground space degeneracy of $H$ is
at least \emph{exponential} in $N_v$ and $N_h$, and thus cannot be parametrized by
strings of symmetry operations alone.  In fact, e.g. the states
\begin{equation}
\begin{aligned}
\label{eq:indistinguishables}
\ket{v(0101)} = \raisebox{0.8cm} {\begin{turn}{-90} \input{tikzpictures/indistinguishable1} \end{turn}} \qquad \ket{v(1010)} = \raisebox{1.05cm} {\begin{turn}{-90} \reflectbox{\input{tikzpictures/indistinguishable1} }\end{turn}}
\end{aligned}
\end{equation}
are indistinguishable by any such string operation.  It is worth pointing
out, however, that all of these ground states are isolated and in the
sector with maximal winding number, so it might still be possible that in
the remaining sectors, the ground space can be parametrized succinctly in
terms of the symmetry.

\section{Conclusions and outlook}

In this paper, we have studied PEPS with continuous virtual symmetries.
Specifically, we have considered the class of $\mathrm{SU}(2)$--invariant
PEPS with the fundamental represention of $\mathrm{SU}(2)$, and studied
their entanglement properties and their relation to local Hamiltonians.
First, we have introduced the most general form of tensors invariant under the
fundamental representation of $\mathrm{SU}(2)$. From the local tensor, we
have constructed local parent Hamiltonians acting on $2\times 2$ sites,
and characterized their ground space structure.  For open boundaries, we
have found that the ground space on rectangular patches on any size is
always exactly parameterized by the PEPS, a property known as the
intersection property.  We were further able to show that by choosing
appropriate Hamiltonian terms at the boundary, the system acquires a
unique ground state.  On a system with periodic boundary conditions, we
have found a ground space degeneracy which grows with the system size.  We
were able to attribute this to at least two distinct mechanisms: First,
closing the boundaries with symmetry twists of $\mathrm{SU}(2)$, in
analogy to finite symmetry groups, yields a linearly growing number of
ground states; and second, extremal isolated spin configurations yield an
exponentially growing number of states. Regarding the entanglement
properties of the state, we found that the zero Renyi entropy has a
logarithmic correction to the area law scaling.

The observed results are clearly distinct from those found for known
topologically ordered phases, and point towards a critical nature of the
wavefunction (due to the logarithmic correction to the entanglement
entropy and the algebraic ground space degeneracy, if the isolated states
are ignored), or possible some other exotic phase. Interestingly, for
$\lambda=1$ and an orthogonal physical site basis, $\langle 0\ket1_p=0$, the
normalization of the loop model corresponding to the
$\mathrm{SU}(2)$--invariant PEPS can be mapped to the partition function
of the $Q=16$ state Potts model at the phase transition between ordered
and disordered phase (see Appendix~\ref{sec:PottsModel}), which is known
to be a first-order transition with a finite correlation length; however,
the mapping implies exponential decay of diagonal observables only.
It may still be the case that other observables exhibit critical correlations
(as is the case for the plaquette-flip for a related quantum loop model \cite{Qgroup5}).
Thus, further studies might be required to determine the precise nature
of the wavefunctions considered, and the way in which it is affected by the
choice of $\lambda$ and the local basis.

\acknowledgements
We are grateful for illuminating discussions with P.~Fendley,
F.~Pollmann and R.~Verresen. We acknowledge the
hospitality of the Centro de Ciencias de Benasque Pedro
Pascual where part of the work was done. This work was supported by the
European Union through the ERC grants WASCOSYS (No.~636201) and QENOCOBA (No.~742102).

\onecolumngrid

\vspace{.8cm}
\hspace{\fill}\rule{12cm}{.3mm}\hspace{\fill}
\vspace{.8cm}

\twocolumngrid

\appendix
\section{Allowed Connectivity Patterns and the Canonical Loop Pattern}
\label{app:catalanNumber}
Both the degeneracy of the the parent Hamiltonian and the entanglement entropy depends on the number $\mathcal{N}(N_h, N_v)$, i.e. the number of connectivity patterns on an $N_h \times N_v$-patch which have at least one compatible loop pattern. In order to show that this number is equal to $\mathcal{N}(N_h, N_v) \sim \frac{4^N}{N^{3/2}}$ in appendix \ref{sec:numberN}, we need to introduce the following terminology:

\begin{definition}\textbf{(Lattice \& Boundary)} \\
Define
\begin{equation}
\begin{aligned}
\mathcal{X} &:=  \{ \frac{1}{2}, \frac{3}{2}, \dots, \frac{2N_h -1}{2} \} \times \{ 0,1, \dots, N_v \} \\
\mathcal{Y} &:=   \{ 0,1, \dots, N_h \} \times \{ \frac{1}{2}, \frac{3}{2}, \dots, \frac{2N_v -1}{2} \}
\end{aligned}
\end{equation}
The $(N_h, N_v)$\emph{-Lattice} is defined as
\begin{equation}
\begin{aligned}
\mathcal{L}_{N_h, N_v} &= \mathcal{X} \cup \mathcal{Y}
\end{aligned}
\end{equation}
The \emph{boundary} $\mathcal{B}_{N_h, N_v} \subset \mathcal{L}_{N_h, N_v}$ is
\begin{equation}
\begin{aligned}
\mathcal{B}_{N_h, N_v} = \bigg\{ (x,y) \in \mathcal{L}_{N_h, N_v} \, | \, x \in \{ 0, N_h \} \text{ or } y \in \{ 0,N_v \}  \bigg\}
\end{aligned}
\end{equation}
\end{definition}

\begin{definition}\textbf{(Tuple Distance)} \\
Let $a,b \in \mathcal{L}_{N_h, N_v}, a \neq b$ and writing $a=(a_x, a_y), b=(b_x, b_y)$, the \emph{x-distance} (\emph{y-distance}) of the tuple $(a,b)$ is
\begin{equation}
\begin{aligned}
\Delta x(a,b) &= b_x - a_x \\
\Delta y(a,b) &= b_y - a_y
\end{aligned}
\end{equation}
A tuple $(a,b)$ is
\begin{equation}
\begin{aligned}
\textit{horizontal}, \text{if } \,  |\Delta x(a,b)| & > |\Delta y(a,b)|,  \\
\textit{vertical}, \text{if } \,  |\Delta x(a,b)| &< |\Delta y(a,b)| \text{ and} \\
\textit{diagonal}, \text{if } \,  |\Delta x(a,b)| &= |\Delta y(a,b)|. \\
\end{aligned}
\end{equation}
A horizontal tuple $(a,b)$ is \emph{upper} if $a_y + b_y \leq N_v$, otherwise it is \emph{lower}. A vertical tuple $(a,b)$ is \emph{left} if $a_x + b_x \leq N_h$, otherwise it is \emph{right}.
\end{definition}

In the main text, we define allowed and forbidden matchings by the existence of at least one compatible loop pattern. We will now give a more useful definition in terms of \textit{Flow} and then show that the definitions are equivalent, i.e. show that for each allowed connectivity pattern as defined here there exists at least one loop pattern: the canonical loop pattern. The fact that there cannot exist a loop pattern for forbidden matchings is easy to see.

\begin{definition}\textbf{(Flow)} \\
For $i \in \{1, N_h-1\}$ ($i \in \{1, N_v-1\}$) and $a,b \in \mathcal{L}_{N_h, N_v}$, the tuple $(a,b)$ \emph{goes through vertical (horizontal) cut i} if $(a,b)$ is horizontal (vertical) and 
\begin{equation}
\begin{aligned}
a_x < i \quad & \text{and} \quad b_x > i \\
(a_y < i \quad & \text{and} \quad b_y > i)
\end{aligned}
\end{equation}
For $p$ a connectivity pattern, the \emph{flow through vertical (horizontal) cut i}, denoted by $\textit{Flow}(p,i,\textit{vert})$ ($\textit{Flow}(p,i,\textit{hor}$)) is the number of bonds $t \in p$ that go through vertical (horizontal) cut $i$. 
\end{definition}

\begin{definition}\textbf{(Forbidden Matchings)} \\
We call a connectivity pattern $p$ \emph{vertically forbidden} if there exists $i \in \{1, 2, \dots, N_h-1\}$ such that
\begin{equation}
\begin{aligned}
\textit{Flow}(p,i,\textit{vert}) \geq N_v + 1
\end{aligned}
\end{equation}
or $\emph{horizontally forbidden}$ if there exists $i \in \{1, 2, \dots, N_v-1\}$ such that
\begin{equation}
\begin{aligned}
\textit{Flow}(p,i,\textit{hor}) \geq N_h + 1
\end{aligned}
\end{equation}
A connectivity pattern which is not forbidden, is \emph{allowed}.
\end{definition}

\begin{definition}\textbf{(The Canonical Loop Pattern)}\\
Given an allowed connectivity pattern $p=\{(a_1, b_1), \dots, (a_N, b_N)\}$, we construct the loop pattern explicitly:
\begin{enumerate}
\item We start with the empty loop pattern $L=\{ \}$.
\item \emph{(Initial and final pieces)} For each $t_i = (a_i, b_i)$, determine whether it is horizontal, diagonal or vertical. If $t_i$ is horizontal or diagonal and $a_i (b_i) \in \mathcal{X}$ then define $\bar{a_i} = a_i + (\nicefrac{1}{2}, \pm \nicefrac{1}{2})$ and $\bar{b_i}  = b_i + (- \nicefrac{1}{2}, \pm \nicefrac{1}{2})$, depending on whether $a_i (b_i)$ are located on the top or bottom boundary. Similarly, if $t_i$ is vertical and $a_i (b_i) \in \mathcal{Y}$, define $\bar{a_i} = a_i + (\pm \nicefrac{1}{2}, \nicefrac{1}{2})$ and $\bar{b_i} = b_i + (\pm \nicefrac{1}{2}, - \nicefrac{1}{2})$. Else, just set $\bar{a_i} = a_i$ and $\bar{b_i} = b_i$. This causes all horizontal and diagonal bonds effectively go from $\mathcal{Y}$ to $\mathcal{Y}$ and all vertical bonds to go from $\mathcal{X}$ to $\mathcal{X}$.
\item \emph{(Choosing a bond)} Pick a bond $t=(a,b) \in p$, such that all bonds inside $t$ have been picked already. Since two path cannot be mutually inside each other, there always exists such a path, except if all bonds have been chosen. In that case, continue with step 8. Define a new partial path $m = \{(a), \bar{a}\}$ (the brackets indicate to only add $a$ if $a \neq \bar{a}$).
\item Set $j=1$ and $v_1 = \bar{a}$.
\item If $v_j = \bar{b}$, add the completed path $m = \{(a), \bar{a}, v_2, \dots, \bar{b}, (b) \}$ to $L$ and go back to step 3. Otherwise continue with the step 6.
\item \emph{(Diagonal partial paths)} If the pair $(v_j,\bar{b})$ is diagonal, consider without loss of generality the case where $x(\bar{b}) > x(v_j)$ and $y(\bar{b}) > y(v_j)$. Then, set $v_{j+1} = v_j + (\nicefrac{1}{2}, \nicefrac{1}{2})$. In the other cases, extend the path towards $\bar{b}$ analogously.
In principle, $v_{j+1}$ could already be occupied by a path $q$. However, as will become clear in the next step, all paths are constructed \emph{monotonously}, i.e. horizontal paths advance towards the right in each step and vertical paths advance towards the bottom. Therefore one can draw a horizontal (vertical) cone if $q$ is horizontal (vertical) and the endpoints, lets call them $a_q$ and $b_q$ must lie inside the cone as well, one to the right (top) of $v_{j+1}$ and one to the left (bottom). It is easy to see that the bonds $(a_q, b_q)$ and $(a,b)$ are crossing, violating the assumption that $m$ is a valid matching.\\
Add $v_{j+1}$ to $p$, set $j \leftarrow j+1$ and go back to step 5.
\item \emph{(All other types of partial paths)} If $(v_j,\bar{b})$ is not diagonal, consider without loss of generality $(a,b)$ to be a lower horizontal bond (all other cases follow analogously). By construction (see below), at any point $(v_j,\bar{b})$ remains horizontal. Define
\begin{equation}
\begin{aligned}
s^{in}_1 &= v_j + (\nicefrac{1}{2}, \nicefrac{1}{2}) \\ s^{out}_1 &= v_j + (\nicefrac{1}{2}, -\nicefrac{1}{2}) \\
s^{in}_2 &= s^{in}_1 + (\nicefrac{1}{2}, \nicefrac{1}{2}) \\ s^{out}_2 &= s^{out}_1 + (\nicefrac{1}{2}, -\nicefrac{1}{2})
\end{aligned}
\end{equation}
If neither of $s^{in}_1$ and $s^{in}_2$ is occupied or in the boundary, set $v_{j+1} = s^{in}_1$ and $v_{j+2} = s^{in}_2$. Otherwise, set $v_{j+1} = s^{out}_1$ and $v_{j+2} = s^{out}_2$. Add $v_{j+1}$ and $v_{j+2}$ to $p$, set $j \leftarrow j+1$ and go back to step 5. \\
Again, in principle one of $s^{in}_1$ and $s^{in}_2$ \emph{and} one of $s^{out}_1$ and $s^{out}_2$ could be occupied or in the boundary. We are now going to show that in this case $p$ is forbidden, i.e. there is too much flow going through a horizontal/vertical line.\\
First, let us assume that $s^{in}_2$ is occupied. Denote by $q$ the path that contains $s^{in}_2$ and call its endpoints $(a_q, b_q)$. Then $q$ must be horizontal, which can be verified using the fact that $(v_j, b)$ is horizontal. As such, $v_j + (\nicefrac{1}{2},\nicefrac{3}{2}) \in q$, since $v_j$ is still free. Now we have two horizontal paths, $p$ and $q$, both go through $x(v_j)$ and their vertical distance at that point is 2. By construction, the vertical distance must remain even all the way through to the initial and final points of paths $q$ and $m$, which implies that there is an odd number of boundary points between $a_m$ and $a_q$ and between $b_m$ and $b_q$. Hence, there is one horizontal bond $(a_r, b_r)$ that goes through $x(s^{in}_2)$ and $(a_q, b_q)$ lies inside it. \\
Consider now $s^{in '}_1 = s^{in}_1 + (0,1)$ and $s^{in'}_2 = s^{in}_2 + (0,1)$. If either of them are in the boundary, the situation is depicted in figure \ref{fig:touchingboundary}.
\begin{figure}[!ht]
  \caption{$s^{in^{'}}_{1}$ touching the boundary}
  \centering
  \label{fig:touchingboundary}
    \input{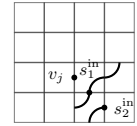}
\end{figure}
Otherwise, consider the progression from $v_j + (\nicefrac{1}{2},\nicefrac{3}{2})$ to $s^{in}_2$: It is an up-move and it is occurring in a lower horizontal path. Hence, either of $s^{in^{'}}_{1}$ or $s^{in^{'}}_{2}$ must be occupied. If $s^{in^{'}}_{2}$ is occupied, the above argument can be repeated until one reaches the boundary to find $N_v - y(v_j) + \nicefrac{1}{2}$ horizontal bonds that go through $x(s^{in}_2)$. If $s^{in'}_1$ is occupied, its path must be horizontal and running parallel to $q$, in particular making an up-step around $s^{in'}_1$. Again, we can continue the argument until we arrive at the boundary. The same argument can be used if initially $s^{in}_1$ is occupied instead of $s^{in}_2$. In either case, we find $N_v - y(v_j) + \nicefrac{1}{2}$ horizontal bonds that go through $x(s^{in}_2)$.\\
Now, by assumption, also either $s^{out}_1$ or $s^{out}_2$ is occupied. We can reverse top and bottom in the argument above to find another $y(v_j) + \nicefrac{1}{2}$ horizontal bonds which go through $x(s^{in}_2)$. Note, that the path that contains $s^{out}_2$ is necessarily upper, since otherwise, $m$ would be inside it and it could not exist yet by construction. \\
We have hence found $N_v + 1$ bonds in $p$ that go through a single vertical cut, contradicting the assumption that $p$ is allowed.
\item \emph{(Adding bubbles)} Now for each bond, we have created a connecting path. It is possible, however, that not all points in $\mathcal{L}_{N_h, N_v}$ are occupied. In this case we add small bubbles to the pattern.

\end{enumerate}
It remains to show that the loop pattern thus created is is compatible with $p$, i.e. that the boundary points of all paths correspond to tuples in the connectivity pattern, or - differently phrased - for a tuple $(a,b) \in p$, whether the corresponding path in $L$ starting with $a$ can end at a point $b' \neq b$. By construction, once $(v_j,b)$ becomes diagonal, it will surely have the correct ending point. Again, let us consider without loss of generality a horizontal path. Then, after $\Delta x$ steps, the horizontal distance to the target is zero. Hence, either we have arrived at the correct ending point, or the partial path has become vertical during the construction. To become vertical, however, the path must have gone through a point where its remainder was diagonal, hence ensuring that the correct ending point was reached. \\
The resulting loop pattern is the \emph{canonical loop pattern of p}.
\end{definition}

\section{The number $\mathcal{N}(N_h, N_v)$}
\label{sec:numberN}
Now that we have seen that there is at least one loop pattern for each allowed connectivity pattern, we can count the forbidden connectivity patterns.

\subsection{We can count horizontally and vertically forbidden connectivity patterns individually}
\begin{claim}A connectivity pattern cannot be both horizontally and vertically forbidden.
\end{claim}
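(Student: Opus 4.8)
The plan is to deduce the claim from an elementary counting bound on the number of tuples in a connectivity pattern, using the fact that a single tuple cannot simultaneously be of horizontal and of vertical type.

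First I would recall that on an $N_h\times N_v$ patch a connectivity pattern $p$ pairs the $2N$ boundary points into exactly $N=N_h+N_v$ tuples, and that, by definition, only \emph{horizontal} tuples (those with $|\Delta x|>|\Delta y|$) can go through a vertical cut, while only \emph{vertical} tuples (those with $|\Delta x|<|\Delta y|$) can go through a horizontal cut. In particular, writing $A\subseteq p$ for the set of tuples of $p$ that go through \emph{some} vertical cut and $B\subseteq p$ for the set of tuples that go through \emph{some} horizontal cut, we have $A\cap B=\emptyset$, since the two conditions $|\Delta x|>|\Delta y|$ and $|\Delta x|<|\Delta y|$ are mutually exclusive.

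Then I would argue by contradiction: suppose $p$ is both vertically forbidden and horizontally forbidden. Vertical forbiddenness yields a cut $i^\ast\in\{1,\dots,N_h-1\}$ with $\textit{Flow}(p,i^\ast,\textit{vert})\ge N_v+1$, i.e.\ at least $N_v+1$ distinct horizontal tuples of $p$, all of which lie in $A$. Horizontal forbiddenness yields a cut $j^\ast\in\{1,\dots,N_v-1\}$ with $\textit{Flow}(p,j^\ast,\textit{hor})\ge N_h+1$, i.e.\ at least $N_h+1$ distinct vertical tuples of $p$, all of which lie in $B$. Since $A$ and $B$ are disjoint subsets of $p$, the pattern then contains at least $|A|+|B|\ge (N_v+1)+(N_h+1)=N+2$ tuples, contradicting the fact that $p$ has exactly $N$ tuples.

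I do not expect a genuine obstacle here: once the mutual exclusivity of the two tuple types is recorded, the argument reduces to the trivial inequality $N+2>N$. The only point requiring a moment's care is that the hypothesis of the claim implicitly forces $N_h,N_v\ge 2$ (otherwise the relevant cut ranges are empty and one of the two forbidden conditions is vacuous), so that both flow quantities are meaningfully defined; this is automatic and will simply be noted in passing.
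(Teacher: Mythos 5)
Your argument is formally valid for the definitions as written in Appendix~\ref{app:catalanNumber}: since ``$(a,b)$ goes through vertical cut $i$'' requires $(a,b)$ to be horizontal while ``goes through horizontal cut $j$'' requires it to be vertical, and since $|\Delta x|>|\Delta y|$ and $|\Delta x|<|\Delta y|$ are mutually exclusive, the two families of tuples are disjoint and the count $(N_v+1)+(N_h+1)>N$ gives the contradiction. This is a genuinely different --- and far shorter --- route than the paper's, which never invokes the type restriction at all: it lets the two oversaturated cuts partition the boundary into four arcs $A,B,C,D$, shows that at least $\kappa\ge 2$ bonds must run between opposite arcs, and extracts the contradiction $|A|+|D|\ge N+2$ from the induced partitions $A_L,A_R,D_L,D_R$.

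The reason the paper takes the longer road is worth flagging, because it is exactly the point where your argument is fragile. The paper's proof counts \emph{every} bond whose endpoints lie on opposite sides of a cut, i.e.\ it establishes the claim for the ``geometric'' flow, which is strictly larger than the flow of the literal definition (a tuple joining $(i-\tfrac12,N_v)$ to $(i+\tfrac12,0)$ straddles the vertical cut at $i$ but is vertical, hence excluded by the definition). That geometric notion is the one the paper actually uses downstream: in the Dyck-path bijection of Appendix~\ref{sec:numberN}, the flow through cut $i$ is identified with the height of $\phi_h(p)$ after $N_v+2i$ steps, and that height counts \emph{all} bonds with exactly one endpoint already read, with no restriction on tuple type. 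Under that reading a single bond from arc $A$ to arc $D$ contributes to both an oversaturated vertical and an oversaturated horizontal cut, your disjointness step fails, and the naive count only yields $\ge N+2-\kappa$ bonds --- no contradiction --- which is precisely where the four-region bookkeeping becomes unavoidable. So your proof is correct for the claim as literally defined, but it proves a strictly weaker statement than the paper's proof does, and the weaker statement does not by itself support the inclusion--exclusion count of $\mathcal{N}(N_h,N_v)$ as the paper sets it up; to repair this you would need to either prove the geometric version (essentially reproducing the paper's argument) or first show that the two notions of forbiddenness coincide.
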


\begin{proof} Let $p$ be a connectivity pattern and assume it is both horizontally and vertically forbidden. Then denote the vertical lines at which there is an oversaturated cut by $x$ and $y$, respectively. These lines cut the patch into four areas, $A, B, C$ and $D$ as depicted in figure \ref{fig:CutABCD}.
\begin{figure}[!ht]
  \caption{The boundary cut into regions by vertical and horizontal cuts at $x$ and $y$}
  \centering
  \label{fig:CutABCD}
    \input{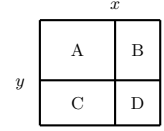}
\end{figure}
Now if each of the bonds cuts only either the horizontal \emph{or} vertical line, then there would need to be at least $N_h + N_v + 2$ bonds in total, hence at least two bonds cut both lines, without loss of generality going from boundary $A$ to boundary $D$ in the figure. There could be more than two bonds crossing from $A$ to $D$ - let us denote the total number by $\kappa$, the lowest one by $a$ and the highest one by $b$. These bonds partition the areas $A$ and $D$ into $A_L$, $A_R$ and $D_L$, $D_R$, respectively. For their size, clearly
\begin{equation}
\label{eq:AL}
\begin{aligned}
|A_L| + |A_R| +\kappa &\leq |A| \\
|D_L| + |D_R| +\kappa &\leq |D|
\end{aligned}
\end{equation}
holds.
There remain $N_h-\kappa +1$ bonds to be found for the horizontal violation and all of these must have boundary points in $A_R$ and $D_R$. Similarly, there remain $N_v-\kappa +1$ bonds to be found for the vertical violation and all of these must have boundary points in $A_L$ and $D_L$. Hence we have the inequalities
\begin{equation}
\begin{aligned}
|A_L| + |D_L| &\geq N_v -\kappa +1 \\
|A_R| + |D_R| &\geq N_h -\kappa +1.
\end{aligned}
\end{equation}
Adding the two inequalities and inserting inequalities (\ref{eq:AL}), we obtain 
\begin{equation}
\begin{aligned}
|A| + |D| \geq N_v + N_h + 2
\end{aligned}
\end{equation}
and since $|A| + |D| = N_v + N_h$, we arrive at a contradiction.
\end{proof}

\subsection{A bijection between connectivity patterns and Dyck Paths}
\begin{definition}\textbf{(Dyck paths)}\\
A \emph{Dyck path} or \emph{mountain diagram} of size $n$ is a lattice path in $\mathbb{Z}^2$ from $(0,0)$ to $(2n,0)$ consisting of $n$ up steps of the form $(1,1)$ and $n$ down steps of the form $(1,-1)$ which never goes below the x-axis $y=0$. The \emph{maximal height} of a Dyck path is the  maximum $y$-coordinate of the path. Denote all Dyck paths of size $n$ by $\mathcal{D}_n$.
\end{definition}

\begin{definition}\textbf{(Bijection between connectivity patterns and Dyck paths)} \label{def:bijection} \\
We define two maps
\begin{equation}
\begin{aligned}
\phi_h: (N_h,N_v)\text{-connectivity patterns} & \mapsto \mathcal{D}_{N_h + N_v} \\
\phi_v: (N_h,N_v)\text{-connectivity patterns} &\mapsto \mathcal{D}_{N_h + N_v} \\
\end{aligned}
\end{equation}
The image of a given connectivity patterns $p$ under the map $\phi_h$ is given as follows. We start with the empty Dyck path and sequentially look at the boundary points in the order given in figure \ref{fig:phiHorizontal}. Then, we add an up-step to the Dyck path if the partner of the boundary point we are currently reading has not been read yet. Otherwise we add a down-step. For $\phi_v$, we follow the same procedure with the labelling given by figure \ref{fig:phiVertical} instead.
\begin{figure}[!ht]
  \caption{The mapping $\phi_h$ for an allowed connectivity pattern}
  \centering
  \label{fig:phiHorizontal}
    \input{tikzpictures/mappingPHLeft} $\xrightarrow{\phi_h}$ \input{tikzpictures/mappingPHRight}
\end{figure}

\begin{figure}[!ht]
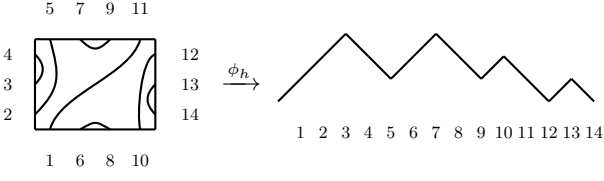

  \caption{The mapping $\phi_v$ for a forbidden connectivity pattern}
  \centering
  \label{fig:phiVertical}
   \input{tikzpictures/mappingPVLeft} $\xrightarrow{\phi_v}$ \input{tikzpictures/mappingPVRight}
\end{figure}
A couple of remarks are in order
\begin{itemize}
\item The resulting path is a Dyck path: For it to pierce through the x-axis, one would need to read more second halves than first halves up to a given point which is clearly impossible. Also, there is an equal number of second halves and first halves in total, so the final step ends up on the x-axis again.
\item The maps $\phi_h$ and $\phi_v$ are bijective. The map $\phi_h^{-1}$ reads the Dyck path sequentially from start to end, while scanning through the boundary points in the order given in figure \ref{fig:phiHorizontal}. Whenever a down-step is encountered, a bond is added to the connectivity pattern by matching the currently active boundary point with the last open one. Again, $\phi_v^{-1}$ works analogously with the labelling given in figure \ref{fig:phiVertical}.
\item For $i \in \{1,2, \dots N_h-1 \}$, $Flow(p,i,vert)$ is given by the height of $\phi_h(p)$ after $N_v+2i$ steps. Similarly, for $j \in \{1,2, \dots N_v-1 \}$, $Flow(p,j,hor)$ is given by the height of $\phi_v(p)$ after $N_h+2i$ steps. In particular, $p$ is horizontally (vertically) forbidden if the maximal height of $\phi_h(p)$ ($\phi_v(p)$) is greater than $N_h$ ($N_v$).
\end{itemize}
\end{definition}

\subsection{Three expressions for the number of Dyck paths with restricted height}
\begin{claim}\textbf{(Number of allowed matchings)}
Let $C_n = \frac{1}{n+1}\binom{2n}{n}$ be the regular Catalan number. For a given $N_h, N_v \in \mathbb{N}$, the number of horizontally (vertically) forbidden connectivity pattern is given by $C_{N_h + N_v} - f(N_h, N_v)$ $(C_{N_h + N_v} - f(N_v, N_h))$, where we can give three expressions for the numbers $f(N_h, N_v)$:
\begin{widetext}
\begin{equation}
\begin{aligned}
\label{eq:modifiedCatalan}
f(N_h, N_v) &= \frac{4^N}{1+\frac{N_h}{2}} \sum_{j=1}^{N_h+2} \sin\left(\frac{\pi j}{N_h+2}\right) \left( \cos\left(\frac{\pi j}{N_h+2}\right)\right)^{2N} \\
&= \sum_{k \geq 1} \binom{2N}{N- k(N_h+2) -1} - 2\binom{2N}{N- k(N_h+2)} + \binom{2N}{N- k(N_h+2) +1} \\
&= \left(\frac{\text{d}}{\text{d}z}\right)^{2N}\Bigg|_{z=0}  \quad \cfrac{1}{1 - \cfrac{z^2}{\ddots \, - \cfrac{1}{z^2}}}
\end{aligned}
\end{equation}
\end{widetext}
where the continued fraction has $N_h$ instances of $z^2$ and $N=N_h+N_v$.
As a direct corollary, since forbidden matchings are either horizontally or vertically forbidden and the number of all matchings is $C_{N_h + N_v}$, we obtain the total number of allowed matchings:
\begin{equation}
\label{eq:exactForm1}
\begin{aligned}
\mathcal{N}(N_h, N_v) = f(N_h, N_v) + f(N_v, N_h) - C_{N_h + N_v}.
\end{aligned}
\end{equation}
For later convenience, we will introduce $f(N_h, \alpha)$, with the aspect ratio $\alpha N_v = N = N_h + N_v$. Since the number of vertically forbidden matchings is equal to the number of horizontally forbidden matchings on a $90^\circ$ rotated patch and a $90^\circ$ rotation corresponds to $\alpha \rightarrow \frac{\alpha}{\alpha-1}$, we can rewrite equation (\ref{eq:exactForm1}) as
\begin{equation}
\label{eq:exactForm2}
\begin{aligned}
\mathcal{N}(N_h, \alpha) = f(N_h, \alpha) + f\left(N_h, \frac{\alpha}{\alpha-1}\right) - C_{\alpha N_h}.
\end{aligned}
\end{equation}
\end{claim}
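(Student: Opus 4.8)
The plan is to recognise $f(N_h,N_v)$ as a count of Dyck paths of bounded height, to give the three classical derivations of that count, and to obtain the corollary by a short inclusion--exclusion.

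First I would set up the translation into lattice paths using the bijections $\phi_h,\phi_v$ of Appendix~\ref{sec:numberN}. The map $\phi_v$ identifies $(N_h,N_v)$-connectivity patterns with the $C_N$ Dyck paths of semilength $N=N_h+N_v$, and --- by the remark following the definition of $\phi_h,\phi_v$, together with the canonical-loop-pattern construction of Appendix~\ref{app:catalanNumber} (which shows that a pattern is forbidden in the sense of admitting no compatible loop pattern exactly when some flow cut is over-saturated) --- a pattern $p$ is horizontally forbidden precisely when $\phi_v(p)$ rises above height $N_h$. Hence $f(N_h,N_v)$ equals the number $D^{\le N_h}_N$ of Dyck paths of semilength $N$ with maximal height at most $N_h$, and symmetrically $f(N_v,N_h)=D^{\le N_v}_N$ via $\phi_h$. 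Granting this, equation~(\ref{eq:exactForm1}) is immediate: by the Claim of Appendix~\ref{sec:numberN} that no pattern is at once horizontally and vertically forbidden, the allowed patterns number $C_N$ minus the two disjoint classes of forbidden ones, $C_N-(C_N-f(N_h,N_v))-(C_N-f(N_v,N_h))=f(N_h,N_v)+f(N_v,N_h)-C_N$; equation~(\ref{eq:exactForm2}) just rewrites this with the aspect ratio, using that a $90^{\circ}$ rotation exchanges $N_h\leftrightarrow N_v$ and hence sends $\alpha\mapsto\alpha/(\alpha-1)$.

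It then remains to prove the three expressions in~(\ref{eq:modifiedCatalan}) for $D^{\le N_h}_N$, each of which is a standard result obtained by a different method. \emph{(i) Trigonometric form:} write $D^{\le N_h}_N=\langle 0|T^{2N}|0\rangle$ with $T$ the adjacency matrix of the path graph on the $N_h+1$ levels $\{0,\dots,N_h\}$ (a $0$--$1$ tridiagonal matrix); its eigenvalues are $2\cos(\pi j/(N_h+2))$, $j=1,\dots,N_h+1$, with orthonormal eigenvectors whose level-$0$ component is $\sqrt{2/(N_h+2)}\,\sin(\pi j/(N_h+2))$, so the spectral decomposition of $T^{2N}$ together with $(2\cos\theta)^{2N}=4^{N}\cos^{2N}\theta$ gives the first line. \emph{(ii) Alternating binomial form:} a Dyck path confined to $[0,N_h]$ is a $\pm1$-walk of $2N$ steps from $0$ to $0$ avoiding both levels $-1$ and $N_h+1$; the reflection principle applied at these two walls, summed over the orbit of the endpoint under the affine reflection group they generate (translation period $2(N_h+2)$), produces an alternating sum of binomials, and collecting the identity-orbit term into $C_N=\binom{2N}{N}-\binom{2N}{N+1}$ while pairing the orbit elements $\pm k$ collapses it to the second-difference expression in the second line. \emph{(iii) Continued-fraction form:} by the Flajolet/Stieltjes combinatorics of weighted Dyck paths, the generating function $\sum_N D^{\le N_h}_N z^{2N}$ is the finite continued fraction obtained by truncating the Catalan $J$-fraction at height $N_h$, and reading off the degree-$2N$ Taylor coefficient at $z=0$ is the third line. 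No separate argument is needed to match the three, since each computes $D^{\le N_h}_N$; alternatively the finite continued fraction is a ratio of Chebyshev polynomials and its partial-fraction expansion over the simple roots is exactly the trigonometric sum.

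The step I expect to be the real obstacle is the first one --- the identification $f(N_h,N_v)=D^{\le N_h}_N$. The flow criterion only inspects the path at the $N_h-1$ interior vertical cuts, which all lie at positions of one parity (cf.\ Figure~\ref{fig:phiHorizontal}), so one must check that a Dyck path which over-shoots height $N_h$ at \emph{some} step necessarily over-shoots it at one of the sampled steps; this is precisely where the monotone structure of the canonical loop pattern in Appendix~\ref{app:catalanNumber} is used, and it is what makes the flow notion of ``forbidden'' coincide with the loop-pattern notion. Everything downstream of that identification is routine lattice-path combinatorics.
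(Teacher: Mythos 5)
Your proposal is correct and, in its skeleton, coincides with the paper's proof: the bijections $\phi_h,\phi_v$ of Definition~\ref{def:bijection} turn the count into one of Dyck paths of semilength $N$ with bounded height, the claim that no pattern is simultaneously horizontally and vertically forbidden yields Eq.~(\ref{eq:exactForm1}) by inclusion--exclusion, the trigonometric form comes from diagonalising the $(N_h+1)\times(N_h+1)$ tridiagonal transfer matrix, and the continued fraction from the first-passage decomposition. Where you add value: the paper does not actually prove the middle (alternating-binomial) line --- it only cites the watermelon literature --- so your two-wall reflection argument, with the affine reflection group of period $2(N_h+2)$, is genuine new content relative to the paper's proof; and you correctly isolate as the delicate point the identification of ``some sampled flow exceeds $N_h$'' with ``the maximal height of the Dyck path exceeds $N_h$'', which the paper disposes of in an unproved remark after Definition~\ref{def:bijection} even though the sampled positions all share one parity and the first and last $N_h$ steps are not sampled at all, so the equivalence does require the argument you gesture at via the canonical loop pattern of Appendix~\ref{app:catalanNumber}. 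One further observation in your favour: your spectral derivation produces the \emph{squared} boundary component $\sin^2\!\left(\pi j/(N_h+2)\right)$, and a check at $N_h=N_v=1$ (where $f=1$) confirms that this, not the single power of $\sin$ printed in the first line of Eq.~(\ref{eq:modifiedCatalan}), is the correct formula; the paper's own transfer-matrix computation and its asymptotic analysis in Appendix~\ref{sec:numberN} likewise use $\sin^2$, so the displayed equation contains a typo that your version silently repairs.
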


\begin{proof}
Let $N_h, N_v \in \mathbb{N}$. We are going to count the horizontally allowed connectivity pattern and show that they are equal to $C_{N_h + N_v} - f(N_h, N_v)$. From the definition of Dyck paths, we need to count all mountain diagrams of half-length $N_h + N_v$ whose maximum height exceeds $N_h$. To this end, we set up a sequence of counting vectors $v_i \in \mathbb{N}^{N_h+1}$. After $n$ steps, we would like the number of paths with height $h$ that never exceed $N_h$ in height to be given by $(v_n)_h$. Hence, set $v_0 = (1, 0, \dots, 0)$, indicating a single path with height 0, the empty path. Now we are going to sequentially apply linear operations
\begin{equation}
\begin{aligned}
v_{i+1} = M v_i,
\end{aligned}
\end{equation}
for $i \in \{1,\dots, 2N \}$, where $M$ is an $N_{h+1} \times N_{h+1}$-matrix, defined as
\begin{equation}
\begin{aligned}
M = 
\begin{bmatrix}
0  & 1     &       &               \\
1  & \ddots & \ddots &          \\
 &  \ddots& \ddots & 1  \\
  &        &  1& 0  \\
\end{bmatrix}
\end{aligned}
\end{equation}
For each existing path of length $n$ and height $h$, the matrix $M$ has the effect of creating two new paths of height $h+1$ and $h-1$, while automatically cutting off paths with height greater than $N_h$ and smaller than $0$. \\
Finally, the vector $v_{2N_h+2N_v}$ contains the number of paths of all heights after $2N_h+2N_v$ steps, out of which we are only interested in proper Dyck paths, the number of which is stored in $(v_{2N})_1=(M^{2N}v_0)_1$.
The matrix $M$ is a tridiagonal Toeplitz matrix with eigenvalues $D_j = 2 \cos\left(\frac{\pi j}{N_h+2} \right)$ and eigenvectors $S_{ij}= \frac{1}{\sqrt{1+\frac{N_h}{2}}} \sin \left(\frac{\pi i j}{N_h+2} \right)$. Since $S$ is orthogonal,
\begin{equation}
\begin{aligned}
(M^{2N}v_0)_1 &= \sum_{j=1} ^{N_h+2} S_{1j} D_j^{2N} (S^{-1})_{j1} \\
&= \sum_{j=1} ^{N_h+2} (S_{1j})^2 D_j^{2N} \\
&= f(N_h, N_v)
\end{aligned}
\end{equation}
The second form is an application of the combinatorics of \textit{watermelons} \cite{watermelons}.
For the last expression, we allude to a tool from analytical combinatorics, the \textit{symbolic method} \cite{symbolicmethod}. Assume that we want to calculate the number $b_n$ of binary words of length $n$. Then we can write down a combinatorial equation
\begin{equation}
\begin{aligned}
B = \underbrace{\epsilon}_{\text{empty word}} \cup \underbrace{B \times 0}_{\text{append a zero}} \cup \underbrace{B \times 1}_{\text{append a one}}
\end{aligned}
\end{equation}
meaning ``A binary word is either the empty word or a binary word ending on zero or a binary word ending on one''. The machinery of the symbolic method teaches us to translate this equation into a generating function
\begin{equation}
\begin{aligned}
B(z) &= 1 +zB(z) + zB(z) \quad \Rightarrow \\
B(z) &= \frac{1}{1-2z} \\
&= \sum_{n \geq 0} 2^n z^n
\end{aligned}
\end{equation}
such that we can extract the numbers $b_n = (\text{d}/\text{d}z)^{2N} B(z)|_{z=0} $ from the coefficients of the Taylor series.
To calculate the number $D_{2n}$ of Dyck paths of length $2n$, we use \textit{the first passage decomposition}:
\begin{equation}
\begin{aligned}
D = \underbrace{\epsilon}_{\text{empty path}} \cup \underbrace{\uparrow \times D \times \downarrow D}_{\substack{\text{an up-step followed by a Dyck path,} \\ \text{a down-step and another Dyck path}}}
\end{aligned}
\end{equation}
meaning ``A Dyck path is either empty or an up-step follwed by a Dyck path, a down-step and another (possibly empty) Dyck path". Similarly, this translates to 
\begin{equation}
\begin{aligned}
D(z) &= 1 +zD(z)zD(z) \quad \Rightarrow \\
D(z) &= \frac{1-\sqrt{1-4z}}{2z} \\
&= \sum_{n \geq 0} C_n z^{2n}
\end{aligned}
\end{equation}
where $C_n$ is the $n$-th Catalan number. Finally, to obtain a generating function $D_h(z)$ for the number of Dyck paths with maximal height $h$, we start with $D_0(z) = 1$, since there is exactly one path with length zero: the empty path. Again, we decompose the path to the left and right of its first passage of zero:
\begin{equation}
\begin{aligned}
D_h(z) &= 1 +zD_{h-1}(z)zD_h(z) \quad \Rightarrow \\
D_h(z) &= \cfrac{1}{1 - \cfrac{z^2}{\ddots \, - \cfrac{1}{z^2}}} \\
&= \sum_{N_v \geq 0} f(h, N_v) z^{2(h + N_v)}
\end{aligned}
\end{equation}
\end{proof}

\subsection{Proof of (\ref{eq:asymptoticsNhNv})}

We will now show that
\begin{equation}
\mathcal{N}(N, \alpha) = \frac{4^N}{N^{3/2}} \left[ k(\alpha) + \mathcal{O}\left(\frac{1}{N}\right) \right]
\end{equation}
cf. equation (\ref{eq:exactForm2}) with the function $k(\alpha)$ given by
\begin{align}
k(\alpha) = \frac{\sqrt{\pi}}{2} + \frac{\sqrt{\pi}}{2} (\alpha-1)^{3/2} - \pi^{-1/2}
\end{align}

In order to calculate $\mathcal{N}(N_h, \alpha)$, it is sufficient to compute the asymptotic behaviour of $f(N, \alpha)$ since the vertically forbidden loop patterns can be transformed into horizontally forbidden ones under the $90^\circ$ rotation $\alpha \rightarrow \frac{\alpha}{\alpha-1}$. The asymptotic behaviour of the Catalan numbers is known to be $4^N/N^{3/2}\sqrt{\pi}$.
It remains to calculate the expression
\begin{align}
g(N_h,\alpha):=\sqrt{N_h} \sum_{j=1}^{N_h+2} \sin\left( \frac{\pi j}{N_h+2} \right) \cos \left( \frac{\pi j}{N_h+2} \right)^{2N}
\end{align}
First observe that the summand is symmetric around $j=\frac{N_h+2}{2}$ (if $N_h$ is odd, we can omit the $\left(\ceil{\frac{N_h+2}{2}}\right)$th from the sum as this term is exponentially small in $N_h$). Therefore
\begin{align}
g(N_h,\alpha):=2 \sqrt{N_h} \sum_{j=1}^{N_h+2/2} \sin\left( \frac{\pi j}{N_h+2} \right) \cos \left( \frac{\pi j}{N_h+2} \right)^{2\alpha N_h}
\end{align}

We will proceed with the computation of the sum in four steps. First, we will truncate the sum, using the exponential suppression of terms with $j$ on the order of $N_h$. Second, we will replace the cosine by a Gaussian. Third, we Taylor expand the sine and finally, we replace the sum by an integral that we can compute analytically. All of these approximations induce an error $\mathcal{O}(1/N)$.

Let us now establish the relevant claims:
\begin{definition}
\begin{equation}
e_1(N):=\sqrt{N} \sum_{j=\floor{N/\pi}}^{N/2} \sin^2\left (\frac{\pi j}{N} \right) \cos^{2\alpha N}\left( \frac{\pi j}{N} \right)
\end{equation}
\end{definition}

\begin{claim}\textbf{(Truncation of the sum)}
\begin{equation}
e_1(N) = \mathcal{O} \left( N^{3/2} 2^{-\alpha N} \right)
\end{equation}
\end{claim}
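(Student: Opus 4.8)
The plan is to bound the truncated tail sum
\[
e_1(N)=\sqrt N\sum_{j=\lfloor N/\pi\rfloor}^{N/2}\sin^2\!\left(\frac{\pi j}{N}\right)\cos^{2\alpha N}\!\left(\frac{\pi j}{N}\right)
\]
by controlling the largest summand and the number of terms. First I would note that for $j$ in the summation range we have $\pi j/N\in[1,\pi/2]$ (approximately), so $\cos(\pi j/N)\le\cos(1)<1$, and in fact the cosine is decreasing on this interval, so its maximum over the range is attained at the left endpoint $j=\lfloor N/\pi\rfloor$, where $\pi j/N\approx 1$. Hence $\cos^{2\alpha N}(\pi j/N)\le \cos^{2\alpha N}(\pi\lfloor N/\pi\rfloor/N)$, and since $\cos(\pi\lfloor N/\pi\rfloor/N)\to\cos(1)$ as $N\to\infty$ (with error $O(1/N)$ in the argument, hence $O(1/N)$ in the value), we get $\cos^{2\alpha N}(\pi j/N)\le (\cos(1)+O(1/N))^{2\alpha N}= C^{\alpha N}$ for some constant $C<1$ and all large $N$; concretely one can take any $C$ with $\cos^2(1)<C<1$.

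Next I would bound the remaining factors crudely: $\sin^2(\pi j/N)\le 1$, and the number of terms in the sum is at most $N/2$. Putting these together,
\[
e_1(N)\le \sqrt N\cdot \frac N2\cdot C^{\alpha N}=\frac{N^{3/2}}{2}\,C^{\alpha N}.
\]
The only remaining point is to match the form claimed in the statement, which has $2^{-\alpha N}$ rather than $C^{\alpha N}$. Since $C<1$ we may write $C^{\alpha N}=2^{-\alpha N\log_2(1/C)}$ with $\log_2(1/C)>0$; if one insists on the exact exponent base $2$, it suffices to choose $C$ small enough that $\log_2(1/C)\ge 1$, i.e.\ $C\le 1/2$ — this is compatible with $\cos^2(1)\approx 0.29<1/2$, so the choice $C=1/2$ works and gives $e_1(N)\le \tfrac12 N^{3/2}2^{-\alpha N}$, which is exactly $\mathcal O(N^{3/2}2^{-\alpha N})$.

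I do not anticipate a genuine obstacle here; the statement is a soft tail estimate and the above three-line argument (locate the max at the left endpoint, bound the number of terms, absorb constants into the exponential) suffices. The one mild subtlety worth stating carefully is the passage from $\cos(\pi\lfloor N/\pi\rfloor/N)$ to $\cos(1)$: because $\pi\lfloor N/\pi\rfloor/N = 1 + O(1/N)$, and raising a quantity of the form $(c+O(1/N))^{2\alpha N}$ with fixed $c<1$ still yields something exponentially small (the $O(1/N)$ perturbation changes the base by a bounded factor, not the exponential rate), this does not affect the conclusion. I would therefore present the proof essentially as: identify the dominant term, bound it and the term count, and conclude. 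This same template will presumably be reused (with the roles of ``Gaussian replacement'' and ``Taylor expansion of the sine'') in the subsequent claims of Appendix~\ref{sec:numberN}.
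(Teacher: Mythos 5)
Your proof is correct and follows essentially the same route as the paper: bound the cosine uniformly over the truncated range by a constant whose square is at most $1/2$ (the paper simply uses $\pi j/N\ge\pi/4$ so that $\cos^{2\alpha N}(\pi/4)=2^{-\alpha N}$ exactly, whereas you evaluate at the left endpoint near $1$ and then observe $\cos^2(1)<1/2$), then bound $\sin^2\le 1$ and the number of terms by $O(N)$. No issues.
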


\begin{proof}
For $j$ in the interval $[\floor{N/\pi},N/2]$, for $N$ large enough we have that $j \pi /N \geq \pi/4$ and therefore
\begin{align}
e_1(N)&\leq\sqrt{N} \sum_{j=\floor{N/\pi}}^{N/2} \sin^2\left (\frac{\pi j}{N} \right) \cos^{2\alpha N}\left( \pi/4 \right) \nonumber \\
&=\sqrt{N} \sum_{j=\floor{N/\pi}}^{N/2} \sin^2\left (\frac{\pi j}{N} \right) 2^{-\alpha N} \nonumber \\
&\leq N^{3/2} 2^{-\alpha N}
\end{align}
\end{proof}

\begin{definition}
\begin{equation}
e_2(N):=\sqrt{N} \sum_{j=1}^{\floor{N/\pi}} \sin^2\left (\frac{\pi j}{N} \right) \left[ e^{-\alpha N \left( \frac{\pi j}{N} \right)^2} - \cos^{2\alpha N}\left( \frac{\pi j}{N} \right) \right]
\end{equation}
\end{definition}

\begin{claim} \textbf{(A cosine raised to a high power becomes a Gaussian)}
\begin{equation}
e_2(N) = \mathcal{O}\left( \frac{1}{N} \right)
\end{equation}
\end{claim}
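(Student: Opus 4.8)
The plan is to control the bracket $e^{-\alpha N(\pi j/N)^2}-\cos^{2\alpha N}(\pi j/N)$ pointwise by a Gaussian-weighted polynomial in $j$, and then to sum. Two elementary facts about $\cos$ do all the work. First, $-\log\cos x=\tfrac{x^2}{2}+\tfrac{x^4}{12}+\tfrac{x^6}{45}+\cdots$ has only nonnegative Taylor coefficients and converges on $(-\tfrac\pi2,\tfrac\pi2)$; in particular $\cos^{2\alpha N}x=e^{-\alpha N x^2}\,e^{-2\alpha N r(x)}$ with $r(x):=-\log\cos x-\tfrac{x^2}{2}\ge0$, which already shows that every summand in $e_2(N)$ is $\ge0$, so $e_2(N)\ge0$. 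Second, $r$ is analytic on a neighbourhood of $[0,1]$ and vanishes to fourth order at $0$, so $r(x)\le C_0x^4$ on $[0,1]$ for a universal constant $C_0$; this is the quantitative form of ``a high power of a cosine is a Gaussian''.

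I would then set $x_j:=\pi j/N$ and note that $x_j\le1$ throughout the sum, since $j\le\lfloor N/\pi\rfloor$, so the bound on $r$ applies. Combining $1-e^{-u}\le u$ (for $u\ge0$) with $\sin^2 x_j\le x_j^2$ gives
\begin{equation}
0\le e_2(N)\le \sqrt N\sum_{j=1}^{\lfloor N/\pi\rfloor} x_j^2\cdot 2\alpha N C_0 x_j^4\,e^{-\alpha N x_j^2}
= 2\alpha C_0\,N^{3/2}\sum_{j\ge1}\phi(x_j),\qquad \phi(x):=x^6e^{-\alpha N x^2}.
\end{equation}
Everything is now reduced to the single estimate $\sum_{j\ge1}\phi(\pi j/N)=\mathcal{O}(N^{-5/2})$, which, plugged in, yields $e_2(N)=\mathcal{O}(N^{-1})$.

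For that estimate I would observe that $\phi$ is unimodal, with its unique maximum at $x^\ast=\sqrt{3/(\alpha N)}$ and $\phi(x^\ast)=27e^{-3}(\alpha N)^{-3}=\mathcal{O}(N^{-3})$, and that $\sum_{j\ge1}\phi(\pi j/N)$ is a Riemann-type sum with mesh $\pi/N$. Splitting the sum at $x^\ast$ and bounding each term on the increasing (resp.\ decreasing) branch by $\tfrac{N}{\pi}$ times the integral over an adjacent mesh interval, while absorbing the $\mathcal{O}(1)$ terms near $x^\ast$ into $\phi(x^\ast)$, gives $\sum_{j\ge1}\phi(\pi j/N)\le\tfrac{2N}{\pi}\int_0^\infty\phi(x)\,dx+\mathcal{O}(N^{-3})$. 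The substitution $x\mapsto x/\sqrt{\alpha N}$ evaluates $\int_0^\infty x^6e^{-\alpha N x^2}\,dx=(\alpha N)^{-7/2}\int_0^\infty v^6e^{-v^2}\,dv=\tfrac{15\sqrt\pi}{16}(\alpha N)^{-7/2}=\mathcal{O}(N^{-7/2})$, so the sum is $\mathcal{O}(N\cdot N^{-7/2})+\mathcal{O}(N^{-3})=\mathcal{O}(N^{-5/2})$, completing the argument.

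The only point needing a little care is this Riemann-sum comparison, since $\phi$ is not monotone; but once one splits at the unique maximum it is routine, and the discarded boundary terms have size $\phi(x^\ast)=\mathcal{O}(N^{-3})$, comfortably below the $\mathcal{O}(N^{-5/2})$ target. Every approximation used is a one-sided inequality, so no further error bookkeeping is needed, and the implicit constant in $\mathcal{O}(1/N)$ depends only on $\alpha$.
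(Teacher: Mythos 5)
Your proof is correct and follows the same overall strategy as the paper's: bound the pointwise difference $e^{-\alpha N x_j^2}-\cos^{2\alpha N}(x_j)$ by a constant times $\alpha N x_j^4\,e^{-c\alpha N x_j^2}$, reduce $e_2(N)$ to $N^{3/2}\cdot\alpha N\sum_j x_j^6 e^{-c\alpha N x_j^2}$, and compare that sum with the Gaussian moment $\int_0^\infty x^6 e^{-c\alpha N x^2}\,\mathrm{d}x=\mathcal{O}(N^{-7/2})$. The execution differs in two places. For the pointwise bound the paper goes through $\cos x\ge 1-\tfrac12 x^2$ and the inequality $\left(1-\tfrac{y}{M}\right)^M\ge e^{-y/(1-y/M)}$, whereas you write $\cos^{2\alpha N}x=e^{-\alpha Nx^2}e^{-2\alpha N r(x)}$ with $r(x)=-\log\cos x-\tfrac{x^2}{2}$ satisfying $0\le r(x)\le C_0x^4$ on $[0,1]$; this is shorter and, as a bonus, proves the nonnegativity of each summand of $e_2$ (equivalently $\cos x\le e^{-x^2/2}$), which the paper asserts without argument. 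For the sum-to-integral step the paper invokes a Riemann-sum error bound controlled by $\max|f'|$, while you use a one-sided monotone comparison after splitting at the unique maximum of $x^6e^{-\alpha Nx^2}$; since only an upper bound is needed, this avoids the derivative bookkeeping, at the cost of a harmless factor $2$ in front of the integral and an $\mathcal{O}(N^{-3})$ boundary term, both safely below the $\mathcal{O}(N^{-5/2})$ target. Both routes are sound; yours is somewhat tighter in its logic, the paper's is closer to a direct Taylor-expansion narrative.
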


\begin{proof}
For simplicity, define $x_j = \frac{\pi j}{N}$ and $M = 2 \alpha N$. Using $\cos(x) \geq 1-\frac{1}{2} x^2 > 0$ in the interval $x \in [0, 1]$ and the fact that each term in the sum is positive, we have:
\begin{align}
e_2(N) &\leq \sqrt{N} \sum_{j=1}^{\floor{N/\pi}} \sin^2(x_j) \left[ e^{-\frac{x_j^2}{2} M} - \left( 1-\frac{1}{2}x_j^2 \right)^M \right] \nonumber \\
&= \sqrt{N} \sum_{j=1}^{\floor{N/\pi}} \sin^2(x_j) \left[ e^{-\frac{x_j^2}{2} M} - \left( 1-\frac{\frac{1}{2}x_j^2 M}{M} \right)^M \right]
\end{align}
Since $\frac{1}{2}x_j^2 < 1$, we can use the inequality $\left( 1-\frac{\frac{1}{2}x_j^2 M}{M} \right)^M \geq e^{-\frac{x_j^2}{2} M \left(1 - \frac{1}{1-\frac{x_j^2}{2}} \right)} $, combined with $0< \sin(x_j) < x_j$ and $e^{-y} \geq 1-y$ to arrive at
\begin{align}
\label{eq_chain}
e_2(N) &\leq \sqrt{N} \sum_{j=1}^{\floor{N/\pi}} x_j^2 \left[ e^{-\frac{x_j^2 M}{2}}  - e^{-\frac{x_j^2}{2} M \frac{1}{1-\frac{x_j^2}{2}} } \right] \nonumber \\
&= \sqrt{N} \sum_{j=1}^{\floor{N/\pi}} x_j^2 e^{-\frac{x_j^2 M}{2}} \left[ 1 - e^{\frac{x_j^2}{2} M \left(1 - \frac{1}{1-\frac{x_j^2}{2}} \right)} \right] \nonumber \\
&\leq \sqrt{N} \sum_{j=1}^{\floor{N/\pi}}  x_j^2 e^{-\frac{x_j^2 M}{2}} \left( - \frac{x_j^2}{2}M \left(1 - \frac{1}{1 - \frac{x_j^2}{2}} \right)  \right) \nonumber \\
&\leq \frac{\sqrt{N} M}{2} \sum_{j=1}^{\floor{N/\pi}} x_j^6 e^{-\frac{x_j^2 M}{2}} \underbrace{\frac{1}{1-\frac{x_j^2}{2}}}_{\leq 2} \nonumber \\
&\leq N^{5/2} \frac{2 \alpha}{\pi} \frac{1}{\floor{N/\pi}} \sum_{j=1}^{\floor{N/\pi}} x_j^6 e^{-\frac{x_j^2 M}{2}}
\end{align}
The sum is the right Riemann sum of the function $f(x) = \pi^2 x^6 e^{-x^2 \alpha N \pi^2}$, with an error given by
\begin{align}
\label{eq_riemann1}
\left| \frac{1}{\floor{N/\pi}} \sum_{j=1}^{\floor{N/\pi}} x_j^6 e^{-\frac{x_j^2 M}{2}} - \int_0^1 \pi^2 x^6 e^{-x^2 \alpha N \pi^2} \text{d}x \right| \leq \frac{d_\text{max}}{2 \floor{N/\pi}},
\end{align}
where $d_\text{max}$ is the maximum of the derivative $f'(x)$ in the interval $[0, 1]$. A direct calculation reveals that 
\begin{align}
\label{eq_dmax1}
d_\text{max} = c (N\alpha)^{-5/2}
\end{align}
for some constant $c$. Plugging (\ref{eq_dmax1}) and (\ref{eq_riemann1}) into (\ref{eq_chain}) yields
\begin{align}
e_2(N) &= N^{5/2} 2\alpha \pi \int_0^1 x^6 e^{-x^2 \alpha N} \text{d}x + \mathcal{O}\left(\frac{1}{N}\right) \nonumber \\
&\leq N^{5/2} 2\alpha \pi \int_0^\infty x^6 e^{-x^2 \alpha N} \text{d}x + \mathcal{O}\left(\frac{1}{N}\right) \nonumber \\
&= N^{5/2} 2\alpha \pi \frac{15\sqrt{\pi}}{16} (N \alpha \pi^2)^{-7/2} + \mathcal{O}\left(\frac{1}{N}\right) \nonumber \\
&= \mathcal{O}\left(\frac{1}{N}\right) 
\end{align}
\end{proof}

\begin{definition}
\begin{equation}
\label{eq_definition_replace_sin}
e_3(N):=\sqrt{N} \sum_{j=1}^{\floor{N/\pi}} e^{-\alpha N \left( \frac{\pi j}{N} \right)^2}  \left[\sin^2\left (\frac{\pi j}{N} \right)  - \left(\frac{\pi j}{N} \right)^2 \right]
\end{equation}
\end{definition}

\begin{claim} \textbf{(Replacing the sine)}
\begin{equation}
e_3(N): = \mathcal{O}\left( \frac{1}{N} \right)
\end{equation}
\end{claim}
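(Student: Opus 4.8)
The plan is to exploit the extra cancellation $\sin^2 x - x^2 = \mathcal{O}(x^4)$, which gains two powers of $x_j:=\pi j/N$ over the analogous sum appearing in the estimate of $e_2(N)$, and hence beats the $\sqrt N$ prefactor. First I would record the elementary inequality $0 \le x^2-\sin^2 x \le \tfrac13 x^4$ for $x\in[0,1]$: with $h(x)=x^2-\sin^2 x$ one has $h(0)=h'(0)=0$ and $h''(x)=2-2\cos 2x\ge 0$, so $h\ge 0$; and since $0\le x-\tfrac{x^3}{6}\le\sin x$ on $[0,1]$, $h(x)\le x^2-(x-\tfrac{x^3}{6})^2=\tfrac13 x^4-\tfrac1{36}x^6\le\tfrac13 x^4$. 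Because $x_j=\pi j/N\le \pi\floor{N/\pi}/N\le 1$ for every $j$ in the sum, this bound applies termwise, giving
\[
|e_3(N)| \le \frac{\sqrt N}{3}\sum_{j=1}^{\floor{N/\pi}} x_j^4\, e^{-\alpha N x_j^2}.
\]

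Second, I would estimate the remaining sum by comparison with an integral, exactly in the spirit of the preceding claim. Put $\phi(x)=x^4 e^{-\alpha N x^2}$ and $\delta=\pi/N$; then $\phi$ is unimodal on $[0,\infty)$ with maximum at $x^\star=\sqrt{2/(\alpha N)}$, so that on the increasing branch $\delta\phi(x_j)\le\int_{x_j}^{x_{j+1}}\phi$, on the decreasing branch $\delta\phi(x_j)\le\int_{x_{j-1}}^{x_j}\phi$, and the single peak term is absorbed into $\delta\phi(x^\star)$, yielding
\[
\delta\sum_{j\ge 1}\phi(x_j)\ \le\ \delta\,\phi(x^\star)+\int_0^\infty\phi(x)\,dx .
\]
Now $\phi(x^\star)=\bigl(2/(\alpha N)\bigr)^2 e^{-2}=\mathcal{O}(N^{-2})$ and $\int_0^\infty x^4 e^{-\alpha N x^2}\,dx=\tfrac{3\sqrt\pi}{8}(\alpha N)^{-5/2}=\mathcal{O}(N^{-5/2})$, hence $\sum_j\phi(x_j)\le\mathcal{O}(N^{-2})+\tfrac1\delta\,\mathcal{O}(N^{-5/2})=\mathcal{O}(N^{-3/2})$. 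Combining with the first display gives $|e_3(N)|\le\tfrac{\sqrt N}{3}\,\mathcal{O}(N^{-3/2})=\mathcal{O}(1/N)$, which is the claim. (Alternatively one may run the Riemann-sum argument used for $e_2(N)$ verbatim, with $d_{\max}=\sup_{[0,1]}|\phi'|=\mathcal{O}\bigl((\alpha N)^{-3/2}\bigr)$, reaching the same conclusion.)

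The computation is routine; the only point requiring care is the bookkeeping of powers of $N$. One must keep the concentration of the Gaussian-type weight at scale $x\sim(\alpha N)^{-1/2}$ explicit throughout — bounding $e^{-\alpha N x_j^2}$ by $1$ would be far too lossy — and it is precisely the two additional powers of $x_j$ coming from $\sin^2 x - x^2=\mathcal{O}(x^4)$ (versus the $x^2$ factor in the corresponding step for $e_2(N)$) that convert the $\mathcal{O}(N^{-1/2})$-size contribution into the desired $\mathcal{O}(N^{-1})$.
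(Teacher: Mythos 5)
Your proof is correct and follows essentially the same route as the paper: bound $|\sin^2 x - x^2|\le \tfrac13 x^4$ (the paper gets this from the Lagrange form of the Taylor remainder, you from elementary convexity plus $\sin x\ge x-\tfrac{x^3}{6}$), then compare the sum $\sum_j x_j^4 e^{-\alpha N x_j^2}$ with the Gaussian moment $\int_0^\infty x^4 e^{-\alpha N x^2}\,dx = \mathcal{O}(N^{-5/2})$ to conclude $\mathcal{O}(1/N)$. The only cosmetic difference is the sum-to-integral step, where you use unimodality of the summand while the paper reuses its derivative-based Riemann-sum error bound; both are valid and give the same scaling.
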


\begin{proof}
Taylor expanding the sine yields
\begin{align}
\label{eq_Taylor_sin}
\sin^2(x) = x^2 - \frac{1}{3} \cos(2 \xi) x^4
\end{align}
for some $\xi \in [0,x]$. Plugging (\ref{eq_Taylor_sin}) into (\ref{eq_definition_replace_sin}) and using a Riemann sum bound akin to (\ref{eq_riemann1}) leads to
\begin{align}
e_3(N) &\leq \sqrt{N} \sum_{j=1}^{\floor{N/\pi}} e^{-\alpha N \left( \frac{\pi j}{N} \right)^2}  \left(\frac{\pi j}{N} \right)^4 \nonumber \\
&\leq N^{3/2} \pi^2 \int_0^1 x^4 e^{-\alpha N x^2 \pi^2} \text{d}x \nonumber \\
&\leq N^{3/2} \pi^2 \int_0^\infty x^4 e^{-\alpha N x^2 \pi^2} \text{d}x \nonumber \\
&\leq N^{3/2} \pi^2 \frac{3}{8} \sqrt{\pi} (\alpha \pi^2 N)^{-5/2} \nonumber \\
&= \mathcal{O}\left( \frac{1}{N} \right)
\end{align}
\end{proof}

\begin{definition}
\begin{equation}
\label{eq_definition_remainder}
r(N):= \sqrt{N} \sum_{j=1}^{\floor{N/\pi}} x_j^2 e^{-\alpha N x_j^2}
\end{equation}
\end{definition}

\begin{claim} \textbf{(Computation of the integral)}
\begin{equation}
r(N): = \frac{\sqrt{\pi}}{4\alpha^{3/2}} + \mathcal{O}\left( \frac{1}{N} \right)
\end{equation}
\end{claim}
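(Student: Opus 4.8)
The plan is to read $r(N)$ as a Riemann sum for a Gaussian moment integral with a mesh that shrinks like $N^{-1/2}$, and then control the three sources of error: truncation of the range, discretization, and the passage from $\varphi(0)=0$ to a second-order quadrature estimate. Concretely, I would first make the substitution $u_j:=\sqrt{N}\,x_j$, which absorbs all the $N$-dependence of the integrand: since $x_j^2\,e^{-\alpha N x_j^2}=N^{-1}u_j^2\,e^{-\alpha u_j^2}$, one gets $r(N)=N^{-1/2}\sum_j u_j^2 e^{-\alpha u_j^2}$, and the points $u_j$ form an equally spaced mesh of step $\Delta u=O(N^{-1/2})$ running from $\Delta u$ out to a value growing with $N$. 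Thus, up to an explicit constant prefactor, $r(N)$ is the left Riemann sum $\Delta u\sum_{j\ge 1}\varphi(u_j)$ of the \emph{fixed}, smooth, rapidly decaying function $\varphi(u)=u^2 e^{-\alpha u^2}$, independent of $N$.

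Next I would dispose of the tail: the terms with $u_j$ beyond the cutoff have $u_j\gtrsim c\sqrt N$, so $\varphi(u_j)\lesssim N e^{-\alpha' N}$ and the truncated tail is super-exponentially small, hence absorbed into the $\mathcal O(1/N)$ error (this is the same mechanism as in the bound on $e_1(N)$). Then comes the main step, the discretization error. Here the key observation is that $\varphi(0)=0$: because of this, the left Riemann sum starting at $j=1$ coincides exactly with the trapezoidal rule on $[0,\infty)$, whose error is $O\big(\Delta u^2\int_0^\infty|\varphi''|\big)=O(1/N)$; one may sharpen further by invoking Euler–Maclaurin and noting that $\varphi'(0)=0$ as well, which kills the leading boundary correction and actually leaves an $O(N^{-3/2})$ remainder, but $O(1/N)$ already suffices. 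In practice this can be written in the same style as the Riemann-sum estimates already used for $e_2(N)$ and $e_3(N)$ in this appendix, bounding $\max|\varphi'|$ on each panel and summing. Finally I would evaluate the limiting integral by differentiating the Gaussian: $\int_0^\infty u^2 e^{-\alpha u^2}\,du=-\tfrac{d}{d\alpha}\!\int_0^\infty e^{-\alpha u^2}\,du=-\tfrac{d}{d\alpha}\big(\tfrac12\sqrt{\pi/\alpha}\big)=\tfrac{\sqrt\pi}{4\alpha^{3/2}}$, and collect the rescaling constants to obtain $r(N)=\tfrac{\sqrt\pi}{4\alpha^{3/2}}+\mathcal O(1/N)$.

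The hard part will be the discretization estimate: for a generic function a left Riemann sum only converges at rate $O(\Delta u)=O(N^{-1/2})$, which is strictly weaker than the claimed $\mathcal O(1/N)$, so the proof genuinely relies on the structural fact that $\varphi$ and $\varphi'$ vanish at the origin (equivalently, on using a second-order/trapezoidal estimate rather than a crude first-order Riemann bound). Everything else — the substitution, the exponentially small tail, and the Gaussian integral — is routine and of the same flavour as the $e_1,e_2,e_3$ claims already established above.
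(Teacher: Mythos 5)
Your proposal is correct and follows the same overall strategy as the paper (interpret $r(N)$ as a Riemann sum, discard an exponentially small tail, evaluate the resulting Gaussian moment), but your treatment of the discretization error is genuinely sharper than the paper's, and in fact it is the step that actually makes the claimed rate work. The paper factors out $\sqrt{N}\,\floor{N/\pi}$ and invokes its generic first-order bound of the form $\bigl|\tfrac{1}{n}\sum_j f(x_j)-\int_0^1 f\bigr|\le d_{\max}/(2n)$ with $f(x)=\pi^2x^2e^{-\alpha N\pi^2x^2}$; but here $d_{\max}=\max_{[0,1]}|f'|=\mathcal{O}(N^{-1/2})$ (the maximum sits at $x\sim N^{-1/2}$), so that bound controls the discretization error only to $\mathcal{O}(N^{-3/2})$, which after multiplication by the $\mathcal{O}(N^{3/2})$ prefactor yields $\mathcal{O}(1)$ rather than $\mathcal{O}(1/N)$. (The same bound did suffice for $e_2$ and $e_3$ because the integrands there carried higher powers of $x$ and only an upper bound of order $1/N$ was needed, not an asymptotic to precision $1/N$.) Your rescaling to the fixed function $\varphi(u)=u^2e^{-\alpha u^2}$ with mesh $\Delta u=\mathcal{O}(N^{-1/2})$, followed by the trapezoidal/Euler--Maclaurin estimate exploiting $\varphi(0)=\varphi'(0)=0$ and the rapid decay at the upper endpoint, is exactly the second-order argument needed to get the error down to $\mathcal{O}(\Delta u^2\int|\varphi''|)=\mathcal{O}(1/N)$; so what you flag as "the hard part" is precisely the point at which the paper's written argument is too loose.

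One caveat: you defer "collecting the rescaling constants," and that is where you should be explicit, because the bookkeeping is delicate. In your $u$-variable one has $r(N)=\pi^{-1}\,\Delta u\sum_j\varphi(u_j)$ with $\Delta u=\pi/\sqrt{N}$, so the limit you produce is $\pi^{-1}\int_0^\infty u^2e^{-\alpha u^2}\,du=\tfrac{1}{4\sqrt{\pi}\,\alpha^{3/2}}$; check carefully whether this reproduces the constant stated in the claim (the paper's own final line, $\sqrt{N}\,\floor{N/\pi}\,\pi^2\cdot\tfrac{\sqrt{\pi}}{4(\alpha N\pi^2)^{3/2}}$, does not simplify to $\tfrac{\sqrt{\pi}}{4\alpha^{3/2}}$ either, so the powers of $\pi$ deserve an independent verification rather than being quoted).
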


\begin{proof}
The usual bound for the Riemann sum (\ref{eq_riemann1}) implies
\begin{align}
r(N) &= \sqrt{N} \floor{N/\pi} \frac{1}{\floor{N/\pi}} \sum_{j=1}^{\floor{N/\pi}} x_j^2 e^{-\alpha N x_j^2} \nonumber \\
&= \sqrt{N} \floor{N/\pi} \pi^2 \int_0^1 x^2 e^{-\alpha N x^2 \pi^2} \text{d}x + \mathcal{O}\left( \frac{1}{N} \right)
\end{align}
We can extend the integral to infinity by noting that $x^2 \leq x e^{x^2}$:
\begin{align}
\int_1^\infty x^2 e^{-\alpha N x^2} \text{d}x &\leq \int_1^\infty x e^{1-\alpha N x^2} \nonumber \\
&=  \frac{e^{-\alpha N + 1}}{2(\alpha N - 1)} \nonumber \\
&= \mathcal{O}(e^{-N}),
\end{align}
implying that
\begin{align}
r(N) &= \sqrt{N} \floor{N/\pi} \pi^2 \int_0^\infty x^2 e^{-\alpha N x^2 \pi^2} \text{d}x + \mathcal{O}\left( \frac{1}{N} \right) \nonumber\\
&= \sqrt{N} \floor{N/\pi} \pi^2 \frac{\sqrt{\pi}}{4(\alpha N \pi^2)^{3/2}} + \mathcal{O}\left( \frac{1}{N} \right) \nonumber\\
&= \frac{\sqrt{\pi}}{4\alpha^{3/2}} + \mathcal{O}\left( \frac{1}{N} \right)
\end{align}
\end{proof}

\begin{corollary}
\begin{equation}
g(N_h,\alpha) = \frac{\sqrt{\pi}}{2\alpha^{3/2}} + \mathcal{O}\left( \frac{1}{N_h} \right)
\end{equation}
\end{corollary}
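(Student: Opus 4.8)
The plan is to read off the asymptotics of $g(N_h,\alpha)$ by assembling the four approximation steps already isolated above in the definitions of $e_1$, $e_2$, $e_3$ and $r$; the corollary is essentially their sum. The starting point is the symmetrized form $g(N_h,\alpha)=2\sqrt{N_h}\sum_{j=1}^{(N_h+2)/2}\sin^2\!\big(\tfrac{\pi j}{N_h+2}\big)\cos^{2\alpha N_h}\!\big(\tfrac{\pi j}{N_h+2}\big)$ obtained above from the mirror symmetry of the summand about $j=(N_h+2)/2$. It is harmless to pass to the modulus $N_h$ in the trigonometric arguments: one may instead apply the claims $e_1,e_2,e_3$ and the evaluation of $r$ verbatim with generic modulus $N_h+2$ and aspect ratio $\alpha N_h/(N_h+2)=\alpha+O(1/N_h)$, all constants being uniform, and absorb the resulting $O(1/N_h)$ corrections (including $\sqrt{N_h}/\sqrt{N_h+2}=1+O(1/N_h)$) into the error term. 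So it suffices to estimate $\widetilde g:=2\sqrt{N_h}\sum_{j=1}^{N_h/2}\sin^2(\pi j/N_h)\cos^{2\alpha N_h}(\pi j/N_h)$.

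Next I would peel off the three errors in the order in which they were prepared. \textbf{(Truncation.)} Since $\cos^{2\alpha N_h}(\pi j/N_h)\le 2^{-\alpha N_h}$ whenever $\pi j/N_h\ge\pi/4$, dropping the terms with $j\ge\lfloor N_h/\pi\rfloor$ costs $2e_1(N_h)=O(N_h^{3/2}2^{-\alpha N_h})=O(1/N_h)$. \textbf{(Gaussianization.)} On $\pi j/N_h\le 1$ one has $0<1-\tfrac12 x^2\le\cos x$, so $\cos^{2\alpha N_h}$ lies between $(1-\tfrac12 x^2)^{2\alpha N_h}$ and $e^{-\alpha N_h x^2}$; bounding their difference by the elementary inequalities of the $e_2$-claim and comparing the resulting sum with a Gaussian integral gives $2e_2(N_h)=O(1/N_h)$, so $\cos^{2\alpha N_h}(\pi j/N_h)$ may be replaced by $e^{-\alpha N_h(\pi j/N_h)^2}$. \textbf{(Linearizing the sine.)} Taylor's theorem, $\sin^2 x=x^2-\tfrac13\cos(2\xi)x^4$, lets one replace $\sin^2(\pi j/N_h)$ by $(\pi j/N_h)^2$ at the cost $2e_3(N_h)=O(1/N_h)$, again by comparison with a Gaussian integral. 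What remains is exactly $2\,r(N_h)$ with $r(N)=\sqrt N\sum_{j=1}^{\lfloor N/\pi\rfloor}x_j^2 e^{-\alpha N x_j^2}$, $x_j=\pi j/N$.

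Finally I would invoke the last claim: reading $r(N)$ as a Riemann sum of $x^2 e^{-\alpha N x^2}$ on $[0,1]$, extending the integral to $[0,\infty)$ at an $O(e^{-N})$ cost, and using $\int_0^\infty x^2 e^{-ax^2}\,dx=\tfrac14\sqrt\pi\,a^{-3/2}$ gives $r(N)=\tfrac{\sqrt\pi}{4\alpha^{3/2}}+O(1/N)$; chaining the four estimates, $g(N_h,\alpha)=\widetilde g+O(1/N_h)=2\,r(N_h)+O(1/N_h)=\tfrac{\sqrt\pi}{2\alpha^{3/2}}+O(1/N_h)$. Since the genuine analytic work is already done in the auxiliary claims — in particular the cosine-to-Gaussian bound $e_2$, and the Riemann-sum error bounds of the form $d_{\max}/(2\lfloor N_h/\pi\rfloor)$ with $d_{\max}=c(\alpha N_h)^{-(k-1)/2}$ for the integrands $x^k e^{-\alpha N_h\pi^2 x^2}$ ($k=6,4,2$) — the proof of the corollary itself is pure bookkeeping: one only has to make sure that the passage to modulus $N_h$ and each of $e_1,e_2,e_3$ is \emph{individually} $O(1/N_h)$, so that, fed back through the $1/(1+N_h/2)$ prefactor into $f$ and then into $\mathcal N$, it reproduces the error term in (\ref{eq:asymptoticsNhNv}). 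I do not anticipate any real obstacle; the one point demanding a little care is verifying that $d_{\max}$ decays as stated, which a direct differentiation confirms.
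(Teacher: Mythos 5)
Your proposal is correct and follows essentially the same route as the paper, whose proof of this corollary is simply to chain together the four preceding claims ($e_1$, $e_2$, $e_3$ and the evaluation of $r$). You are somewhat more careful than the paper in tracking the modulus $N_h+2$ versus $N_h$ and the factor of $2$ from the symmetrization, but these are exactly the $\mathcal{O}(1/N_h)$ bookkeeping corrections the paper implicitly absorbs.
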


\begin{proof}
This follows directly from the four previous claims.
\end{proof}

\begin{corollary}
\begin{equation}
\label{eq:asymptoticFinal}
\mathcal{N}(\alpha, N) =  \frac{4^N}{N^{3/2}} \left[ k(\alpha) + \mathcal{O}\left(\frac{1}{N}\right) \right]
\end{equation}
\end{corollary}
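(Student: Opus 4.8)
The plan is to combine the exact identity (\ref{eq:exactForm2}),
\[
\mathcal{N}(N_h,\alpha) \;=\; f(N_h,\alpha) \;+\; f\!\left(N_h,\tfrac{\alpha}{\alpha-1}\right) \;-\; C_{\alpha N_h}\ ,
\]
with the asymptotics of the trigonometric sum $g$ just obtained, and with the classical Stirling asymptotics of the Catalan numbers $C_n=\tfrac{4^n}{\sqrt{\pi}\,n^{3/2}}(1+\mathcal{O}(1/n))$. I would estimate each of the three summands separately, to relative order $\mathcal{O}(1/N)$, and then collect the $\alpha$-dependent constants.

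The first step is to pass from $f$ to $g$. From the closed form in the first line of (\ref{eq:modifiedCatalan}) one has $f(N_h,\alpha)=\frac{4^N}{1+N_h/2}\cdot\frac{1}{\sqrt{N_h}}\,g(N_h,\alpha)$; expanding $\frac{1}{1+N_h/2}=\frac{2}{N_h}(1+\mathcal{O}(1/N_h))$ this becomes $f(N_h,\alpha)=\frac{2\cdot 4^N}{N_h^{3/2}}\,g(N_h,\alpha)\bigl(1+\mathcal{O}(1/N_h)\bigr)$. Now insert the preceding Corollary $g(N_h,\alpha)=\frac{\sqrt{\pi}}{2\alpha^{3/2}}+\mathcal{O}(1/N_h)$ and use that at a fixed aspect ratio $N_h$ is a fixed multiple of $N$ (so $N_h^{3/2}$ is a constant times $N^{3/2}$ and $1/N_h=\mathcal{O}(1/N)$): this yields $f(N_h,\alpha)=\frac{4^N}{N^{3/2}}\bigl[\kappa_1(\alpha)+\mathcal{O}(1/N)\bigr]$ for an explicit constant $\kappa_1(\alpha)$. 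The middle term is handled by exactly the same computation under the $90^\circ$-rotation substitution $\alpha\mapsto\alpha/(\alpha-1)$ (which interchanges $N_h$ and $N_v$, cf.\ the remark below (\ref{eq:exactForm2})), producing $\frac{4^N}{N^{3/2}}[\kappa_2(\alpha)+\mathcal{O}(1/N)]$ with $\kappa_2(\alpha)=\kappa_1(\alpha/(\alpha-1))$; and the last term contributes $-C_{\alpha N_h}=-C_N=-\frac{4^N}{N^{3/2}}(\pi^{-1/2}+\mathcal{O}(1/N))$. Summing the three and simplifying, the $\alpha$-dependent constants combine into $k(\alpha)=\kappa_1(\alpha)+\kappa_2(\alpha)-\pi^{-1/2}$, which a short computation shows equals $\frac{\sqrt{\pi}}{2}+\frac{\sqrt{\pi}}{2}(\alpha-1)^{3/2}-\pi^{-1/2}$; and since every approximation introduces only an additive error of size $\frac{4^N}{N^{3/2}}\cdot\mathcal{O}(1/N)$, these do not accumulate beyond the claimed order.

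The main obstacle here is the bookkeeping of the polynomial prefactors rather than any genuinely hard estimate. One has to be consistent about which aspect-ratio convention is in force at each step (the $\alpha-1=N_v/N_h$ convention of the main text versus the $\alpha N_v=N$ normalisation used for $f(N_h,\alpha)$), track that the $(1+N_h/2)^{-1}$ expansion together with the truncations and Gaussian/Riemann-sum approximations feeding into $g$ only contribute $\mathcal{O}(1/N)$-size terms inside the final bracket, and verify that $\kappa_1(\alpha)$ is in fact independent of $\alpha$ — the apparent $\alpha^{-3/2}$ being exactly cancelled by the ratio $N_h^{3/2}/N^{3/2}$ — so that the rotated copy $\kappa_2$ supplies the advertised $(\alpha-1)^{3/2}$ term. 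Once these are in place, the corollary follows at once from the decomposition (\ref{eq:exactForm2}).
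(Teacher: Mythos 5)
Your argument is essentially identical to the paper's proof: substitute the asymptotics of $g$ into the decomposition (\ref{eq:exactForm2}), use the Catalan/Stirling asymptotics for the last term, and convert $N_h^{-3/2}$ into $\alpha^{3/2}N^{-3/2}$ so that the $\alpha^{-3/2}$ coming from $g(N_h,\alpha)$ cancels in the first summand and yields $(\alpha-1)^{3/2}$ in the rotated one. The only difference is that you carry the prefactor $(1+N_h/2)^{-1}N_h^{-1/2}\sim 2N_h^{-3/2}$ explicitly, which is slightly more careful bookkeeping than the paper's $4^N N_h^{-3/2}\,g(\cdot)$ but does not change the structure of the argument.
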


\begin{proof}
For large $N$, we have
\begin{align}
\mathcal{N}(\alpha, N) &= 4^N \Bigg[N_h^{-3/2} g(N_h,\alpha) \nonumber \\ &+ N_h^{-3/2} g\left(N_h, \frac{\alpha}{\alpha-1} \right) - N^{-3/2} \pi^{-1/2} \Bigg] \nonumber \\
&= 4^N \Bigg[N_h^{-3/2} \frac{\sqrt{\pi}}{2\alpha^{3/2}} \nonumber \\ &+ N_h^{-3/2} \frac{\sqrt{\pi}}{2\frac{\alpha}{\alpha-1}^{3/2}} - N^{-3/2} \pi^{-1/2}\Bigg] \nonumber \\
&= \frac{4^N}{N^{3/2}} \Bigg[\frac{\sqrt{\pi}}{2} + \frac{\sqrt{\pi}}{2} (\alpha-1)^{3/2} \nonumber \\ & - \pi^{-1/2} + \mathcal{O}\left( \frac{1}{N}\right) \Bigg]
\end{align}
\end{proof}

\section{The matrix $\Psi_R$ in (\ref{eq_PsiQR}) is invertible}
\label{sec:Binvertible}
The matrix $\Psi_R: V_\text{matchings} \mapsto V_\text{loops}^B$ maps connectivity patterns on the boundary of a hole \textit{inside} the torus onto loop patterns on the complement of the hole. A priori, this map does not have to be invertible. We show here that for a torus much larger than the hole, $\Psi_R$ is invertible. More precisely, if the size of the rectangular hole is $L_h \times L_v$ and the torus is $N_h \times N_v$, then we require
\begin{equation}
\label{eq:11x9Step0}
\begin{aligned}
\min\{N_h, N_v \} > \frac{3}{2} (L_h + L_v)
\end{aligned}
\end{equation}

The kernel of $\Psi_R$ is non-empty if and only if for every connectivity pattern, there exists a loop pattern on $R$ that is compatible with it. The following procedure produces such a loop pattern for an arbitrary inside connectivity pattern. 

\begin{equation}
\label{eq:11x9Step1}
\begin{aligned}
\input{tikzpictures/11x9Step1}
\end{aligned}
\end{equation}
Since we work on the torus, we can draw the rectangle in the center of our lattice.
\begin{enumerate}
\item Close any nearest neighbours in a minimal way (as shown in the figures). Clearly, these cannot interfere with each other. This can be done within one tile from the hole.
\begin{equation}
\label{eq:11x9Step3}
\begin{aligned}
\input{tikzpictures/11x9Step3}
\end{aligned}
\end{equation}

\item Remove the connected pairs from the connectivity pattern. There must necessarily be at least one newly formed nearest neighbour pair. Connect these minimally, avoiding the bonds that are already closed. This can be done within two tiles from the hole.
\begin{equation}
\label{eq:11x9Step4}
\begin{aligned}
\input{tikzpictures/11x9Step4}
\end{aligned}
\end{equation}

\item Again remove the connected pairs from the connectivity pattern, creating new nearest neighbours. As long as there is enough space on the torus, these pairs can be closed minimally. For each nested bond, one more tile of space is needed
\begin{equation}
\label{eq:11x9Step6}
\begin{aligned}
\input{tikzpictures/11x9Step6}
\end{aligned}
\end{equation}
\item Fill the rest of the loop pattern arbitrarily
\end{enumerate}

For a hole of size $L_h \times L_v$, there can be at most $\ceil{\frac{L_h + L_v}{2}}$ nested bonds. Therefore, if  $\min\{N_h, N_v \} > \frac{3}{2} (L_h + L_v)$, then there is enough space in every direction for the above procedure to generate a compatible loop pattern.

\section{Uniqueness of the ground state within a given connectivity class}
\label{appendix:uniqueness}

Having defined the canonical loop pattern of a connectivity pattern, we can now prove that the Hamiltonian defined by (\ref{eq:parentH}) is ergodic in the sense that for every two loop patterns $L$ and $L'$ in the same connectivity class, there exists a sequence of surgery moves such that
\begin{equation}
\begin{aligned}
L' =  \sigma_M(\sigma_{M-1}(\dots \sigma_1(L))\dots)
\end{aligned}
\end{equation}

The algorithm to arrive at the canonical loop pattern from an arbitrary starting pattern from surgery moves only contains three steps:
\begin{enumerate}
\item Tadpoles and larger bubbles are cut off.
\item All paths are consecutively made as short as possible. Any path with non-minimal length must necessarily contain both vertical and horizontal \textit{bay-type} plaquettes (Fig. \ref{fig:canonical}b). This pair must necessarily contain a loop in their inside. The loop can be moved through the bay by three consecutive surgery moves. If the bays had previously been adjacent, the path is now shorter, otherwise the bays are now closer together. Therefore, any path can be made as short as possible. Note that any surgery move only acts on one path plus a surrounding loop so previously shortened paths will always stay shortest during the application of further elementary moves in this step.
\item Every path now exclusively consists of \textit{up}- and \textit{down}-moves, the order of which may still differ from the canonical loop pattern, i.e. the path might not run as close as possible to the north-west boundary of the patch (Fig. \ref{fig:canonical}c). For the pattern to be compatible with the same boundary matching, the area between the current and the desired trajectory for any given path must be filled with small bubbles. We are finished after moving all the bubbles through the appropriate bays.
\end{enumerate}

\begin{figure}[t]
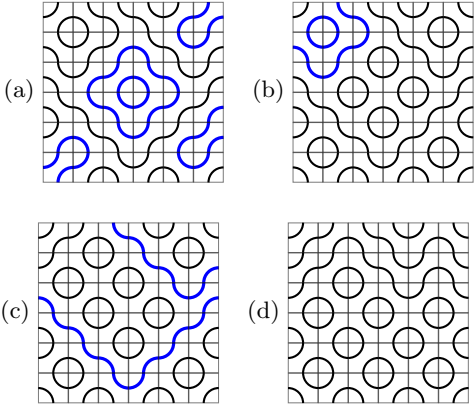

	(a) \input{tikzpictures/canonical0color} \, (b) \input{tikzpictures/canonical1color} \\ \vspace{5mm}
	(c) \input{tikzpictures/canonical2color} \, (d) \input{tikzpictures/canonical3color} 
	\caption{Bringing a given loop pattern (a) into the canonical pattern with the same boundary matching (d). In step 1, tadpoles and larger bubbles are cut off into small bubbles (b). Every path is shortened as much as possible in step 2 (c). The remaining ambiguity is the trajectory of longer paths. For the sake of uniqueness, they are moved as close as possible to the north-west boundary (d).}
	\label{fig:canonical}
\end{figure}

\section{Mapping to the Classical Delta Potts model}
\label{sec:PottsModel}

In this section, we will prove that certain correlation functions in the PEPS decay exponentially for $\lambda = 1$ (\ref{eq:definition}), by establishing a mapping to an observable in a classical model. We will start out with the case $u=\braket{0|1}_p = 0$ and comment on the general case later on.
For better readability, we denote the state at $\lambda = 1, u=0$ on a torus of size $N_h \times N_v$ with both $N_h$ and $N_v$ even by
\begin{equation}
\label{eq_}
\ket{\psi} := \ket{\psi_{N_h \times N_v} (u=0,\lambda=1)} = \sum_{\text{loop patterns } L} 2^{n_L} \ket{L}
\end{equation}

\begin{definition}
\begin{align}
\label{def_corrFunc}
\tilde{\sigma}_z(\vec{x}) := \begin{cases} \sigma_z(\vec{x}) &\mbox{if } \vec{x} \mbox{ is on the even sublattice} \\ 
-\sigma_z(\vec{x}) & \mbox{if } \vec{x} \mbox{ is on the odd sublattice} \end{cases}
\end{align}
A plaquette $\vec{x} = (x_1, x_2)$ is on the even (odd) sublattice if $x_1 + x_2$ is even (odd). The top left plaquette has coordinates (1,1).
\begin{align}
C[\vec{x}, \vec{y}] := \frac{\braket{\psi | \tilde{\sigma}_z(\vec{x}) \tilde{\sigma}_z(\vec{y})|\psi}}{\braket{\psi|\psi}} - \frac{\braket{\psi | \tilde{\sigma}_z(\vec{x})|\psi}\braket{\psi | \tilde{\sigma}_z(\vec{y}) | \psi}}{\braket{\psi|\psi}^2}
\end{align}
\end{definition}

\begin{claim}
\begin{align}
\label{eq_claim}
C[\vec{x}, \vec{y}] \xrightarrow{|\vec{x}-\vec{y}| \rightarrow \infty} e^{\frac{|\vec{x}-\vec{y}|}{\xi}}
\end{align}
for some $\xi > 0$.
\end{claim}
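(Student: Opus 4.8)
The plan is to reduce the correlation function $C[\vec x,\vec y]$ to a classical statistical-mechanics observable via the loop picture of Eq.~(\ref{eq_quantumLoopState}), and then invoke known results on the Potts model phase transition. First I would observe that, since $\langle0|1\rangle_p=0$ and $\lambda=1$, we have $\braket{\psi|\psi}=\sum_L 4^{n_L}=\sum_L Q^{n_L}$ with $Q=4$ for a loop model on the torus where each loop carries weight $\sqrt Q$ — wait, more carefully: $\braket{\psi|\psi}=\sum_L 2^{n_L}2^{n_L}=\sum_L 4^{n_L}$, so this is the partition function of a fully-packed loop model with loop fugacity $n=Q=4$. Actually the relevant mapping (detailed in Appendix~\ref{sec:PottsModel}, whose statement we are proving) is to the $Q=16$ Potts model; the doubling of the bra and ket sectors is what turns a loop fugacity of $\sqrt{Q}$ into $Q$, and the fully-packed loop model on the square lattice with fugacity $\sqrt Q$ per loop is exactly the Fortuin--Kasteleyn / random-cluster representation of the $Q$-state Potts model at its self-dual point, which is its order/disorder transition. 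So $\braket{\psi|\psi}\propto Z_{\mathrm{Potts}}(Q=16,\beta=\beta_c)$.

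The next step is to identify $\tilde\sigma_z(\vec x)$ in the loop language. Since $\tilde\sigma_z$ is diagonal in the $\ket{L}$ basis, $\braket{\psi|\tilde\sigma_z(\vec x)\tilde\sigma_z(\vec y)|\psi}=\sum_L 4^{n_L}\,s(\vec x;L)\,s(\vec y;L)$ where $s(\vec x;L)=\pm1$ records whether the plaquette $\vec x$ carries a $\ket0$- or $\ket1$-tile, with the sublattice-dependent sign built into $\tilde\sigma_z$. The key combinatorial fact I would establish is that $s(\vec x;L)$ is a local function of the loop configuration — concretely, it is determined by which of the two crossing patterns sits on plaquette $\vec x$ — and that after the staggering sign, this becomes precisely (up to an overall sign/normalization) one of the standard ``electric'' or domain-wall observables of the FK-Potts model: e.g.\ an indicator of whether the two FK clusters meeting at a vertex are the same cluster, or an energy-like operator. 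The staggering is exactly what is needed to make this a well-defined diagonal observable invariant under the gauge $Y$-conjugation. Then $C[\vec x,\vec y]$ is the connected two-point function of this local observable in $Z_{\mathrm{Potts}}(Q=16,\beta_c)$ on the torus.

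Finally I would invoke the rigorous result that the $Q$-state Potts model with $Q$ large (and $16>4$ certainly qualifies, cf.\ Refs.~\cite{PottsFirstOrder, Beffara--Duminil-Copin and the classical Baxter bound $Q>4$) undergoes a \emph{first-order} transition at $\beta_c$, at which the correlation length is \emph{finite}: both phases coexist but each has exponentially decaying truncated correlations, and the correlation length at the transition point is bounded. Hence any local diagonal observable has connected two-point function bounded by $e^{-|\vec x-\vec y|/\xi}$ for some finite $\xi>0$, which is the claim. (I would note the inequality direction: the claim as written, $C\to e^{|\vec x-\vec y|/\xi}$, should read $C \sim e^{-|\vec x-\vec y|/\xi}$ or $\le e^{-|\vec x-\vec y|/\xi}$; I would prove the upper bound.) For the general case $u=\braket{0|1}_p\neq0$ I would remark that projecting onto the dual basis changes the local weights of the tiles but preserves locality of the diagonal observable and, for $\lambda=1$, still yields a (possibly complex- or signed-weight) classical model to which the same exponential-clustering argument applies for the \emph{diagonal} observables; this is the caveat already flagged in the conclusions that only diagonal observables are controlled.

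The main obstacle I expect is the second step: pinning down exactly which FK-Potts observable $\tilde\sigma_z(\vec x)$ corresponds to, and in particular verifying that the staggering sign is precisely the one that makes the correspondence clean rather than introducing an oscillating factor that would spoil the translation-invariant exponential decay. One has to be careful about how the two tile types on even versus odd sublattices map under the gauge transformation relating $A$ and $\tilde A$, and to check that the resulting observable is genuinely local and bounded (so that the Potts clustering bound applies without modification). Once that dictionary is fixed, the exponential decay is an immediate consequence of the first-order nature of the $Q=16$ transition, which is a cited rigorous input rather than something to be proved here.
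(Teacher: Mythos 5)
Your proposal follows essentially the same route as the paper: expand $\braket{\psi|\psi}$ as a Fortuin--Kasteleyn sum to identify it with the $Q=16$ Potts partition function at the self-dual point (loop fugacity $4=\sqrt{Q}$ per closed loop), realize $\tilde{\sigma}_z(\vec{x})$ as a local link observable on the two Potts spins adjacent to the plaquette, and invoke the first-order nature of the transition for $Q>4$ to conclude exponential clustering (and you are right that the exponent in the claim should carry a minus sign). The one step you flag as the main obstacle is resolved in the paper by the explicit choice $O_{\vec{x}}=1$ if the two adjacent spins agree and $O_{\vec{x}}=(1+Q)/(1-Q)$ otherwise, tuned precisely so that the FK expansion of $\braket{O_{\vec{x}}}$ yields $\pm 1$ according to whether the subgraph carries a bond at $\vec{x}$, i.e.\ the tile type there.
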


\begin{proof}
Consider a classical Q-state Potts model with spins residing on the vertices of the \textit{net lattice}. The net lattice is a square lattice rotated by $45^\circ$ where the distance between the vertices is increased by a factor $\sqrt{2}$ (the vertices are marked with green dots in figure \ref{fig:pottsPhases}). The classical spins take values $\sigma \in \{1, \dots, Q\}$. The Hamiltonian of the model is given by
\begin{equation}
\label{eq:pottsHamiltonian}
H = - \sum_{<ij>} \delta(\sigma_i, \sigma_j)
\end{equation}
where $<ij>$ indicates nearest neighbours on the net lattice. For a plaquette of the original square lattice located at $\vec{x}$, define by $\vec{x}_a$ and $\vec{x}_b$ the two spins adjacent to that plaquette (the order will not matter for our purposes).
Define the following ``link" observable in the Potts model that acts on two spins
\begin{align}
O_{\vec{x}}(\{ \sigma \}) :=  \begin{cases} 1 &\mbox{if } \sigma_{\vec{x}_a} = \sigma_{\vec{x}_b} \\
\frac{1+Q}{1-Q} &\mbox{if } \sigma_{\vec{x}_a} \neq \sigma_{\vec{x}_b} \end{cases}
\end{align}
Consider the expectation value of $O_{\vec{x}}$ in such a Potts model at inverse temperature $\beta$:
\begin{align}
\braket{O_{\vec{x}}} &= \frac{1}{Z} \sum_{ \{ \sigma \}} O_{\vec{x}}(\{ \sigma \}) \prod_{<ij>}e^{-\beta \delta(\sigma_i, \sigma_j)} \\
&= \frac{1}{Z} \sum_{ \{ \sigma \}} O_{\vec{x}}(\{ \sigma \}) \prod_{<ij>}[1 + v \delta(\sigma_i, \sigma_j)],
\end{align}
where $v = e^\beta-1$. We now expand the product in the spirit of the Fortuin-Kasteleyn expansion \cite{Wu,FortuinKasteleyn}, yielding $2^E$ terms, where $E$ is the number of edges of the net lattice.
\begin{align}
... &= \frac{1}{Z} \Big( \underbrace{\sum_{ \{ \sigma \}} O_{\vec{x}}(\{ \sigma \})}_{\input{tikzpictures/FKexpansion1}} + v \underbrace{\sum_{ \{ \sigma \}} O_{\vec{x}}(\{ \sigma \}) \delta(\sigma_1, \sigma_2)}_{\input{tikzpictures/FKexpansion2}} + \dots \Big) 
\end{align}

Here we have associated subgraphs $G'$ of the net lattice $G$ to each of the terms in the expansion, where $G'$ has an edge between $i$ and $j$ if the expansion term contains $\delta(\sigma_i, \sigma_j)$.
Let us investigate each sum individually. The first sum runs over $Q^4$ configurations. In $Q^3$ of those, $\sigma_{\vec{x}_a} = \sigma_{\vec{x}_b}$, implying $O_{\vec{x}}(\{ \sigma \})=1$. In the other $Q^3(Q-1)$ terms, $\sigma_{\vec{x}_a} \neq \sigma_{\vec{x}_b}$ and $O_{\vec{x}}(\{ \sigma \})=(1+Q)/(1-Q)$. Therefore, the sum evaluates to
\begin{align}
\sum_{ \{ \sigma \}} O_{\vec{x}}(\{ \sigma \}) = Q^3 + Q^3(Q-1)\frac{1+Q}{1-Q} = -Q^4
\end{align}
The second sum contains $Q^3$ configurations and because there is a $\delta$-function between spins 1 and 2, $\sigma_{\vec{x}_a} = \sigma_{\vec{x}_b}$ in all of them, 
\begin{align}
\sum_{ \{ \sigma \}} O_{\vec{x}}(\{ \sigma \}) \delta(\sigma_1, \sigma_2) = Q^3
\end{align}
Adding all these of contributions yields
\begin{align}
\label{eq_expValueG}
\braket{O_{\vec{x}}} = \frac{1}{Z} \sum_{G' \subseteq G} Q^{n(G')} v^{b(G')} \tilde{O}_{\vec{x}} (G')
\end{align}
where $n(G')$ is the number of connected components in $G'$, $b(G')$ is the number of bonds and 
\begin{align}
\tilde{O}_{\vec{x}} (G') = \begin{cases} 1 &\mbox{if } G' \mbox{ has a link at } \vec{x} \\ -1 &\mbox{ otherwise} \end{cases}
\end{align}
Each subgraph $G'$ of the net lattice can be associated to a unique loop pattern $L(G')$ on the square lattice (figure \ref{fig:pottsPhases}, such that for the number of closed loops we have
\begin{align}
\label{eq_loopOfG}
n_{L(G')} &= n(G') + c(G') \quad \text{and} \\
\tilde{O}_{\vec{x}} (G') &= \braket{L(G') | \tilde{\sigma}_z(\vec{x}) | L(G')},
\end{align}
where $c(G')$ is the number of circuits in $G'$.
Plugging Euler's relation 
\begin{align}
n(G') = c(G') - b(G') - V,
\end{align}
with $V$ the number of vertices in $G$ and (\ref{eq_loopOfG}) into (\ref{eq_expValueG}) yields
\begin{align}
\braket{O_{\vec{x}}} &= \frac{1}{Z} \sum_{G' \subseteq G} \sqrt{Q}^{n(G')}  \sqrt{Q}^{n(G')} v^{b(G')} \tilde{O}_{\vec{x}} (G') \nonumber \\
&= \frac{\sqrt{Q}^{-V}}{Z}  \sum_{G' \subseteq G} \sqrt{Q}^{n(G')+c(G')} \left(\frac{v}{\sqrt{Q}}\right)^{b(G')} \tilde{O}_{\vec{x}} (G') \nonumber \\
&=  \frac{\sum_{L} \sqrt{Q}^{n_L} \left(\frac{v}{\sqrt{Q}}\right)^{b(G')} \braket{L | \tilde{\sigma}_z(\vec{x}) | L}}{\sum_{L} \sqrt{Q}^{n_L} \left(\frac{v}{\sqrt{Q}}\right)^{b(G')}} \nonumber \\
&\xrightarrow{\beta \rightarrow \log(1+\sqrt{Q}) } \frac{\sum_{L} \sqrt{Q}^{n_L} \braket{L | \tilde{\sigma}_z(\vec{x}) | L}}{\sum_{L} \sqrt{Q}^{n_L}} \nonumber \\
&\xrightarrow{Q \rightarrow 16 } \frac{\braket{\psi | \tilde{\sigma}_z(\vec{x}) | \psi}}{\braket{\psi | \psi}}
\end{align}
This argument can be repeated for the correlator
\begin{align}
\braket{O_{\vec{x}} O_{\vec{y}}} = \frac{\braket{\psi | \tilde{\sigma}_z(\vec{x}) \tilde{\sigma}_z(\vec{y}) | \psi}}{\braket{\psi | \psi}},
\end{align}
thereby showing that
\begin{align}
\label{eq_correlatorEqPotts}
C[\vec{x}, \vec{y}] &= \braket{O_{\vec{x}} O_{\vec{y}}} - \braket{O_{\vec{x}}} \braket{O_{\vec{y}}},
\end{align}
i.e. the staggered $\sigma_z$ correlation function in the $\lambda =1$-PEPS equal to the link-link correlation of a classical $Q=16$ Potts model at $\beta = \log(1+\sqrt{Q})$. The model is known to undergo a phase transitions at that point for all values of $Q$. While this transition is critical for $Q \leq 4$ \cite{sqrt2barrier, Fendley1, Nienhuis, Qgroup1,Qgroup2,Qgroup3,Qgroup4,Qgroup5,FibonacciOrderFromNets,sqrt2barrier}, it is of first order for $Q>4$ \cite{PottsFirstOrder, PottsSecondOrder}, implying that the local correlator (\ref{eq_correlatorEqPotts}) decays exponentially, proving (\ref{eq_claim}).
 \end{proof}

A more general, alternative proof invokes the mapping between the norm of the PEPS to a Potts partition function. To this end, define the tensor network using tensors (\ref{eq:definition}) with independent variables on every site, i.e.:
\begin{align}
\ket{\psi(\lambda_{(1,1)}, \lambda_{(1,2)}, \dots, \lambda_{(N_h, N_v)})} =: \ket{\psi(\vec{\lambda})},
\end{align}
Taking derivatives with respect to different $\lambda$ will yield the expectation value of some local diagonal operator acting on e.g., one site, $D_\text{PEPS}(\vec{x})$
\begin{align}
\frac{\partial }{\partial \lambda_{\vec{x}}} \log \braket{\psi(\vec{\lambda}) | \psi(\vec{\lambda})} = \frac{\braket{\psi(\vec{\lambda}) | D_\text{PEPS}(\vec{x}) | \psi(\vec{\lambda})}}{\braket{\psi|\psi}}
\end{align}
Introducing the effective coupling strengths $\vec{\beta}$ via
\begin{align}
\lambda_{\vec{x}}  =  \begin{cases} \sqrt{\frac{e^{\beta_{\vec{x}}}-1}{\sqrt{Q}}} &\mbox{if } \vec{x} \mbox{ is on the even sublattice }  \\
 \sqrt{\frac{\sqrt{Q}}{e^{\beta_{\vec{x}}}-1}} &\mbox{if } \vec{x} \mbox{ is on the odd sublattice } \end{cases},
 \label{eq_pottsCouplings}
\end{align}
one can directly calculate that
\begin{align}
 \braket{\psi(\vec{\lambda}) | \psi(\vec{\lambda})} = C(\vec{\beta}) \underbrace{\sum_L \sqrt{Q}^{n_L} \prod_{\vec{x}} \left( \frac{e^{\beta_{\vec{x}}-1}}{\sqrt{Q}} \right)^{b_{\vec{x}}(L)}}_{Z_\text{inhom Potts}},
 \end{align}
where $b_{\vec{x}}(L)$ is 1 if $G'(L)$ has a bond at $\vec{x}$ and 0 otherwise. Here, $Z_\text{inhom Potts}$ is the partition function of a Potts model with different effective couplings between every pair of spins, given by (\ref{eq_pottsCouplings}). The constant is given by
\begin{align}
C(\vec{\beta}) = \prod_{\vec{x} \text{ odd}} \left( \frac{e^{\beta_{\vec{x}}}-1}{\sqrt{Q}} \right)
\end{align}
Therefore,
\begin{align}
\frac{\braket{\psi(\vec{\lambda}) | D_\text{PEPS}(\vec{x}) | \psi(\vec{\lambda})}}{\braket{\psi|\psi}} &= \frac{\partial \beta_{\vec{x}}}{\partial \lambda_{\vec{x}}} \frac{\partial }{\partial \beta_{\vec{x}}} \log \left( C(\vec{\beta}) Z_\text{inhom Potts} \right) \\
\end{align}
As usual, taking logarithmic derivatives of the partition function will yield some classical observable $D_\text{Potts}(\vec{x})$:
\begin{align}
... &= \braket{D_\text{Potts}(\vec{x})}
\end{align}
In particular, the point $\vec{\lambda} = \vec{1}$ corresponds to the original Potts model at its phase transition with all coupling strengths equal.
\begin{align}
\braket{D_\text{PEPS} (\vec{x})}_{\lambda=1} = \braket{D_\text{Potts}(\vec{x})}_{\beta = 1 + \sqrt{Q}},
\end{align}
Taking higher derivatives yields three-point and higher order correlators. In our case, $Q=16$ and all such operators decays exponentially even at the phase transition. This argument can be expanded by linearity to conclude that all diagonal correlators of the PEPS must decay exponentially.
 
Finally, for $u \neq 0$, the mapping has to be carried out with respect to two \textit{coupled} Potts models, whose phase diagram is also known \cite{coupledPotts1, coupledPotts2, coupledPotts3}. As the nature of the phase transition remains unchanged, we expect the correlation function to behave in the same manner as derived above.

\begin{figure}[t]
	(a) \input{tikzpictures/pottsOrdered} \, (b) \input{tikzpictures/pottsDisordered} \\ \vspace{5mm}
	(c) \input{tikzpictures/PT1Order} \, (d) \input{tikzpictures/PT2Order}
	\caption{Typical configurations in the Fortuin-Kasteleyn expansion
of the partition function of the Potts model. The green lines correspond
to the clusters in the expansion. Each cluster configuration is associated
with a unique loop pattern.	(a) A typical configuration of the
Potts model in the ordered phase, (b) a typical configuration in the
disordered phase, (c) the Potts model at the phase transition point for $Q
> 4$, (d) for values $Q \leq 4$. Only for the latter, loops of all length
scales occur, whereas the bounded loop length in all other cases
corresponds to a finite correlation length. }
	\label{fig:pottsPhases}
\end{figure}

\section{Open boundary conditions and unique ground state}
\label{app:obc-unique}
In section \ref{sec_OBCMain} we prove that the ground state of a modified parent Hamiltonian is unique. The key step in the proof is the fact that the minimal connectivity pattern can be reached from arbitrary starting loop patterns.

\begin{claim}
For every loop pattern $L$, there exists a sequence $\Sigma$ of bulk moves (\ref{eq:randomWalk}) and boundary moves (\ref{eq:randomWalk2}), such that
\begin{equation}
p(\Sigma(L)) = p_\text{min},
\end{equation}
where $p_\text{min}$ is given by (\ref{eq_pMinimal}).
\end{claim}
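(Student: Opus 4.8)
The plan is to run a descent (monotonicity) argument on the connectivity pattern. First I would fix a non-negative integer potential $\Phi(p)$ on connectivity patterns that is minimised \emph{uniquely} at $p_\text{min}$. The natural choice is the total flow
\[
\Phi(p)=\sum_{i=1}^{N_h-1}\textit{Flow}(p,i,\textit{vert})+\sum_{j=1}^{N_v-1}\textit{Flow}(p,j,\textit{hor})
\]
or, equivalently, the number of non-bubble tiles of the canonical loop pattern $L_0(p)$ from Appendix~\ref{app:catalanNumber}. One must check that $p_\text{min}$, Eq.~(\ref{eq_pMinimal}), realises the minimum of $\Phi$ and is the only pattern that does; this is a short combinatorial fact about non-crossing matchings on the boundary cycle of a rectangle: in $p_\text{min}$ every pair of boundary legs that can be capped by a short arc without creating a crossing is so capped, and the only flow that survives is the amount forced by parity, which is exactly what $p_\text{min}$ carries. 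Since bulk moves~(\ref{eq:randomWalk}) leave $p$ invariant (Sec.~\ref{sec:intersection-property}) while the boundary move~(\ref{eq:randomWalk2}) is the only one able to change the connectivity class, and since within a fixed class bulk moves act ergodically (again Sec.~\ref{sec:intersection-property}), it suffices to prove the following local step: if $p(L)\neq p_\text{min}$ there is a sequence of bulk moves followed by one boundary move that strictly decreases $\Phi$.

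To exhibit this $\Phi$-decreasing move I would argue as follows. Because $\Phi(p(L))$ exceeds its minimum, some side of the rectangle — say the top — carries an arc that is not a nearest-neighbour cap, equivalently a pair of adjacent top legs $c,c+1$ that are not matched to each other; by non-crossingness one of the two toggle images of the boundary move is the ``smoother'' one. Using bulk ergodicity I first bring $L$ to the canonical representative of $C_{p(L)}$, and then use further bulk moves to reroute the two paths incident to $c$ and $c+1$ so that they occupy, near the $2\times1$ tile over $c,c+1$, precisely the local configuration appearing in $\ket{\theta_2}$ in~(\ref{eq:expansionTopLeft}). The boundary move then converts it to the $\ket{\theta_1}$-type configuration, which (reading off the boundary matchings of $\ket{\theta_1},\ket{\theta_2}$) removes one unit of flow through the relevant cut and touches no arc outside that tile, so $\Phi$ drops by at least one. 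Iterating, $\Phi$ reaches its minimum, which by the uniqueness claim forces $p=p_\text{min}$; concatenating all the intermediate bulk and boundary moves yields the desired $\Sigma$ with $p(\Sigma(L))=p_\text{min}$, and composing with the bulk moves of Sec.~\ref{sec:intersection-property} inside $C_{p_\text{min}}$ then connects to the canonical pattern, as needed for the uniqueness of $\ket\psi$ in~(\ref{eq:PsiAsAPEPS}).

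An alternative, possibly cleaner, route is to process the four sides one at a time: a boundary move on the top side alters only the connectivity of legs reachable along the top, so one can ``fully smooth'' the top, then the bottom, then the two vertical sides, the treatment of each side mirroring the Dyck-path / canonical-loop-pattern construction of Appendix~\ref{app:catalanNumber}. Either way, I expect the main obstacle to be the local step: one must pin down exactly which $2\times1$ boundary loop configurations the move~(\ref{eq:randomWalk2}) interconverts, and then show that for \emph{every} non-minimal pattern the bulk moves can in fact maneuver its canonical representative into such a configuration at a place where the move lowers $\Phi$. The delicate point is the behaviour at the corners, where an arc may run from one side of the rectangle to an adjacent side and the naive ``cap the nearest neighbour'' recipe must be replaced by a diagonal corner cap; bounding the forced through-flow by $N$, exactly as in Appendix~\ref{sec:numberN}, is what guarantees the descent never gets stuck before reaching $p_\text{min}$.
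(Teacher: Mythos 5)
Your proposal has two genuine gaps, one in the termination criterion and one in the local step. First, the potential $\Phi(p)$ given by the total flow is \emph{not} minimised at $p_\text{min}$, let alone uniquely: a cap between adjacent legs at positions $2n-\tfrac12$ and $2n+\tfrac12$ contributes one unit of flow through the cut at $2n$ only when that cut is internal, and a corner arc joining the last leg of one side to the first leg of the adjacent side is a \emph{diagonal} tuple contributing no flow at all. Hence the ``shifted'' nearest-neighbour matching $(2,3),(4,5),\dots$ with four diagonal corner arcs has total flow $N-4$, strictly below $\Phi(p_\text{min})=N$ (and on a $2\times2$ patch the all-corner-arc matching has $\Phi=0$ versus $4$). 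A descent on this $\Phi$ therefore terminates at the wrong pattern. Relatedly, the boundary terms $h'$ sit only on the fixed dominos $(2n-1,2n)$, so the move~(\ref{eq:randomWalk2}) can only cap those specific pairs, not an arbitrary adjacent pair $(c,c+1)$; any potential must be adapted to this alignment.

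Second, and more seriously, your local step targets the wrong configuration. Writing the two tiles of a boundary domino as $(t_1,t_2)\in\{0,1\}^2$, the ground space of $h'$ forces the three configurations $(0,0),(1,1),(1,0)$ to appear with weights $1:1:2$ (these are the three terms of $\ket{\theta_1}$, the last being the capped one), while $(0,1)$ — the pass-through configuration $\ket{\theta_2}$ — is a \emph{separate} ground state and hence a fixed point of the boundary move. So the move cannot ``convert $\ket{\theta_2}$ to a $\ket{\theta_1}$-type configuration''; it is exactly the $(0,1)$ case that is hard. Escaping it requires a bulk move on the adjacent $2\times2$ plaquette from the bubble state $\ket{B}$ to $\ket{E_1}$ or $\ket{E_3}$ (whose top rows are $(0,0)$ and $(1,1)$), which in turn requires first importing a closed bubble or tadpole into that plaquette. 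The paper's proof supplies precisely this: after the earlier columns are capped, a column in the pass-through state either contains a bubble/tadpole that can be floated up, or else a counting argument (only $N_v-4$ of the $N_v$ through-paths can originate at the already-capped west boundary) shows that two of the apparent through-paths are a single path carrying a tadpole to the left of the column, which can be pulled in. You correctly identify this maneuver as the main obstacle, but it is not a technicality to be checked — it is the content of the proof, and without it (and with the potential as stated) the argument does not go through.
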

\begin{proof}
We are going to construct $\Sigma$ explicitly, starting from an arbitrary loop pattern $L$. We begin in the top left corner. Combining boundary moves on the first horizontal and vertical dominos and potentially a bulk move on the plaquette in the top left corner, we can transform the top left corner of $L$ into
\begin{equation}
\label{eq:fig2b}
\begin{aligned}
\input{tikzpictures/uniqueness06} 
\end{aligned}
\end{equation}
We proceed similarly for all other corners:
\begin{equation}
\label{eq:fig2c}
\begin{aligned}
 \input{tikzpictures/uniqueness07}
\end{aligned}
\end{equation}

Now we continue sequentially, column by column. If the top domino looks like $\input{tikzpictures/top00small}$ or $\input{tikzpictures/top11small}$, transform it into $\input{tikzpictures/top10small}$ using (\ref{eq:randomWalk2}). If it is in the $\input{tikzpictures/top01small}$-state, we will see now that the corresponding plaquette can be brought into the $\ket{B}$ state, after which the bubble is cut off and the boundary move is applicable again.

There are two scenarios:  In the first, the top $\input{tikzpictures/top01small}$-domino has a bubble or tadpole underneath it:
\begin{equation}
\begin{aligned}
\label{eq:uniqueStep1}
\input{tikzpictures/uniqueness08}
\end{aligned}
\end{equation}
In this case, the bubble can be moved up to the topmost plaquette using bulk moves. It can then be cut off to transform the top domino into $\input{tikzpictures/top10small}$. In the second scenario there is no bubble or tadpole in the column:
\begin{equation}
\begin{aligned}
\label{eq:uniqueStep2}
\input{tikzpictures/uniqueness09}
\end{aligned}
\end{equation}
Then, there must necessarily be $N_v$ paths passing through the column left to right. Only $N_v-4$ of them can originate from the west boundary, since $4+4k$ out of the $N_v+4k$ boundary points to the left of the $k$-th column are already connected with their nearest neighbours. Therefore, at least two pairs of paths must actually be a single path, which has a tadpole to the left of the column. This tadpole can be moved into the column upon which we recover situation 1, e.g.:
\begin{equation}
\begin{aligned}
\label{eq:uniqueStep3}
\input{tikzpictures/uniqueness10} \rightarrow \input{tikzpictures/uniqueness11}
\end{aligned}
\end{equation}
The bottom tile is transformed into $\input{tikzpictures/top01small}$ in the same manner to arrive at
\begin{equation}
\begin{aligned}
\label{eq:uniqueStep13}
\input{tikzpictures/uniqueness13}  
\end{aligned}
\end{equation}
After fixing the top and bottom dominos column by column, we apply the same procedure to the left and right boundary. Evidently, once a boundary domino is in the correct state (e.g., $\input{tikzpictures/top10small}$ for top dominos), it will never be touched again during this procedure, allowing us to sequentially bring the connectivity pattern into minimal form.
\end{proof}

\section{Dimension of the string-inserted subspace}
\label{app:strings}
Let $N_h$, $N_v$ be even and define
\begin{equation}
\begin{aligned}
\label{eq:app_stringInsertedPsi}
\ket{\psi \{U,V\}} := \, \input{tikzpictures/string_inserted2}
\end{aligned}
\end{equation}
where the boxes are $A$-tensors defined in eq. (\ref{eq:definition}) and periodic boundary conditions are enforced such that the tensor network lives on an $N_h \times N_v$-torus (we have dropped the subscript indicating the system size for better readability in the following). Note that one has to complex conjugate every other unitary in order for one to be able to pull the strings through a row or column of tensors respectively. If $U$ and $V$ commute, these strings can be moved freely through the system and it follows that $H \ket{\psi \{U,V\}} = 0$ for the parent Hamiltonian defined in (\ref{eq:parent-ham-def}). The purpose of this section is to compute the dimension of this \textit{string-inserted} subspace of the ground state manifold:
\begin{align}
\mathcal{S}' := \text{span}\{ \ket{\psi \{U,V \}} | [U,V] = 0, U,V \in SU(2) \}
\end{align}
We will show, that 
\begin{align}
\text{dim}\mathcal{S}' = \frac{(N_h + 1)(N_v+1) + 1}{2}
\end{align}
To this end, it is useful to make the following definition:

\begin{definition} For a tuple of $(j,k) \in \mathbb{Z}^2$, define
\begin{align}
g(j,k) &= \begin{cases} gcd(j,|k|) &\mbox{if } j,k \neq 0 \\
j &\mbox{if } k = 0, j>0 \\
|k| &\mbox{if } j = 0, |k|>0 \\
1 &\mbox{if k=j=0} 
\end{cases}
\end{align}
\end{definition}

We make the following observations:
\begin{itemize}
\item The winding number of a non-trivial loop in, say, the horizontal direction is equivalent to the difference of how many times that loop crosses the $U$-subset of the right boundary vs. how many times it crosses the $\bar{U}$-subset of the right boundary. These are the odd and even points on the boundary, respectively, as shown in (\ref{eq:app_stringInsertedPsi}). An equivalent statement holds for non-trivial winding in the vertical direction.
\item If in a given loop pattern $L$ there is a loop winding non-trivially around the torus $j$ times in the horizontal direction and $k$ times in the vertical direction, then \textit{all} non-trivial loops have winding number $(j,k)$ or $(-j,-k)$ (in fact, half of the loops will have winding number $(j,k)$ and the other half $(-j,-k)$). Therefore, we may denote the winding sector of such a loop pattern by $W(L) = (j,k) $. To remove ambiguity, we enforce $j \geq 0$.
\item A loop cannot wind around the torus $(j,k)$ times if $g(j,k) \neq 1$. 
\item A loop pattern in a given winding sector $(j,k)$, can have $n_{NTL} \in \{2, 4, \dots, \min \{ \floor{N_h/j}, \floor{N_v/k} \} \}$ non-trivial loops. Therefore we redefine the winding sector of a loop pattern that has $n_{NTL}/2$ loops wrapping around the torus $(j,k)$ times and $n_{NTL}/2$ loops wrapping around the torus $(-j,-k)$ times as $W(L) = (j \times n_{NTL}/2,k \times n_{NTL}/2)$.
\end{itemize}

We are now ready for some helpful definitions and shorthand notations:
\begin{definition}
\begin{align}
\widetilde{N_h} &:= N_h + 1 \\
\widetilde{N_v} &:= N_v + 1
\end{align}
\end{definition}

\begin{definition}
\begin{align}
\label{def_jk}
\ket{j,k} := \sum_{\substack{L \text{ s.t. } \\ W(L) = (j,k)}} 2^{n_L} \ket{L}
\end{align}
By the orthogonality of the physical basis states, different $\ket{j,k}$ are clearly orthogonal:
\begin{align}
\braket{j,k | j',k'} := \delta_{jj',kk'} || \ket{j,k} ||^2
\end{align}
and by the above observations $|| \ket{j,k} ||^2 \neq 0$ for all $(j,k) \in I$.
\end{definition}

\begin{definition}
\begin{align}
D_\phi = \begin{pmatrix} e^{i\phi} & 0 \\ 0 &  e^{-i\phi}  \end{pmatrix}
\end{align}
\begin{align}
W_{N_v}(\phi) :&= \bigotimes_{i=1}^{N_v/2} D_\phi \otimes \bar{D}_\phi \\
\tilde{W}_{N_v}^l :&= W_{N_v}\left(\frac{\pi l}{\widetilde{N_v}}\right)
\end{align}
\end{definition}

\begin{definition}
\begin{align}
\ket{\psi_{\phi, \theta}} &:= \ket{\psi \{D_\phi, D_\theta \}} \\
\ket{\tilde{\psi}_{l, m}} &:= \ket{\psi_{\phi = \frac{\pi l}{\widetilde{N_v}}, \theta = \frac{\pi m}{\widetilde{N_h}}}}
\end{align}
\end{definition}

\begin{definition}
\begin{align}
I = \Bigg \{ (x,y) | & x = 0, \dots, N_v/2, \nonumber \\ & y= \begin{cases} 0, \dots N_h/2 &\mbox{if } x = 0 \\ 
-N_h/2, \dots, N_h/2 & \mbox{if } x \neq 0 \end{cases} \Bigg \} 
\end{align}
Counting the number of elements in $I$ reveals that
\begin{align}
\label{def:I}
|I| = \frac{(N_h + 1)(N_v+1) + 1}{2}
\end{align}
\end{definition}

\begin{definition}
\label{def_phikxky}
\begin{align}
\ket{\phi_{k_x, k_y}} := & \frac{1}{\widetilde{N_v}\widetilde{N_h}} \times \nonumber \\
 \sum_{l=0}^{N_v} &\sum_{m=0}^{N_h} e^{2 \pi i \left(\frac{k_x l}{\widetilde{N_v}} + \frac{k_ym}{\widetilde{N_h}} \right)} \ket{\tilde{\psi}_{l, m}}
\end{align}
\end{definition}

\begin{definition}
\begin{align}
\label{def_Mjklm}
M_{(jk),(lm)} := \left[2 \cos \left( \frac{\pi j l}{g(j,k)\widetilde{N_v}} + \right.\right. \nonumber \\
\left. \left. \frac{\pi k m}{g(j,k)\widetilde{N_h}} \right) \right]^{2g(j,k)}
\end{align}
for any set of integers $j,k,l$ and $m$.
\end{definition}

\begin{claim}
\begin{align}
\label{claim1}
\dim \mathcal{S}' \leq |I|
\end{align}
\end{claim}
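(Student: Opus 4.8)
The plan is to show that $\mathcal S'$ is contained in the span of the finitely many vectors $\ket{j,k}$ of \eqref{def_jk}, which are indexed by the set $I$ of \eqref{def:I}, so that $\dim\mathcal S'\le|I|$ follows at once. The argument has three steps.

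\emph{Reduction to diagonal twists.} Given commuting $U,V\in\mathrm{SU}(2)$, simultaneously diagonalize them as $U=WD_\phi W^\dagger$, $V=WD_\theta W^\dagger$ with $W$ unitary and $D_\phi,D_\theta$ diagonal. Conjugating every $\tilde A$-tensor in \eqref{eq:app_stringInsertedPsi} by $W$ inserts $W^\dagger W=\mathds 1$ on every virtual bond; since on the torus all bonds are internal, the bulk contraction is untouched, while the inserted string operators are transformed into $D_\phi^{\otimes N_v}$ and $D_\theta^{\otimes N_h}$. This is precisely the invariance already invoked in the main text, and it gives $\mathcal S'=\mathrm{span}\{\ket{\psi_{\phi,\theta}}:\phi,\theta\in\mathbb R\}$.

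\emph{Expansion in winding sectors.} I would expand $\ket{\psi_{\phi,\theta}}$ in the loop picture, $\ket{\psi_{\phi,\theta}}=\sum_L 2^{n_L}c_L(\phi,\theta)\ket L$, and show the coefficient $c_L$ depends on $L$ only through its winding sector $W(L)$. Moving the vertical string $D_\phi^{\otimes N_v}$ around the torus row by row as in \eqref{eq:movingStrings} (and likewise $D_\theta^{\otimes N_h}$ horizontally), a diagonal operator acts on the $\ket{\phi^+}$-bonds composing the loops of $L$ by pure phases, which telescope along each closed loop; a contractible loop and the ambient contraction contribute $1$, while a non-contractible loop of primitive homology class $\pm(j_0,k_0)$ contributes a trace-type factor $2\cos(j_0\phi+k_0\theta)$ (even in the sign, which is also why conjugating by the Pauli $X$, i.e.\ $\phi\to-\phi,\theta\to-\theta$, is a redundancy). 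A pattern in sector $(j,k)=(g j_0,g k_0)$ with $g=g(j,k)$ carries exactly $2g$ non-contractible loops, so $c_L(\phi,\theta)=[\,2\cos(j_0\phi+k_0\theta)\,]^{2g}$, a function of $W(L)$ alone — in fact it is the entry $M_{(jk),(lm)}$ of \eqref{def_Mjklm} at the Fourier angles $\phi=\pi l/\widetilde{N_v}$, $\theta=\pi m/\widetilde{N_h}$, which is what will be needed for the converse inequality. Grouping terms,
\[
\ket{\psi_{\phi,\theta}}=\sum_{(j,k)\in I}c_{(j,k)}(\phi,\theta)\,\ket{j,k},
\]
the sum running over $I$ because the winding sectors realized by loop patterns on the $N_h\times N_v$-torus are labelled by $I$: the identification $(j_0,k_0)\sim(-j_0,-k_0)$ is exactly the observation that a pattern carries equally many loops of each orientation, and the coprimality and size constraints on $g,j_0,k_0$ account for the ranges appearing in \eqref{def:I}.

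\emph{Conclusion.} Every $\ket{\psi_{\phi,\theta}}$ lies in $\mathrm{span}\{\ket{j,k}:(j,k)\in I\}$, hence so does $\mathcal S'$, and therefore $\dim\mathcal S'\le|I|=\tfrac{(N_h+1)(N_v+1)+1}{2}$. The main obstacle is the middle step: one must verify carefully that diagonal string insertion really collapses to a single scalar on each winding sector — the telescoping of phases along non-contractible loops and the matching of loop orientations — and, as a matter of bookkeeping, that no winding sector outside those enumerated by $I$ occurs, so that the spanning set on the right has at most $|I|$ elements and not more. The complementary bound $\dim\mathcal S'\ge|I|$, which makes this tight, is a separate matter and would be handled later using the Fourier-transformed states $\ket{\phi_{k_x,k_y}}$ of \eqref{def_phikxky} and invertibility of $M$.
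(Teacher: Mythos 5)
Your argument is correct, but it takes a genuinely different route from the paper's. The paper proves the upper bound entirely at the level of the inserted boundary operators: it first shows that $\mathrm{span}\{W_{N_v}(\phi)\otimes W_{N_h}(\theta)\}$ is $\widetilde{N_v}\widetilde{N_h}$-dimensional and is already exhausted by the discrete Fourier angles (a Vandermonde/discrete-Fourier-matrix argument), so that $\mathcal S'=\mathrm{span}\{\ket{\tilde\psi_{l,m}}\}_{l,m}$, and then uses the gauge identifications $\ket{\tilde\psi_{l,m}}=\ket{\tilde\psi_{-l,-m}}=\ket{\tilde\psi_{l\pm\widetilde{N_v},m}}=\ket{\tilde\psi_{l,m\pm\widetilde{N_h}}}$ (conjugation by $iX$ and periodicity) to cut the $(N_v+1)(N_h+1)$ states down to the index set $I$. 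You instead work at the level of the states themselves: you expand $\ket{\psi_{\phi,\theta}}$ in the loop basis, argue that the diagonal strings contribute a scalar $[2\cos(j_0\phi+k_0\theta)]^{2g}$ depending only on the winding sector, and conclude that every string-inserted state lies in $\mathrm{span}\{\ket{j,k}\}_{(j,k)\in I}$, which has at most $|I|$ elements. This is more economical — you bypass the Fourier discretization and the symmetry identifications entirely, since an upper bound only needs a spanning set, not linear independence — and it makes the mechanism (winding sectors) transparent. What it buys the paper to do it their way is that the operator-level argument requires no justification of the loop-picture expansion; your route leans on exactly the expansion $\ket{\tilde\psi_{l,m}}=\sum_{(jk)\in I}M_{(jk),(lm)}\ket{j,k}$ that the paper states as Eq.~(\ref{ket_lm_as_jk}) and only uses (without detailed proof, via its preliminary ``observations'' on winding numbers) in the proof of the lower bound, Claim~2. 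So you should make sure the two supporting facts you flag yourself — that the phases telescope along each non-contractible loop to give a trace factor determined by the primitive homology class, and that the realized winding sectors, after orienting so the first nonzero coordinate is positive, form a subset of $I$ — are established with the same care the paper devotes to its observations at the start of Appendix~\ref{app:strings}; with those in hand your proof is complete and, combined with Claim~2, would in fact let the paper reuse the same expansion for both inequalities.
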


\begin{claim}
\begin{align}
\label{claim2}
\dim \mathcal{S}' \geq |I|
\end{align}
\end{claim}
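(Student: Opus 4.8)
The two claims together will establish the identity $\dim\mathcal{S}'=|I|=\tfrac{(N_h+1)(N_v+1)+1}{2}$, so the plan is to prove the upper bound (\ref{claim1}) and the lower bound (\ref{claim2}) separately, each via a convenient spanning/independence argument.

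\textbf{Upper bound.} The plan is to exhibit a spanning set of size $|I|$ for $\mathcal{S}'$. As noted in the main text, any $U,V$ can be simultaneously diagonalized by conjugating every $\tilde A$ with a fixed unitary, which leaves $\ket{\psi\{U,V\}}$ invariant; hence $\mathcal{S}'=\mathrm{span}\{\ket{\psi_{\phi,\theta}}\mid\phi,\theta\in[0,2\pi)\}$. First I would expand $\ket{\psi_{\phi,\theta}}$ in the loop basis: pulling the diagonal strings through yields a phase $e^{i(j\phi+k\theta)}$ (or its conjugate) on each loop pattern according to its winding sector, so that $\ket{\psi_{\phi,\theta}}=\sum_{(j,k)} (e^{i(j\phi+k\theta)}+e^{-i(j\phi+k\theta)})\,\ket{j,k}$ up to the redundancy $(j,k)\leftrightarrow(-j,-k)$, i.e.\ a real combination $\sum 2\cos(j\phi+k\theta)\ket{j,k}$ ranging over $(j,k)\in I$ (using the constraint $g(j,k)=1$ and the normalization of the non-trivial-loop count from Def.~\ref{def_jk}ff.). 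Since $j\le N_v/2$ and $|k|\le N_h/2$, the functions $(\phi,\theta)\mapsto\cos(j\phi+k\theta)$ for $(j,k)\in I$ span a space of dimension $|I|$ (they are, up to the even/odd folding, the real trigonometric characters on a torus with $|I|$ distinct frequencies), so $\{\ket{\psi_{\phi,\theta}}\}$ lies in the $|I|$-dimensional span of $\{\ket{j,k}\}_{(j,k)\in I}$. This gives (\ref{claim1}).

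\textbf{Lower bound.} The plan is to show the $|I|$ vectors $\ket{\phi_{k_x,k_y}}$ of Def.~\ref{def_phikxky}, obtained as discrete Fourier transforms of the sampled states $\ket{\tilde\psi_{l,m}}$ at Fourier angles, are linearly independent; since each $\ket{\tilde\psi_{l,m}}\in\mathcal{S}'$, so is each $\ket{\phi_{k_x,k_y}}$, and independence of $|I|$ of them forces $\dim\mathcal{S}'\ge|I|$. Concretely I would compute the overlap matrix: expanding $\ket{\tilde\psi_{l,m}}$ in the $\ket{j,k}$ basis and using orthogonality $\braket{j,k|j',k'}=\delta\,\|\ket{j,k}\|^2$, the inner products $\braket{\phi_{k_x,k_y}|\phi_{k_x',k_y'}}$ reduce to a sum over $(j,k)\in I$ of Fourier-coefficient products weighted by $\|\ket{j,k}\|^2$ and by the matrix $M_{(jk),(lm)}$ of Def.~\ref{def_Mjklm} (the $[2\cos(\cdots)]^{2g(j,k)}$ factor records that a pattern with a winding-$(j,k)$ loop picks up $e^{\pm i(\cdots)}$ on each of its $2g(j,k)$ "strands" crossing a cut). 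The key is that, at the chosen Fourier angles $\phi_l=\pi l/\widetilde{N_v}$, $\theta_m=\pi m/\widetilde{N_h}$, the matrix $M$ is, after the appropriate folding, essentially a product of Chebyshev-type / discrete-cosine matrices whose invertibility is standard (this is the same tridiagonal-Toeplitz eigenstructure used in Appendix~\ref{sec:numberN}). Invertibility of $M$, together with $\|\ket{j,k}\|^2\ne0$, makes the Gram matrix of the $\ket{\phi_{k_x,k_y}}$ nonsingular, giving (\ref{claim2}).

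\textbf{Main obstacle.} I expect the hard part to be the bookkeeping in the lower bound: correctly identifying, for each loop pattern, how the diagonal string insertions act strand-by-strand so that the per-sector weight is exactly $[2\cos(\cdots)]^{2g(j,k)}$ with the right argument, and then pinning down enough structure of the resulting matrix $M_{(jk),(lm)}$ to conclude it is invertible at the Fourier angles. In particular one must check the $g(j,k)>1$ sectors (several loops of the same primitive winding) contribute powers of cosines rather than plain cosines, and verify that the chosen angle grid of size $\widetilde{N_v}\times\widetilde{N_h}$ is fine enough to resolve all frequencies in $I$ without aliasing after the $(\phi,\theta)\to(-\phi,-\theta)$ identification; the $X$-conjugation redundancy is exactly what reduces the naive $(N_h+1)(N_v+1)$ count to $|I|$, and keeping that single fixed point $\phi=\theta=0$ correctly accounted for is where an off-by-one is most likely to creep in.
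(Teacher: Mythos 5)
Your setup for the lower bound coincides with the paper's: you pass to the Fourier-averaged states $\ket{\phi_{k_x,k_y}}$ of (\ref{def_phikxky}), observe they lie in $\mathcal S'$ by (\ref{T_supset}), and aim to show the $|I|$ of them are linearly independent by expanding in the mutually orthogonal winding-sector vectors $\ket{j,k}$ with weights $M_{(jk),(lm)}$ from (\ref{def_Mjklm}). The gap is in the decisive step. You reduce independence to ``invertibility of $M$'' and assert this is standard because $M$ is ``essentially a product of Chebyshev-type / discrete-cosine matrices.'' It is not: the entry $[2\cos(\cdot)]^{2g(j,k)}$ has a \emph{row-dependent} exponent, so the row indexed by $(j,k)$ is a degree-$2g(j,k)$ trigonometric polynomial in $(l,m)$ rather than a single harmonic, and $M$ is an $|I|\times\widetilde{N_v}\widetilde{N_h}$ rectangular array, so ``$M$ invertible'' is not even the right-shaped statement. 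What must have full rank is the composite coefficient matrix $C_{(jk),(k_xk_y)}$ obtained after the Fourier sums over $(l,m)$, and no off-the-shelf DCT/Vandermonde or tridiagonal-Toeplitz argument (the one in Appendix~\ref{sec:numberN} concerns a different matrix) applies to it.

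The paper closes this step by a concrete computation that is absent from your proposal: (i) expand $[2\cos(\cdot)]^{2g}$ binomially so each row becomes $2g+1$ harmonics indexed by $a$; (ii) carry out the sums over $(l,m)$ to get the constraints $\tfrac{j}{g}(a-g)-k_x\in\widetilde{N_v}\mathbb Z$ and $\tfrac{k}{g}(a-g)-k_y\in\widetilde{N_h}\mathbb Z$, and prove the explicit anti-aliasing inequality showing only the $n=0$ branch is compatible with $0\le a\le 2g$ (you name this as a worry but do not resolve it); (iii) conclude $\braket{j,k|\phi_{k_x,k_y}}=\binom{2g}{k_xg/j+g}$ when $(k_x,k_y)$ is collinear with $(j,k)$ and zero otherwise, so the coefficient matrix is block-diagonal over rays of fixed reduced slope $p/q$; and (iv) within each ray, note $\braket{p,q|\phi_{bp,bq}}=\binom{2}{b+1}$ vanishes for all admissible $b$ except $b=1$, remove $\ket{\phi_{p,q}}$, and iterate this triangular elimination. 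Steps (ii)--(iv) are the substance of the proof of (\ref{claim2}); since your proposed substitute for them rests on a structural property $M$ does not have, the argument as written does not close.
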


Together, (\ref{claim1}), (\ref{claim2}) and (\ref{def:I}) entail that
\begin{align}
\dim \mathcal{S}' = \frac{(N_h + 1)(N_v+1) + 1}{2}
\end{align}

\begin{proof}{(of (\ref{claim1}))}
Starting from the definition of $\mathcal{S}'$, we can first restrict the unitaries $U$ and $V$ to be diagonal, i.e. 
\begin{align}
\mathcal{S}' = \text{span} \{ \ket{\psi_{\phi, \theta}} | \phi, \theta \in [0, 2\pi] \}
\end{align}
This is because any state that is generated by non-diagonal $U$ and $V$ is related to a state with $U$ and $V$ diagonal by conjugating the whole network with $S$, where $S$ is the unitary that simultaneously diagonalises $U$ and $V$. Because of the fundamental symmetry of the PEPS tensor, this conjugation leaves the state invariant.

As a first step, we are going to show that
\begin{align}
\label{reduce_to_lm}
\mathcal{S}' = \text{span}\{  \ket{\tilde{\psi}_{l, m}}\}_{\substack{ l = 0, \dots N_v \\ m=0, \dots N_h }}
\end{align}
Because $\ket{\psi_{\phi, \theta}}$ depends linearly on $W_{N_v}(\phi) \otimes W_{N_h}(\theta)$, it is sufficient to show that
\begin{align}
&\text{span} \{ W_{N_v}(\phi) \otimes W_{N_h}(\theta) \} = \nonumber \\
& \text{span} \left \{ \tilde{W}_{N_v}^l \otimes \tilde{W}_{N_h}^m  \right \}_{\substack{ l = 0, \dots N_v \\ m=0, \dots N_h }}
\end{align}
Clearly,
\begin{align}
\label{eq:inclusion}
&\text{span} \{ W_{N_v}(\phi) \otimes W_{N_h}(\theta) \} \supseteq \nonumber \\
& \text{span} \left \{ \tilde{W}_{N_v}^l \otimes \tilde{W}_{N_h}^m \right \}_{\substack{ l = 0, \dots N_v \\ m=0, \dots N_h }}
\end{align}
and we will prove the reverse inclusion by showing that
\begin{align}
\label{eq:inequalities}
\widetilde{N_v}\widetilde{N_h} & \geq \dim \text{span} \{ W_{N_v}(\phi) \otimes W_{N_h}(\theta) \} \nonumber \\
& \geq \dim \text{span} \left \{ \tilde{W}_{N_v}^l \otimes \tilde{W}_{N_h}^m \right \}_{\substack{ l = 0, \dots N_v \\ m=0, \dots N_h }}
  \nonumber \\ 
& \geq \widetilde{N_v}\widetilde{N_h}
\end{align}

The first inequality of (\ref{eq:inequalities}) follows by expanding the operator
\begin{align}
W_{N_v}(\phi) &= e^{iN_v \phi} \mathds{1}_{\binom{N_v}{0}} \oplus e^{i (N_v - 2) \phi} \mathds{1}_{\binom{N_v}{1}} \oplus \nonumber \\
 & \dots \oplus e^{-iN_v \phi} \mathds{1}_{\binom{N_v}{N_v}},
\end{align}
which, for general values of $\phi$ spans an $\widetilde{N_v}$-dimensional space.

The second inequality in (\ref{eq:inequalities}) is a trivial conclusion of (\ref{eq:inclusion}).

To see the validity of the third inequality, consider the matrix whose columns are made up of the distinct diagonal entries of $\tilde{W}_{N_v}^l$ for $l=\frac{N_v}{2}, \frac{N_v}{2}-1, \dots, 0, N_v, N_v-1, \dots, \frac{N_v}{2}+1$:
\begin{align}
F_{N_v} &= \begin{pmatrix}  \vert & \vert &  & \\ \text{diag} (\tilde{W}_{N_v}^{N_v/2}) & \text{diag}(\tilde{W}_{N_v}^{N_v/2-1}) & \dots  \\ \vert & \vert &  &  \end{pmatrix} \nonumber \\
&= \begin{pmatrix} 1 & 1 & 1 & 1 & \\ 1 &  \omega & \omega^2 & \omega^3 &  \\  1 &  \omega^2 & \omega^4 & \omega^6 & \dots \\  1 &  \omega^3 & \omega^6 & \omega^9 & \\ & & \vdots & & \ddots  \end{pmatrix} 
\end{align}
which is simply $\widetilde{N_v}$ times the $\widetilde{N_v} \times \widetilde{N_v}$ discrete Fourier matrix (we have set $\omega = \exp(2 \pi i /\widetilde{N_v}$)), and therefore has full rank equal to $\widetilde{N_v}$.

Applying these arguments to both tensor factors individually yields (\ref{eq:inequalities}).

Finally, we will prove that
\begin{align}
\label{reduce_to_I}
\text{span}\{  \ket{\tilde{\psi}_{l, m}}\}_{\substack{ l = 0, \dots N_v \\ m=0, \dots N_h }} = \text{span}\{  \ket{\tilde{\psi}_{l, m}}\}_{(l,m) \in I }
\end{align}
by showing that for each $(l,m) \notin I$, there exists an $(l',m') \in I$ such that $\ket{\tilde{\psi}_{l, m}} = \ket{\tilde{\psi}_{l', m'}}$.
The key observation is that 
\begin{align}
\ket{\psi\{U,V\}} &= \ket{\psi \{XUX^\dagger,XVX^\dagger\}} \\
\ket{\psi\{U,V\}} &= \ket{\psi \{-U,V\}} \\
\ket{\psi\{U,V\}} &= \ket{\psi \{U,-V\}}
\end{align}
which follows from the fact that conjugating the whole tensor network with $iX \in SU(2)$ leaves the state invariant and the numbers of $U$s and $V$s are both even. Inserting $U = \text{diag}(\exp(\pi i l/\widetilde{N_v}) , \exp(-\pi i l/\widetilde{N_v}))$ and $V = \text{diag}(\exp(\pi i m/\widetilde{N_h}) , \exp(-\pi i m/\widetilde{N_h}))$, we obtain
\begin{align}
\label{eq_relationslm1}
\ket{\tilde{\psi}_{l, m}} &= \ket{\tilde{\psi}_{-l, -m}}  \\
\ket{\tilde{\psi}_{l, m}} &= \ket{\tilde{\psi}_{l \pm \widetilde{N_v}, m}}  \\
\ket{\tilde{\psi}_{l, m}} &= \ket{\tilde{\psi}_{l, m \pm \widetilde{N_h}}}
\label{eq_relationslm3}
\end{align}
Using (\ref{eq_relationslm1}) - (\ref{eq_relationslm3}), for each $(l,m) \in [0,\dots N_v] \times [0, \dots N_h]$ we can now find an $(l',m') \in I$ such that $\ket{\tilde{\psi}_{l, m}} = \ket{\tilde{\psi}_{l',m'}}$
which imply (\ref{reduce_to_I}) and, together with (\ref{reduce_to_lm}) show that
\begin{align}
\dim \mathcal{S}' \leq |I|
\end{align}
\end{proof}

\begin{proof}{(of (\ref{claim2}))}
Because of (\ref{reduce_to_lm}) and the $\ket{\phi_{k_x, k_y}}$ being linear combinations of the $\ket{\tilde{\psi}_{l, m}}$ via definition \ref{def_phikxky}, it is clear that
\begin{align}
\label{T_supset}
\mathcal{S}' \supseteq \text{span} \{ \ket{\phi_{k_x, k_y}} \}_{(k_x, k_y) \in I}
\end{align}
Also, from the observations made in the beginning of this section and (\ref{def_jk}) and (\ref{def_Mjklm}), we see that
\begin{align}
\label{ket_lm_as_jk}
\ket{\tilde{\psi}_{l, m}} = \sum_{(jk) \in I} M_{(jk),(lm)} \ket{j,k}
\end{align}
The matrix elements of $M$ can be simplified using the binomial theorem. For better readability, we are going to suppress the argument of $g = g(j,k)$.
\begin{align}
\label{eq:matrix_massaged2}
M_{(jk),(lm)} &= \left[ 2 \cos\left(\frac{jl\pi}{\widetilde{N_v}g} + \frac{km\pi}{\widetilde{N_h}g}\right) \right]^{2g} \nonumber \\
&= \left[ e^{\frac{\pi i}{g}\left(\frac{jl}{\widetilde{N_v}} + \frac{km}{\widetilde{N_h}}\right)} + e^{-\frac{\pi i}{g}\left(\frac{jl}{\widetilde{N_v}} + \frac{km}{\widetilde{N_h}}\right)} \right]^{2g} \nonumber \\
&= \sum_{a=0}^{2g} e^{\frac{2 \pi i}{g}\left(\frac{jl}{\widetilde{N_v}} + \frac{km}{\widetilde{N_h}}\right)(a-g)} {2g \choose a}
\end{align}

Plugging (\ref{ket_lm_as_jk}) and (\ref{eq:matrix_massaged2}) into definition \ref{def_phikxky} yields
\begin{align}
\ket{\phi_{(k_x, k_y)}} &= \sum_{(jk)\in I} \, \sum_{a=0}^{2g} {2g \choose a} \ket{j,k} \nonumber \\
&\times \underbrace{\frac{1}{\widetilde{N_v}} \sum_{l=0}^{N_v} \left[ e^{\frac{2 \pi i}{\widetilde{N_v}} (\frac{j}{g}(a-g) - k_x )} \right]^l}_{\delta_{\frac{j}{g}(a-g)-k_x \in \widetilde{N_v} \mathbb{Z}}} \nonumber \\
&\times \underbrace{\frac{1}{\widetilde{N_h}} \sum_{m=0}^{N_h} \left[ e^{\frac{2 \pi i}{\widetilde{N_h}} (\frac{k}{g}(a-g) - k_y )} \right]^m}_{\delta_{\frac{k}{g}(a-g)-k_y \in \widetilde{N_h} \mathbb{Z}}} 
\label{readoff00}
\end{align}

In principle, the constraints only enforce e.g
\begin{align}
\label{eq_constraint}
\frac{k}{g}(a-g)-k_y = n\widetilde{N_h} 
\end{align}
for $n \in \mathbb{Z}$. However, we will now show that if $|n| \geq 1$, then it follows that $|a-g| > g$ which entails that either $a<0$ or $a>2g$, in both cases the summation on $a$ will be empty. Rearranging (\ref{eq_constraint}) and taking the absolute value yields
\begin{align}
|a-g| &= \frac{|n \widetilde{N_h} + k_y|}{|k|}g \nonumber \\
&\geq \frac{|n| \widetilde{N_h} - |k_y|}{|k|}g \nonumber \\
&> \widetilde{N_h} \frac{|n| -1/2}{|k|}g \nonumber \\
&> \frac{\widetilde{N_h}}{2} \frac{2}{\widetilde{N_h}}g \nonumber \\
&=g, \\
\end{align}
where we have used that $|a+b| > |a| - |b|$, $|n| \geq 1$, $|k_y| < N_h/2$ and $|k| < N_h/2$. This argument can be carried out for the constraints originating from both the summation over $l$ and $m$, leaving us with:
\begin{align}
\label{eq_finalconstraint}
\frac{j}{g}(a-g)&=k_x \\
\frac{k}{g}(a-g)&=k_y
\end{align}
These equations mean that
\begin{align}
\braket{j,k | \phi_{(k_x, k_y)}} = \begin{cases} {2g \choose k_x g/j + g} &\mbox{if } k_x/k_y = j/k \\ 
0& \mbox{otherwise}  \end{cases}
\end{align}
In particular, by orthogonality of the $\ket{j,k}$, sectors with different $k_x/k_y$ are mutually orthogonal. As a final step, we will investigate the sector that is spanned by the vectors
\begin{align}
\{ \ket{\phi_{(k_x, k_y)}} \}_{\substack{(k_x, k_y)\in I \\ k_x/k_y = p/q}}
\end{align}
for a fixed, completely reduced fraction $p/q$.  The vectors in this set have the form $\ket{\phi_{(p,q)}}, \ket{\phi_{(2p,2q)}}, \dots$. Since
\begin{align}
\braket{p,q | \phi_{(bp, bq)}} = {2 \choose b + 1},
\end{align}
only $\ket{\phi_{(1p, 1q)}}$ has non-zero overlap with $\ket{p,q}$. Therefore, $\ket{\phi_{(1p, 1q)}}$ must necessarily be linearly independent from all other vectors in that sector. We can therefore remove $\ket{\phi_{(1p, 1q)}}$ from $\{ \ket{\phi_{(k_x, k_y)}} \}_{(k_x, k_y)\in I \, k_x/k_y = p/q}$ and check the remaining basis vectors for linear independence. Indeed we can iterate this procedure to show that in the remaining set, there exists exactly one vector that has non-zero overlap with $\ket{bp,bq}$, which is $\ket{\phi_{(bp, bq)}}$. Therefore,
\begin{align}
\dim \text{span} \{ \ket{\phi_{(k_x, k_y)}} \}_{(k_x, k_y)\in I} = |I|
\end{align}
and by equation (\ref{T_supset}), it follows that
\begin{align}
\dim \mathcal{S}' \geq |I|
\end{align}

\end{proof}

\end{document}